\newtheorem{lem}{Lemma}
\newtheorem{prop}{Proposition}
\newtheorem{rmk}{Remark}
\theoremstyle{plain}
\definecolor{orange}{RGB}{255,107,0}
\def\black{\color{black}}
\begin{document}
	%
	\title{A Penalized Inequality-Constrained Approach for Robust Beamforming with DoF Limitation}

	\author{Wenqiang Pu, Jinjun Xiao, Tao Zhang,  Zhi-Quan Luo
		\thanks{Part of this work is presented in IEEE WASPAA, Oct. 2017~\cite{Pu2017penalized}, IEEE ICASSP, April 2018~\cite{Xiao2018Evaluation}, and IEEE SAM, June 2020~\cite{Conformal2020}. This work is partially supported by a research gift from Starkey Hearing Technologies. The work of Wenqiang Pu is supported by the National Natural Science Foundation of China (No. 62101350). The work of Z.-Q. Luo is supported by the National Natural Science Foundation of China (No. 61731018) and the Guangdong Provincial Key Laboratory of Big Data Computing.}
		\thanks{J. Xiao and T. Zhang are with Starkey Hearing Technologies, Minneapolis, MN 55455, USA. Email: $\{$jinjun$\_$xiao,tao$\_$zhang$\}$@starkey.com}
		\thanks{W. Pu and Z.-Q. Luo are with Shenzhen Research Institute of Big Data, The Chinese University of Hong Kong, Shenzhen, 518172, China. E-mail: $\{$wenqiangpu,luozq$\}$@cuhk.edu.cn}}
	
	\maketitle
	
	\begin{abstract}
		A well-known challenge in beamforming is how to optimally utilize the degrees of freedom (DoF) of the array to design a robust beamformer, especially when the array DoF is limited. In this paper, we leverage the tool of constrained convex optimization and propose a penalized inequality-constrained minimum variance (P-ICMV) beamformer to address this challenge. Specifically, a well-targeted objective function and inequality constraints are proposed to achieve the design goals. By penalizing the maximum gain of the beamformer at any interfering directions, the total interference power can be efficiently mitigated with limited DoF. 
		Multiple robust constraints on the target protection and interference suppression can be introduced to increase the robustness of the beamformer against steering vector mismatch. By integrating the noise reduction, interference suppression, and target protection, the proposed formulation can efficiently obtain a robust beamformer design while optimally trading off  various design goals. $\ $To numerically solve this problem, we formulate the P-ICMV beamformer design as a convex second-order cone program (SOCP) and propose a low complexity iterative algorithm based on the alternating direction method of multipliers (ADMM). Three applications are simulated to demonstrate the effectiveness of the proposed beamformer.
	\end{abstract}
	
	\begin{IEEEkeywords}
		Array signal processing, robust beamforming, degrees of freedom, convex optimization
	\end{IEEEkeywords}

	%
	\IEEEpeerreviewmaketitle

	\section{Introduction}\label{sec:intro}
	
	Beamforming is a fundamental technique in array signal processing, which exploits the {\black spatial diversity} to enhance the desired signal and {\black suppresses} undesired interferences and noise. 
	The beamforming technique has been widely used in many multi-channel signal processing areas, {\black e.g.}, wireless communication \cite{tse2005fundamentals}, microphone array speech processing \cite{Doclo2015Magzine}, radar \cite{ho2013radar}, sonar \cite{chiang2005sonar}, medical imaging \cite{Synnevag2009Benefits}, etc. 
	The key procedure in beamforming is specifying the so called \textit{beamformer}, which serves as a vector of complex coefficients to linearly combine signals received by array elements. 
	In the past decades, various beamformer design criteria have been extensively studied, these criteria can be divided into two types: data-independent and data-dependent criteria. 
	The performance of data-independent beamformers~\cite{elko1996microphone,mabande2009design,Zhang2017ARC} is limited {\black since useful information of the signal environment is not exploited.} Instead, the data-dependent beamformers can deliver optimal performance due to their adaptivity to the signal environment and, as such, they are also referred as adaptive beamformers. One representative adaptive beamformer is the minimum variance distortionless response (MVDR) beamformer~\cite{capon1969high}, which has been widely used in {\black many applications due to the convenience of practical implementation.}
	
	However, the performance of the MVDR beamformer suffers a degradation due to the imperfect knowledge of {\black the environment}, {\black e.g.,} imprecise {steering vectors} (SV), finite number of snapshots, direction-of-arrival (DoA) errors, etc. 
	Generally speaking, these imperfections lead to two types of model uncertainties of the MVDR beamformer.
	One uncertainty is the mismatch of SV of the target signal, {\black which is caused by array calibration error and DoA error.} This {\black would} lead to inevitable distortion of the target signal. 
	The other uncertainty {\black is} the estimation error of the covariance matrix of interference plus noise. {\black It is caused by the finite number of signal samples, the presence of the target signal in the training samples, and the non-stationarity of signals.} Such uncertainty not only {\black leads to  performance degradation on interference suppression}, but also distorts the target signal if it is present in the training samples.
	Faced with the above two types of uncertainties, many robust beamformers were proposed and studied in the past decades, to mitigate the negative effects of one or both of the uncertainties. 
	
	A comprehensive review of the principles for robust adaptive beamforming technique {\black can be found} in~\cite{VOROBYOV20133264}. 
	Here after, we briefly review several representative robust beamforming techniques. 
	
	To handle the uncertainty of SV, the linearly constrained minimum variance (LCMV) beamformer~\cite{Buckley1987LCMV} enforces multiple distortionless constraints on possible DoAs of the target signal. Though the LCMV beamformer stabilizes the mainlobe response for the target, it does use up the DoF resource that could be otherwise used for suppressing interference and noise. When the SVs enforced in constraints contain DoA errors, a recent work~\cite{chakrabarty2017bayesian} proposes a Bayesian approach to jointly estimate DoAs and suppresses interference and noise. Instead of enforcing multiple equality constraints, uncertainty set based beamformers~\cite{Vorobyov2003worst,Li2003Capon,Lorenz2005Robust,Nai2011Iterative,Huang2018,liao2017robust} assume the mismatch of SV lies in a bounded spherical or ellipsoidal. {\black The mainlobe can be stabilized by enforcing a worst-case criterion~\cite{Vorobyov2003worst,Li2003Capon,Lorenz2005Robust} based constraint} or by iteratively estimating the SV from the uncertainty set~\cite{Nai2011Iterative}. Another way for mitigating the uncertainty of SV is the eigenspace based beamformers~\cite{Chang1992eigenspace,Feldman1996projection,HUANG20121758}, which modify the SV by projecting it onto the estimated signal-plus-interference subspace. Besides, {\black authors in~\cite{ruan2016robust}} propose an orthogonal Krylov subspace-based method to estimate the SV in a reduced-dimensional subspace. These subspace-based beamformers may suffer from noise corrosion when signal-to-noise ratio (SNR) is low. 
	
	To handle the uncertainty of the covariance matrix, several robust techniques are proposed by reconstructing the sample covariance matrix. The well known one is the diagonal loading (DL) based beamformer~\cite{Cox1987Robust,Carlson1988Covariance,Elnashar2006Diagonal}, which modifies the sample covariance matrix by adding a diagonal matrix. {\black Though the DL beamformer  only modifies the covariance matrix, it also enables to mitigate the impact of SV mismatch~\cite{Li2003Capon}}. However, its performance may degrade in high signal-to-interference plus noise ratio (SINR) situation if the target signal is present in training samples. By exploiting the a prior knowledge of the array manifold, several covariance matrix reconstruction based beamformers were proposed~\cite{Gu2012Reconstruction,Ruan2014Shrinkage,Huang2015Reconstruction,zhang2015interference}. These beamformers reconstruct the interference-plus-noise covariance matrix and estimate the SV of the target signal from presumed SVs. Note that the covariance matrix reconstruction procedure are based on integration operation over specified angle regions, which may require large computational cost especially for a large size array. {\black Authors in~\cite{yang2018high} propose a spiked random matrix based model to reconstruct it.} But this work does not consider the SV mismatch issue.

	In this paper, we consider the robust beamformer design and assume that the \textit{presumed} SVs for different directions are available. Also, we assume a rough DoA estimation for each source can be obtained by a suitable DoA estimation algorithm. These basic assumptions can be satisfied in many beamforming applications such as radar~\cite{ho2013radar} and microphone array~\cite{Doclo2015Magzine}. By exploiting these prior knowledge, we propose a penalized inequality-constraint minimum variance (P-ICMV) beamformer {\black formulation} and also develop an efficient optimization algorithm to {\black solve the proposed formulation}. Our main contributions are:
	\begin{enumerate}[fullwidth,label=(\arabic*)]
		
		\item \textbf{Robustness against three kinds of errors:} The DoA error, the SV mismatch error, and the covariance matrix estimation error, which usually appear in practical scenarios are simultaneously considered in the robust beamformer design. {\black Robustness against these errors is achieved by enforcing multiple inequality constraints, in which the presumed SVs are used to control the spatial responses of directions of interests.}

		\item \textbf{A min-max penalization criterion:} Enforcing multiple inequality constraints with limited DoF faces a design feasibility issue, especially in a multi-source scenario. To allow a feasible beamformer design under various robustness considerations, a penalization criterion is proposed to intelligently allocate the limited array DoF. Specifically, the spatial responses on all possible directions of interferences are penalized, this makes the P-ICMV beamformer able to handle multiple sources with robustness.
		
		\item \textbf{An efficient optimization algorithm:} The proposed P-ICMV formulation is a convex second order cone programing (SOCP), we develop a low complexity iterative algorithm based on the alternating direction method of multipliers (ADMM) algorithm to solve it. {\black Also, the developed optimziation technique for solving a class of complex SOCPs itself is interesting.}
		
	\end{enumerate}
	
	We adopt the following notations in this paper. Lower and upper case letters in bold are used for vectors and matrices respectively. For a given matrix $\mathbf{X},$ we denote its transpose and Hermitian transpose by $\mathbf{X}^T$ and $\mathbf{X}^H$ respectively. If $\mathbf{X}$ is a square matrix, we use $\lambda_{\textrm{min}}(\mathbf{X})$ and $\lambda_{\textrm{max}}(\mathbf{X})$ to denote its smallest and largest eigenvalue respectively, use $\mathbf{X}\succ \mathbf{0}$ ($\mathbf{X}\succeq \mathbf{0}$) to denote that it is a positive definite (semi-definite) matrix, and use $\mathbf{X}^{-1}$ to denote its inverse (if exists). For a given complex number $c$, we use $\textrm{Re}\{c\}$ and $\textrm{Im}\{c\}$ to denote its real part and imaginary part respectively, and use $\angle c$ to denote the angle of $c$. We use $\|\mathbf{x}\|$ to denote Euclidean ($
	\ell_2$) norm of the vector $\mathbf{x}$. $\mathbb{E}\left[\cdot\right]$ is used to represent the expectation operation. The notation $\mathbf{I}$ represents the identity matrix with an appropriate size, and the $M$-dimensional real (complex) vector space is denoted by $\mathbb{R}^{M}$ ($\mathbb{C}^{M}$).
	
	
	\section{Signal Model}
	Consider an array with $M$ {\black omnidirectional} array elements. There are $K+1$ statistically independent {\black narrowband} signal sources at different directions $\theta_k,\ k=0,1,\ldots,K$ $(\theta_k\neq \theta_{k^\prime},\ \forall k\neq k^\prime)$.  The signal received by the array at time instances $n=1,2,\ldots,$ {\black is modeled as}
	\begin{equation}\label{eq:sig_model}
	\mathbf{x}(n)=s_0(n)\mathbf{a}_{\theta_0}+ \sum\nolimits_{k=1}^Ks_k(n)\mathbf{a}_{\theta_k}+\mathbf{v}(n)\in\mathbb{C}^M,
	\end{equation}
	where $s_k(n)\in\mathbb{C}$ is the $k$-th source signal at discrete time instance $n$,  $\mathbf{a}_{\theta_k}\in\mathbb{C}^M$ is the SV of direction $\theta_k$, and $\mathbf{v}(n)\in\mathbb{C}^M$ is the noise. {\black Let $s_0(n)$ be} the target signal and others be unwanted signal, then \eqref{eq:sig_model} can be simplified as 
	\begin{equation}\label{eq:sig_model_simp}
	\mathbf{x}(n)=s_0(n)\mathbf{a}_{\theta_0}+\mathbf{u}(n)\in\mathbb{C}^M,
	\end{equation}
	where $\mathbf{u}(n)=\sum_{k=1}^Ks_k(n)\mathbf{a}_{\theta_k}+\mathbf{v}(n)$ is the unwanted signal.
	The beamforming technique linearly combines $\mathbf{x}(n)$ by a so-called beamformer $\mathbf{w}\in\mathbb{C}^M$ such that the output signal 
	\begin{equation}
	z(n)=\mathbf{w}^H\mathbf{x}(n)
	\end{equation}
	satisfies a specified requirement, {\black e.g., SINR of $z(n)$ is maximized.}
	
	With the \textit{a priori} knowledge of the target signal's SV $\mathbf{a}_{\theta_0}$, the well-known MVDR beamformer attempts to minimize the undesired signal power at beamforming output subject to a distortionless constraint on the array response at direction $\theta_0$. The MVDR beamformer formulation is
	\begin{subequations}\label{eq:opt_mvdr}
		\begin{align}
		\min_{\mathbf{w}}\ &\mathbb{E}\left[  |\mathbf{w}^H\mathbf{u}(n) |^2   \right] \label{eq:mvdr_obj}\\
		\textrm{s.t.}\ &\mathbf{w}^H\mathbf{a}_{\theta_0}=1.\label{eq:mvdr_cons}
		\end{align}
	\end{subequations}
	Linear constraint~\eqref{eq:mvdr_cons} guarantees $s_0(n)$ being preserved in $z(n)$ without any distortion and the objective function in~\eqref{eq:mvdr_obj} can be further expressed as
	\begin{equation}
	\mathbb{E}\left[  |\mathbf{w}^H\mathbf{u}(n) |^2   \right]=\mathbf{w}^H\mathbf{R}_u\mathbf{w},
	\end{equation} 
	where $\mathbf{R}_u\triangleq\mathbb{E}\left[ \mathbf{u}(n)\mathbf{u}^H(n) \right]$ is the covariance matrix of $\mathbf{u}(n)$. By the Lagrange multiplier method~\cite{bertsekas1999nonlinear}, the optimal solution of problem \eqref{eq:opt_mvdr} is
	\begin{equation}\label{eq:sulo_mvdr}
	\mathbf{w}^*=\frac{\mathbf{R}_u^{-1}\mathbf{a}_{\theta_0}}{\mathbf{a}^H_{\theta_0}\mathbf{R}_u^{-1}\mathbf{a}_{\theta_0}}.
	\end{equation}
	
	Though the MVDR beamformer $\mathbf{w}^*$ theoretically guarantees a maximal SINR at the beamforming output~\cite{van2004optimum}, its performance could degrade significantly in practice due to:
	\begin{enumerate}[fullwidth,label=(\arabic*)]
		\item Inaccuracy of $\mathbf{R}_u$: The exact covariance matrix $\mathbf{R}_u$ is usually unavailable in most applications, and it usually requires to be estimated from finite $N$ training signal samples. Several covariance matrix estimation approaches~\cite{Cox1987Robust,Carlson1988Covariance,Elnashar2006Diagonal,Gu2012Reconstruction,Ruan2014Shrinkage,Huang2015Reconstruction,zhang2015interference,yang2018high} were proposed to reduce the estimation error or  improve the robustness against the imprecise knowledge of the statistic property of the undesired signal $\mathbf{u}(n)$. 
		\item Mismatch of $\mathbf{a}_{\theta_0}$: Linear constraint~\eqref{eq:mvdr_cons} {\black is enforced to preserve $s_0(n)$ in $z(n)$ without distortion.} In practice, $\mathbf{a}_{\theta_0}$ can be inaccurate and the mismatch of $\mathbf{a}_{\theta_0}$ would lead to undesirable target signal distortion. Further, {\black this mismatch together with the appearance of target signal in the training signal samples would lead a heavy target signal distortion, which is the so-called signal cancellation phenomenon~\cite{Widrow1982cancellation}.}
		
	\end{enumerate}

	Many useful techniques have been developed in the past decades to deal with the mismatch of $\mathbf{a}_{\theta_0}$. But robustness against the inaccuracy of $\mathbf{R}_u$ is usually achieved via estimation or reconstruction. Another promising path is paved by enforcing additional constraints for interference suppression. However, this approach has not been well studied so far. The major challenge lies in the limited DoF of the array. Specifically, for linear constraints-based beamformers~\cite{marquardt2015interaural,Hadad2016Theoretical}, e.g., the well-known LCMV, the number of linear constraints, in general, should be no more than the number of array elements. Otherwise, the associated optimization problem becomes infeasible. Such a limitation (on the number of linear constraints) blocks the further exploration of using linear constraints for interference suppression to achieve robustness.
	
	
	On the other hand, recent researches~\cite{Liao2015bcd,Liao2016admm} relax the equality constraints into inequality ones, and a so-named inequality constrained minimum variance (ICMV) beamformer is proposed. Although the ICMV beamformer may also be limited by the number of constraints, {\black the flexibility of inequality constraints provides a potential way to address the DoF limitation issue.} In the next section, we will present our proposed P-ICMV beamformer formulation, which includes two types of inequality constraints and a penalized objective function. A key advantage of P-ICMV beamformer is its ability to effectively handle multiple sources with robustness on the best effort basis regardless of array DoF. Further, it also provides a flexible robustness adjustment to design beamformers for different situations.
	
	
	\begin{section}{Proposed P-ICMV Beamformer Design}
		To address the mismatch of SVs and inaccuracy of covariance matrix, we firstly introduce two types of inequality constraints in subsections \ref{subsec:rst_svmis} and \ref{subsec:rst_Rmis} respectively.  Then we propose a min-max criterion to deal with the {\black DoF limitation issue} in subsection \ref{subsec:limite_dof}, and the final P-ICMV beamformer formulation is given in subsection \ref{subsec:picmv}.
		
		\begin{subsection}{Robustness against Steering Vector Mismatch}\label{subsec:rst_svmis}
			To handle the SV mismatch, we propose the following inequality constraint to control signal distortion within a certain level. Specifically, for a given direction $\theta$, beamformer $\mathbf{w}$ is constrained as 
			\begin{equation}\label{eq:cons_rst_target_one}
			|\mathbf{w}^H\mathbf{a}_\theta-1|+\delta\| \mathbf{w} \|\leq c_\theta,
			\end{equation}
			where $c_\theta\geq 0$ is a user-defined tolerance threshold in direction $\theta$ and {\black $\delta\geq0$ is a pre-defined parameter for imposing robustness against SV mismatch.} Let $\mathbf{a}_\theta = \bar{\mathbf{a}}_\theta+\Delta\mathbf{a}_\theta,$ where $\bar{\mathbf{a}}_\theta$ is the true SV and $\Delta\mathbf{a}_\theta$ is the perturbation vector with $\|\Delta\mathbf{a}_\theta\|\leq\delta$. {\black Then, inequality \eqref{eq:cons_rst_target_one} actually controls the true signal distortion, given in the follow propostion.}
			{\black 
			\begin{prop}\label{prop:target_bound}
				Suppose $\mathbf{a}_\theta=\mathbf{\bar{a}}_\theta+\Delta\mathbf{a}_\theta$ with $\| \Delta\mathbf{a}_\theta\|\leq\delta$, if $\mathbf{w}$ satisfies~\eqref{eq:cons_rst_target_one}, then $|\mathbf{w}^H\mathbf{\bar{a}}_\theta-1|\leq c_\theta$.
			\end{prop}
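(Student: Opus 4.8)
The plan is to reduce the statement to two elementary inequalities: the triangle inequality and the Cauchy--Schwarz inequality. The guiding observation is that the quantity we wish to bound, the true distortion $|\mathbf{w}^H\bar{\mathbf{a}}_\theta-1|$, differs from the directly controllable quantity $|\mathbf{w}^H\mathbf{a}_\theta-1|$ only through the perturbation contribution $\mathbf{w}^H\Delta\mathbf{a}_\theta$, whose magnitude we can dominate using the assumed norm bound $\|\Delta\mathbf{a}_\theta\|\leq\delta$.

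First I would use the hypothesis $\mathbf{a}_\theta=\bar{\mathbf{a}}_\theta+\Delta\mathbf{a}_\theta$ to substitute $\bar{\mathbf{a}}_\theta=\mathbf{a}_\theta-\Delta\mathbf{a}_\theta$ into the target distortion, obtaining
\begin{equation}
\mathbf{w}^H\bar{\mathbf{a}}_\theta-1=(\mathbf{w}^H\mathbf{a}_\theta-1)-\mathbf{w}^H\Delta\mathbf{a}_\theta.
\end{equation}
Applying the triangle inequality then yields
\begin{equation}
|\mathbf{w}^H\bar{\mathbf{a}}_\theta-1|\leq|\mathbf{w}^H\mathbf{a}_\theta-1|+|\mathbf{w}^H\Delta\mathbf{a}_\theta|.
\end{equation}
Next I would bound the perturbation term by the Cauchy--Schwarz inequality on the Hermitian inner product over $\mathbb{C}^M$, giving $|\mathbf{w}^H\Delta\mathbf{a}_\theta|\leq\|\mathbf{w}\|\,\|\Delta\mathbf{a}_\theta\|\leq\delta\|\mathbf{w}\|$, where the last step invokes $\|\Delta\mathbf{a}_\theta\|\leq\delta$. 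Substituting this back gives $|\mathbf{w}^H\bar{\mathbf{a}}_\theta-1|\leq|\mathbf{w}^H\mathbf{a}_\theta-1|+\delta\|\mathbf{w}\|$, and the right-hand side is at most $c_\theta$ by the assumed constraint \eqref{eq:cons_rst_target_one}. Chaining these bounds produces exactly the claimed inequality $|\mathbf{w}^H\bar{\mathbf{a}}_\theta-1|\leq c_\theta$.

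I do not anticipate any genuine obstacle, as the proposition is a direct consequence of the two inequalities above; the only point requiring mild care is to treat $\mathbf{w}^H\Delta\mathbf{a}_\theta$ as a complex scalar and to invoke the complex form of Cauchy--Schwarz rather than a real one. The argument also clarifies the design rationale behind \eqref{eq:cons_rst_target_one}: the extra term $\delta\|\mathbf{w}\|$ is precisely the worst-case magnitude of the unknown perturbation contribution $\mathbf{w}^H\Delta\mathbf{a}_\theta$, so building it into the nominal constraint is exactly what certifies a guaranteed bound on the true target distortion.
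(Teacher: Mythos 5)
Your proof is correct and follows essentially the same route as the paper's own: both arguments combine the triangle inequality (yours written as $|\mathbf{w}^H\bar{\mathbf{a}}_\theta-1|\leq|\mathbf{w}^H\mathbf{a}_\theta-1|+|\mathbf{w}^H\Delta\mathbf{a}_\theta|$, the paper's as the equivalent reverse form) with the Cauchy--Schwarz bound $|\mathbf{w}^H\Delta\mathbf{a}_\theta|\leq\delta\|\mathbf{w}\|$ and then chain with the constraint. No gaps; the argument is complete.
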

			}
			\begin{proof}
				Denote $\phi$ be the phase of $(\mathbf{w}^H\mathbf{a}_\theta-1)$, then 
				\begin{subequations}\label{subeq:prop}
					\begin{align}
					|\mathbf{w}^H\mathbf{\bar{a}}_\theta-1|
					\leq&\max_{\|\Delta\mathbf{a}_\theta\|\leq \delta}|\mathbf{w}^H(\mathbf{a}_\theta-\Delta\mathbf{a}_\theta) -1|\label{subeq:prop1}\\
					=&\max_{\|\Delta\mathbf{a}_\theta\|\leq \delta} \left|\mathbf{w}^H\mathbf{a}_\theta-1-\mathbf{w}^H\Delta\mathbf{a}_\theta\right|\\
					=& \left|\mathbf{w}^H\mathbf{a}_\theta-1+e^{j\phi}\max_{\|\Delta\mathbf{a}_\theta\|\leq \delta}|\mathbf{w}^H\Delta\mathbf{a}_\theta|\right|\label{subeq:prop2}\\
					=&|\mathbf{w}^H\mathbf{a}_\theta-1|+\max_{\|\Delta\mathbf{a}_\theta\|\leq \delta} |\mathbf{w}^H\Delta\mathbf{a}_\theta|\\
					=&|\mathbf{w}^H\mathbf{a}_\theta-1|+\delta\| \mathbf{w} \|\leq c_\theta.
					\end{align}
				\end{subequations}
				In \eqref{subeq:prop}, \eqref{subeq:prop1} is due the worst-case criterion and \eqref{subeq:prop2} is because the maximum of  $\left|\mathbf{w}^H\mathbf{a}_\theta-1-\mathbf{w}^H\Delta\mathbf{a}_\theta\right|$ is achieved when $-\mathbf{w}^H\Delta\mathbf{a}_\theta$ has the same phase of $\mathbf{w}^H\mathbf{a}_\theta-1$ and $|\mathbf{w}^H\Delta\mathbf{a}_\theta|$ achieve maximum. Further, $\max_{\|\Delta\mathbf{a}_\theta\|\leq \delta} |\mathbf{w}^H\Delta\mathbf{a}_\theta|$ achieves maximum if $\Delta\mathbf{a}_\theta=\delta\frac{\mathbf{w}}{\|\mathbf{w}\|}$. This completes the proof.
			\end{proof}
			Proposition \ref{prop:target_bound} implies the true signal distortion is controlled by the pre-defined parameter $c_\theta$ in the worst-case sense, one remark for it is given below.
			\begin{rmk}[Relation to Other Worst-Case Criterion]
			    {\black For other worst-case criterion based constraints~\cite{Vorobyov2003worst,Li2003Capon,Lorenz2005Robust} in the literature, the true signal distortion can not be precisely controlled.} The recent work~\cite{liao2017robust} proposes nonconvex quadratic constraints to control the true signal distortion and~\eqref{eq:cons_rst_target_one} can be regarded as a shrunk version of these nonconvex constraints~\cite{Conformal2020}. To deal these nonconvex constraints, the semidefinite relaxation technique used in~\cite{liao2017robust} lifts the problem dimension from $M$ to $M^2/2$, and thus faces a computational cost issue when $M$ is large. {\black Instead, shrinking into a special convex form as~\eqref{eq:cons_rst_target} potentially allows efficient algorithm developement as presented in Section~\ref{sec:admm}.}
			\end{rmk}
			
			{\black To handle the DoA error, multiple inequality constraints at nearby angles of $\theta_0$ are introduced. This leads to the following constraints:
			\begin{equation}\label{eq:cons_rst_target}
			|\mathbf{w}^H\mathbf{a}_\theta-1| + \delta \| \mathbf{w} \| \leq c_\theta,\ \forall \theta\in\Theta,
			\end{equation}
			where $\Theta$ be a pre-specified discrete angle set, e.g., $\Theta=\theta_0+\{-5^{\circ},0^{\circ},5^{\circ} \}$. Inequality constraints~\eqref{eq:cons_rst_target} try to form a wide aperture beam around $\theta_0$ to robustly protect the target signal and the signal distortion for each $\theta\in\Theta$ is bounded by $c_\theta$.} 

		\end{subsection}
		
		\begin{subsection}{Robustness Against Inaccurate Estimation of $\mathbf{R}_u$}\label{subsec:rst_Rmis}
			There are two ways to impose robustness against the inaccurate estimation of $\mathbf{R}_u$. One is to exploit the inherent structure of $\mathbf{R}_u$ and develop efficient estimators for $\mathbf{R}_u$ from finite samples~\cite{Gu2012Reconstruction,Ruan2014Shrinkage,Huang2015Reconstruction,zhang2015interference,yang2018high}. {\black Another} way is to exploit the a prior knowledge of the presumed SVs and impose robust constraints for interference suppression. Following the second way, we propose to impose the following inequality constraints,
			\begin{equation}\label{eq:cons_rst_itf}
			|\mathbf{w}^H\mathbf{a}_\phi|+\delta\| \mathbf{w}\| \leq c_\phi,\ \forall \phi\in\Phi_k,\ \forall k,
			\end{equation}
			where $\Phi_k$ is a pre-defined discrete angle set for the interference $k$ based on the corresponding DoA estimation $\theta_k$, {\black e.g., $\Phi_k=\theta_k+\{-5^{\circ},0^{\circ},5^{\circ}\}$}, and $c_\phi\geq 0$ specifies an allowable amplification for angle $\phi$. Similar to Proposition \ref{prop:target_bound}, constraints \eqref{eq:cons_rst_itf} implies an upper bound for the true spatial response $\mathbf{w}^H\mathbf{\bar{a}}_\phi$, given in Proposition \ref{prop:intf_bound}.
			{\black 
			\begin{prop}\label{prop:intf_bound}
				Suppose $\mathbf{a}_\phi=\mathbf{\bar{a}}_\phi+\Delta\mathbf{a}_\phi$ with $\| \Delta\mathbf{a}_\phi\|\leq\delta$, if $\mathbf{w}$ satisfies~\eqref{eq:cons_rst_target_one}, then $|\mathbf{w}^H\mathbf{\bar{a}}_\phi|\leq c_\phi$.
			\end{prop}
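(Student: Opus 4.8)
The plan is to follow exactly the template established for Proposition~\ref{prop:target_bound}, since the interference constraint \eqref{eq:cons_rst_itf} is structurally identical to the target constraint \eqref{eq:cons_rst_target_one} with the affine offset $1$ replaced by $0$. Accordingly, only the triangle inequality and the Cauchy--Schwarz inequality are needed, and the argument is in fact slightly shorter than that of Proposition~\ref{prop:target_bound} because no constant term must be tracked through the estimates.

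First I would substitute the decomposition $\mathbf{a}_\phi=\mathbf{\bar{a}}_\phi+\Delta\mathbf{a}_\phi$ into $|\mathbf{w}^H\mathbf{a}_\phi|$ and apply the triangle inequality in its lower-bound (reverse) form to obtain
$$|\mathbf{w}^H\mathbf{a}_\phi|=|\mathbf{w}^H\mathbf{\bar{a}}_\phi+\mathbf{w}^H\Delta\mathbf{a}_\phi|\geq|\mathbf{w}^H\mathbf{\bar{a}}_\phi|-|\mathbf{w}^H\Delta\mathbf{a}_\phi|.$$
Next I would bound the perturbation term by the Cauchy--Schwarz inequality together with the norm assumption $\|\Delta\mathbf{a}_\phi\|\leq\delta$, which gives $|\mathbf{w}^H\Delta\mathbf{a}_\phi|\leq\|\mathbf{w}\|\,\|\Delta\mathbf{a}_\phi\|\leq\delta\|\mathbf{w}\|$. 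Rearranging the displayed inequality and inserting this estimate then yields
$$|\mathbf{w}^H\mathbf{\bar{a}}_\phi|\leq|\mathbf{w}^H\mathbf{a}_\phi|+\delta\|\mathbf{w}\|\leq c_\phi,$$
where the final step is precisely the hypothesis drawn from \eqref{eq:cons_rst_itf}. This closes the chain and establishes the claimed bound on the true spatial response.

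I do not expect any genuine obstacle: the only point requiring care is using the triangle inequality in the correct (reverse) direction, so that the perturbation term appears with a negative sign and can be absorbed into the $\delta\|\mathbf{w}\|$ budget rather than adding to it. It is worth noting that Propositions~\ref{prop:target_bound} and~\ref{prop:intf_bound} could both be recovered from a single lemma bounding $|\mathbf{w}^H\mathbf{\bar{a}}-d|$ for an arbitrary desired response $d$, yielding the target case with $d=1$ and the interference case with $d=0$; I would be inclined to state that unification rather than repeat two nearly identical derivations.
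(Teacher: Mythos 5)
Your proof is correct and follows exactly the route the paper takes: the paper proves Proposition~\ref{prop:intf_bound} by invoking the same triangle-inequality plus Cauchy--Schwarz argument used for Proposition~\ref{prop:target_bound}, which is precisely your chain $|\mathbf{w}^H\mathbf{\bar{a}}_\phi|\leq|\mathbf{w}^H\mathbf{a}_\phi|+\delta\|\mathbf{w}\|\leq c_\phi$. Your closing observation that both propositions are instances of one lemma with a general offset $d$ (with $d=1$ and $d=0$) is a reasonable stylistic unification but not a different argument.
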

			}
			\begin{proof}
				The proof for Proposition \ref{prop:intf_bound} is similar to Proposition \ref{prop:target_bound} and is omitted here.
			\end{proof}
		
			Constraints~\eqref{eq:cons_rst_itf} try to form a sufficiently wide and bounded null at nearby angles of $\phi_k$. Interferences thus {\black can be suppressed with} robustness regardless the estimation quality of $\mathbf{R}_u$. 
			
			
			The MVDR formulation~\eqref{eq:opt_mvdr} can be modified by replacing constraint~\eqref{eq:mvdr_cons} with constraints \eqref{eq:cons_rst_target} and \eqref{eq:cons_rst_itf}, leading to a robust beamformer design. Such beamformer (with $\delta=0$) was recently proposed and studied in~\cite{Liao2015bcd, Xiao2018Evaluation}, named as  ICMV beamformer in short. However, the ICMV beamformer also suffers from a design feasibility issue, i.e., the ICMV formulation may not be feasible when the number of constraints is larger than the number of array elements. In fact, the design of beamformer under limited DoF condition is not just a challenge for the ICMV, but also for other multiple constraints based beamformer design like the LCMV beamformer or other linear constraints based beamformers~\cite{marquardt2015interaural,Hadad2016Theoretical}. The key question is whether there is a way to impose arbitrary number of robust constraints when the number of array elements is fixed? In what follows, we provide a positive answer affirmatively by proposing a minmax robust beamforming formulation.
		\end{subsection}
		
		\begin{subsection}{Limited DoF}\label{subsec:limite_dof}
			In this subsection, we introduce a penalization criterion for constraints \eqref{eq:cons_rst_itf}. The proposed design criterion always enables a feasible beamformer design (under mild conditions, see Proposition \ref{prop:suffi_fea}) and {\black robustly} suppresses multiple interferences. From the physical meaning of constraints~\eqref{eq:cons_rst_itf}, we know they attempt to bound the spatial responses at several potential angles of interferences. Instead of specifying fixed level $c_\phi$ which would cause a design feasibility issue, we introduce an extra optimization variable $\bm{\epsilon}=(\epsilon_1,\epsilon_2,\ldots,\epsilon_K)\in\mathbb{R}^K$ on the right-hand sides of constraints~\eqref{eq:cons_rst_itf} to control the levels of spatial responses. We propose the following min-max optimization criterion for suppressing multiple interferences: 
			\begin{equation}\label{eq:opt_minmax}
			\begin{aligned}
			\min_{\mathbf{w},\bm{\epsilon}}\ & \max_{k}\{ \gamma_k \epsilon_k   \} + \textrm{Other objectives}\\
			\textrm{s.t.}\ & |\mathbf{w}^H\mathbf{a}_\phi|+\delta\| \mathbf{w} \|\leq \epsilon_k c_\phi,\ \forall \phi\in\Phi_k,\ \forall k,\\
			& \textrm{Other feasible constraints}.
			\end{aligned}
			\end{equation}
			In~\eqref{eq:opt_minmax}, $\{\gamma_k\}$ are user-specified interference suppression parameters, interference with larger $\gamma_k$ has higher priority to be suppressed. Suppose $c_\phi>0, \forall \phi\in\Phi_k,\forall k$, criterion \eqref{eq:opt_minmax} always enables a feasible design for $\mathbf{w}$ by optimizing the values of $\{ \epsilon_k \}$ large enough. We also remark that $\gamma_k$ is a user-defined parameter, which should be made proportional to the estimated interference power level. One straightforward way is specifying $\gamma_r$ by the estimated Capon spectrum \cite{capon1969high}, then proposed beamformer is able to adaptively allocate limited DoF to suppress interferences with different intention.
		\end{subsection}
		\begin{subsection}{Proposed P-ICMV Beamformer}\label{subsec:picmv}
			Finally, combine the discussions in Subsections \ref{subsec:rst_svmis}, \ref{subsec:rst_Rmis}, and \ref{subsec:limite_dof}, the proposed P-ICMV formulation is 
			\begin{equation}\label{eq:opt_picmv}
			\begin{aligned}
			\min_{\mathbf{w},\bm{\epsilon}}\ & \mathbf{w}^H\mathbf{R}\mathbf{w}+\mu\max_{k}\{ \gamma_k \epsilon_k   \}\\
			\textrm{s.t.}\ &|\mathbf{w}^H\mathbf{a}_\theta-1| + \delta \| \mathbf{w}\|\leq c_\theta,\ \forall \theta\in\Theta,\\
			& |\mathbf{w}^H\mathbf{a}_\phi| + \delta\|\mathbf{w} \|\leq \epsilon_k c_\phi,\ \forall \phi\in\Phi_k,\ \forall k.
			\end{aligned}
			\end{equation}
			In \eqref{eq:opt_picmv}, $\mathbf{R}$ can be the estimated noise-only covariance matrix or the estimated noise-plus-interference correlation matrix, and $\mu\geq0$ is a trade-off parameter for noise and interference suppression. Besides the beamformer variable $\mathbf{w}$, the P-ICMV formulation \eqref{eq:opt_picmv} has an \textit{extra} optimization variable $\bm{\epsilon}$ which makes the upper bound on $|\mathbf{w}^H\mathbf{a}_\phi|$ adjustable. The number of constraints for interference suppression is no longer limited by the DoF. The feasibility can be achieved by optimizing $\epsilon$ large enough. {\black Thus,} if constraints for angle $\theta\in\Theta$ are nonempty, the P-ICMV beamformer is able to generate a feasible design with bounded signal distortion for arbitrary number of interferences. To strictly ensure P-ICMV formulation \eqref{eq:opt_picmv} being able to handle arbitrary number of interferences with bounded distortion, we provide the following sufficient condition in  Proposition \ref{prop:suffi_fea}. 
			\begin{prop}\label{prop:suffi_fea}
				Let $\mathbf{A}\in\mathbb{C}^{M\times |\Theta|}$ be the matrix stacked by $\mathbf{a}_\theta,\forall \theta\in\Theta$, suppose $\textrm{rank}(\mathbf{A})=|\Theta|$ and $\delta \leq \min_{\theta}c_\theta/\sqrt{\mathbf{1}^H(\mathbf{A}^H\mathbf{A})^{-1}\mathbf{1}}$. Then problem \eqref{eq:opt_picmv} is always feasible.
			\end{prop}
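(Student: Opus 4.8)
The plan is to exhibit an explicit feasible point rather than argue abstractly. First I would observe that the interference constraints can be satisfied for free once $\mathbf{w}$ is fixed: since each $c_\phi>0$, for any choice of $\mathbf{w}$ one may simply take each $\epsilon_k$ large enough that $\epsilon_k c_\phi\geq|\mathbf{w}^H\mathbf{a}_\phi|+\delta\|\mathbf{w}\|$ holds for every $\phi\in\Phi_k$, e.g. $\epsilon_k=\max_{\phi\in\Phi_k}(|\mathbf{w}^H\mathbf{a}_\phi|+\delta\|\mathbf{w}\|)/c_\phi$. Hence the whole feasibility question collapses to producing a single $\mathbf{w}$ that meets the target-protection constraints $|\mathbf{w}^H\mathbf{a}_\theta-1|+\delta\|\mathbf{w}\|\leq c_\theta$ for all $\theta\in\Theta$.

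The key idea is to choose the candidate that annihilates the distortion term $|\mathbf{w}^H\mathbf{a}_\theta-1|$ at every $\theta\in\Theta$ simultaneously, and, among all such choices, take the one of smallest norm so that the residual $\delta\|\mathbf{w}\|$ is as small as possible. Writing the distortionless requirement $\mathbf{w}^H\mathbf{a}_\theta=1$ for all $\theta$ compactly as $\mathbf{A}^H\mathbf{w}=\mathbf{1}$, the hypothesis $\mathrm{rank}(\mathbf{A})=|\Theta|$ guarantees that $\mathbf{A}^H\mathbf{A}$ is invertible, so the minimum-norm solution
$$\mathbf{w}_0=\mathbf{A}(\mathbf{A}^H\mathbf{A})^{-1}\mathbf{1}$$
is well defined and satisfies $\mathbf{A}^H\mathbf{w}_0=\mathbf{1}$, i.e. $\mathbf{w}_0^H\mathbf{a}_\theta=1$ and hence $|\mathbf{w}_0^H\mathbf{a}_\theta-1|=0$ for every $\theta\in\Theta$.

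It then only remains to bound $\|\mathbf{w}_0\|$. A direct computation using the Hermitian positive definiteness of $\mathbf{A}^H\mathbf{A}$ (and the cancellation $(\mathbf{A}^H\mathbf{A})^{-1}(\mathbf{A}^H\mathbf{A})(\mathbf{A}^H\mathbf{A})^{-1}=(\mathbf{A}^H\mathbf{A})^{-1}$) gives $\|\mathbf{w}_0\|^2=\mathbf{1}^H(\mathbf{A}^H\mathbf{A})^{-1}\mathbf{1}$. Substituting into the target constraint, the left-hand side at each $\theta$ equals exactly $\delta\|\mathbf{w}_0\|$, and the assumed bound $\delta\leq\min_\theta c_\theta/\sqrt{\mathbf{1}^H(\mathbf{A}^H\mathbf{A})^{-1}\mathbf{1}}$ yields $\delta\|\mathbf{w}_0\|\leq\min_\theta c_\theta\leq c_\theta$ for all $\theta\in\Theta$. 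Thus $\mathbf{w}_0$ satisfies every target-protection constraint, and together with the $\bm{\epsilon}$ chosen above, $(\mathbf{w}_0,\bm{\epsilon})$ is feasible for \eqref{eq:opt_picmv}.

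I do not expect a genuine obstacle: the substance is recognizing the correct candidate. The one step needing care is the norm computation, since it is precisely $\|\mathbf{w}_0\|=\sqrt{\mathbf{1}^H(\mathbf{A}^H\mathbf{A})^{-1}\mathbf{1}}$ that makes the stated bound on $\delta$ tight rather than merely sufficient. I would also flag two implicit points worth stating explicitly: the argument uses $c_\phi>0$ (consistent with the min-max discussion) to dispatch the interference constraints, and $\mathrm{rank}(\mathbf{A})=|\Theta|$ forces $|\Theta|\leq M$, so the number of target-protection constraints cannot by itself exceed the array DoF.
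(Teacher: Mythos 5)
Your proof is correct and follows essentially the same route as the paper: both exhibit the minimum-norm solution $\mathbf{w}=\mathbf{A}(\mathbf{A}^H\mathbf{A})^{-1}\mathbf{1}$ of $\mathbf{A}^H\mathbf{w}=\mathbf{1}$, compute $\|\mathbf{w}\|=\sqrt{\mathbf{1}^H(\mathbf{A}^H\mathbf{A})^{-1}\mathbf{1}}$, and invoke the bound on $\delta$ (the paper states this for a general right-hand side $\mathbf{b}$ and then sets $\mathbf{b}=\mathbf{1}$). Your version is in fact slightly more complete, since you explicitly dispatch the interference constraints by choosing each $\epsilon_k$ large enough (using $c_\phi>0$), a step the paper leaves implicit from the surrounding discussion.
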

			\begin{proof}
				Consider the value of $\|\mathbf{w}\|$ subject to $\mathbf{A}^H\mathbf{w}=\mathbf{b}$, where $\mathbf{b}\in \mathbb{C}^{|\Theta|}$ is a complex vector. By the method of Lagrange multipliers, $\mathbf{w}^*=\mathbf{A}(\mathbf{A}^H\mathbf{A})^{-1}\mathbf{b}$ achieves the minimal $\|\mathbf{w}\|$, given as $\| \mathbf{w}^*\|=\sqrt{\mathbf{b}^H(\mathbf{A}^H\mathbf{A})^{-1}\mathbf{b}}$. If exists $\mathbf{b}$ such that $|b_\theta-1| + \delta \sqrt{\mathbf{b}^H(\mathbf{A}^H\mathbf{A})^{-1}\mathbf{b}}\leq c_\theta,\ \forall \theta\in\Theta$, then $\mathbf{w}^*$ is a feasible solution for problem \eqref{eq:opt_picmv}. Setting $\mathbf{b}=\mathbf{1}$ completes the proof.
			\end{proof}
			
			In short, we summarize several interesting properties of the P-ICMV formulation \eqref{eq:opt_picmv} in the following remarks.
			
			\begin{rmk}[Bounded Spatial Responses]
				By Propositions \ref{prop:target_bound} and \ref{prop:intf_bound}, if $\|\Delta\mathbf{a}_\theta\|\leq\delta$, then any feasible solution $(\mathbf{w},\bm{\epsilon})$ of problem \eqref{eq:opt_picmv} has a bounded signal distortion, $1-c_\theta\leq |\mathbf{w}^H\mathbf{\bar{a}}_\theta|\leq 1+ c_\theta,\forall \theta\in\Theta$. Similarly, spatial response for each interference is softly bounded by an adjustable parameter $\epsilon_k$, $|\mathbf{w}^H\mathbf{\bar{a}}_\phi|\leq \epsilon_kc_\phi,\forall \phi \in \Phi_k$. {\black Further, if angle-dependent SV mismatch is considered, $\delta$ can be specified to $\delta_\theta$. The claims in Propositions \ref{prop:target_bound} and \ref{prop:intf_bound} still hold and the developed algorithm in Section~\ref{sec:admm} can also be applied to solve problem~\eqref{eq:opt_picmv}.}
			\end{rmk}
			
			\begin{rmk}[DoF Allocation]
				Penalty function $\mu\max_k\{ \gamma_k\epsilon_k \}$ enables P-ICMV beamformer to intelligently allocate DoF to suppress the intended interference with larger weight $\gamma_k$. This allows selective interference suppression, i.e.,  larger weight can be applied on interferences with higher degree of annoyance.
			\end{rmk}
			
			\begin{rmk}[Robustness Adjustment]
				{The P-ICMV formulation~\eqref{eq:opt_picmv} contains a set of user-specified parameters, this allows a flexible robust beamformer design. Specifically, different levels of DoA errors can be handled by specifying angle sets $\Theta$ and $\Phi_k,\forall k$; different allowable signal distortions can be controlled by tolerance parameters $c_\theta$ and $c_\phi$; and different levels of SV mismatch can be dealt by specifying $\delta$. Together with parameters $\mu$ and $\gamma_k$ in the penalty function, the total number of user specified parameters is $N_{\rm{u}}=2(| \Theta | + \sum_{k}|\Phi_k|)+K+2$.}
			\end{rmk}
			
			Finally, {\black we note that for a paticular robust beamforming task, these $N_u$ parameters can be properly specified by making use of the prior knowledge or the robustness level that the user prefers.} Also, with proper settings for these parameters, P-ICMV beamformer reduces to several existing beamformers, {\black e.g.,} MVDR, LCMV, and DL beamformers. Furthermore, though the P-ICMV formulation is proposed for adaptive beamforming purpose, it can also be used to design a specified static beam pattern by setting $\mathbf{R}=\mathbf{I}$, see examples in Section~\ref{subsec:synthesis}. Next, we develop an efficient algorithm to solve it.

		\end{subsection}
	\end{section}

	\section{ADMM Algorithm for P-ICMV Beamformer}\label{sec:admm}
	Problem \eqref{eq:opt_picmv} is actually a convex second-order cone program (SOCP), and the well-studied interior point method \cite{ye2011interior} can be used to solve it. However, the computational efficiency of the interior point method is too high in many adaptive beamforming applications such as in an embedded array system with limited processing~\cite{Doclo2015Magzine,Liao2015bcd,Liao2016admm}. Recent theoretical progress on the convergence and convergence analysis~\cite{Hong2017} for the ADMM algorithm {\black provides an alternative way to efficiently solve problem~\eqref{eq:opt_picmv}.} In this section, we first reformulate~\eqref{eq:opt_picmv} as a convex SOCP with smooth objective function and then solve it based the ADMM algorithm. 
	
	By introducing extra optimization variables $t\in\mathbb{R}$ and $y\in\mathbb{R}$, problem \eqref{eq:opt_picmv} is equivalently reformulated as 
	\begin{equation}\label{eq:opt_picmv_t}
	\begin{aligned}
	\min_{\mathbf{w},t,y}\ & \mathbf{w}^H\mathbf{R}\mathbf{w}+\mu t\\
	\textrm{s.t.}\ &|\mathbf{w}^H\mathbf{a}_\theta-1|+\delta y\leq c_\theta,\ \forall \theta\in\Theta,\\
	& |\mathbf{w}^H\mathbf{a}_\phi| + \delta y\leq t c_\phi/\gamma_k,\ \forall \phi\in\Phi_k,\ \forall k, \\
	&\|\mathbf{w}\| \leq y.
	\end{aligned}
	\end{equation}
	Notice that problems \eqref{eq:opt_picmv} and \eqref{eq:opt_picmv_t} have the same optimal $\mathbf{w}^*$, and the optimal $\epsilon_k^*, \forall k$ for problem \eqref{eq:opt_picmv} can be extracted from the optimal $(\mathbf{w}^*,y^*,t^*)$ for problem $\eqref{eq:opt_picmv_t}$, 
	given as $\max_{\phi\in\Phi_k}\{(|\mathbf{a}_\phi^H\mathbf{w}^*| + \delta y^*)/ c_\phi \}\leq\epsilon_k^*\leq t^*/\gamma_k$. To derive the ADMM algorithm for problem~\eqref{eq:opt_picmv_t}, we first introduce auxiliary variables $\{y_\theta,z_\theta\}$ and $\{y_\phi,z_\phi\}$ as
	\begin{subequations}
		\begin{align}
		y_\theta = y, \ z_\theta&= \mathbf{w}^H\mathbf{a}_{\theta},\ \forall \theta\in\Theta,\label{eq:delta_theta}\\
		y_\phi = y, \ z_\phi&= \mathbf{w}^H\mathbf{a}_{\phi},\ \forall \phi\in \Phi_k, \forall k\label{eq:delta_phi}.
		\end{align}
	\end{subequations}
	Then, problem~\eqref{eq:opt_picmv_t} can be equivalently reformulated as
	\begin{subequations}\label{eq:admm_picmv}
		\begin{align}
		\min\quad&\mathbf{w}^H\mathbf{R}\mathbf{w}+\mu t\notag \\
		\textrm{s.t.}\quad
		&|z_\theta-1|+\delta y_\theta\leq c_{\theta},\ \forall\theta\in\Theta,\label{eq:ricmv_tarineq}\\
		&|z_\phi|+\delta y_\phi\leq t c_{\phi}/\gamma_k,\ \forall\phi\in\Phi_k,\forall k,\label{eq:ricmv_intfineq}\\
		&\|\mathbf{w}\| \leq y,\label{eq:ricmv_wnorm}\\
		&\eqref{eq:delta_theta},\eqref{eq:delta_phi}.\notag
		\end{align}
	\end{subequations}
	
	Let $L_\rho(\mathbf{w},y,\{y_\theta,z_\theta\},t,\{y_\phi,z_\phi\},\{\eta_\theta,\lambda_\theta\},\{\eta_\phi,\lambda_\phi\})$ be the augmented Lagrangian function for problem~\eqref{eq:admm_picmv}~\cite{boyd2011distributed}
	\begin{equation*}
	\begin{aligned}
	L_\rho&(\mathbf{w},y,\{y_\theta,z_\theta\},t,\{y_\phi,z_\phi\},\{\eta_\theta,\lambda_\theta\},\{\eta_\phi,\lambda_\phi\})\\
    =	&\mathbf{w}^H\mathbf{R}\mathbf{w}+\mu t +\sum_{\theta\in\Theta}\left[ \textrm{Re}\{ \lambda_\theta^{ {H}}(\mathbf{w}^H{\mathbf{a}}_{\theta}-{z_\theta}) \}+\frac{\rho}{2}|\mathbf{w}^H{\mathbf{a}}_{\theta}-{z_\theta}|^2\right]\\
	&+\sum_{\theta\in\Theta}\left[  \eta_\theta(y-y_\theta) +\frac{\rho}{2}(y-y_\theta)^2\right]+\sum_k\sum_{\phi\in\Phi_k}\left[
	\textrm{Re}\{
	\lambda_{\phi}^{  {H}}(\mathbf{w}^H{\mathbf{a}}_{\phi}-{z_\phi})\}+\frac{\rho}{2}|\mathbf{w}^H{\mathbf{a}}_{\phi}-{z_\phi}|^2\right]\\
	&+\sum_k\sum_{\phi\in\Phi_k}\left[  \eta_\phi(y-y_\phi) +\frac{\rho}{2}(y-y_\phi)^2\right],
	\end{aligned}
	\end{equation*}
	where $\{\lambda_{\theta}\}$ and $\{\lambda_{\phi}\}$ are Lagrangian multipliers associated with equality constraints $z_\theta= \mathbf{w}^H\mathbf{a}_{\theta},\forall \theta\in\Theta$ and $z_\phi= \mathbf{w}^H\mathbf{a}_{\phi},\forall \phi\in \Phi_k, \forall k$, $\{\eta_{\theta}\}$ and $\{\eta_{\phi}\}$ are Lagrangian multipliers associated with equality constraints $y_\theta=y,\forall \theta\in\Theta$ and $y_\phi= y,\forall \phi\in \Phi_k, \forall k$, and $\rho>0$ is the penalty parameter in ADMM algorithm. Define $\mathbf{x}_{1}\triangleq(\mathbf{w},y)$, $\mathbf{x}_{2}\triangleq(\{y_\theta,z_\theta\})$, $\mathbf{x}_{3}\triangleq(t,\{y_\phi,z_\phi\})$, and $\bm{\lambda}=(\{\eta_\theta,\lambda_\theta\},\{\eta_\phi,\lambda_\phi\})$, 
	then at iteration $r=0,1,2,\cdots$, the ADMM algorithm updates all variables as follows:
	\begin{subequations}\label{eq:admm_update}
		\begin{align}
		&\mathbf{x}_1^{r+1}=\arg\min_{\eqref{eq:ricmv_wnorm}} L_\rho(\mathbf{x}_1,\mathbf{x}_2^r,\mathbf{x}_3^r,\bm{\lambda}^r),\label{eq:admm_w}\\
		&\mathbf{x}_2^{r+1}=\arg\min_{\eqref{eq:ricmv_tarineq}} L_\rho(\mathbf{x}_1^{r+1},\mathbf{x}_2,\mathbf{x}_3^r,\bm{\lambda}^r),\label{eq:admm_theta}\\
		&\mathbf{x}_3^{r+1}=\arg\min_{\eqref{eq:ricmv_intfineq}} L_\rho(\mathbf{x}_1^{r+1},\mathbf{x}_2^{r+1},\mathbf{x}_3,\bm{\lambda}^r),\label{eq:admm_phi}\\
		&\lambda_\theta^{r+1} = \lambda_\theta^{r} + \rho(\mathbf{w}^H{\mathbf{a}}_\theta-z_\theta^{r+1}),\theta\in\Theta, \label{eq:admm_lamtheta}\\
		&\eta_\theta^{r+1} = \eta_{\theta}^r + \rho(y-y_\theta),\theta\in\Theta, \label{eq:admm_etatheta}\\
		&\lambda_\phi^{r+1} = \lambda_\phi^{r} + \rho(\mathbf{w}^H{\mathbf{a}}_\phi-z_\phi^{r+1}),\phi\in\Phi_k,\forall k \label{eq:admm_lamphi},\\
		&\eta_\phi^{r+1} = \eta_\phi^{r} + \rho(y-y_\phi),\phi\in\Phi_k,\forall k \label{eq:admm_etaphi}.
		\end{align}
	\end{subequations}
	The convergence behavior of the above ADMM iterations is given in the following proposition~\cite{Hong2017}.
	\begin{prop}
		Suppose Proposition \ref{prop:suffi_fea} holds, then iterates $\{ \mathbf{w}^{r}\}$ generated by \eqref{eq:admm_update} converge to the optimal $\mathbf{w}^*$ of problem \eqref{eq:opt_picmv} as $r\rightarrow\infty$.
	\end{prop}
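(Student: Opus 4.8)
The plan is to verify that the block-splitting reformulation \eqref{eq:admm_picmv} fits the hypotheses of the convergence theorem of \cite{Hong2017} and then to translate its conclusion back to the original variable $\mathbf{w}$. First I would put \eqref{eq:admm_picmv} in the canonical ADMM form, writing the coupling equalities \eqref{eq:delta_theta} and \eqref{eq:delta_phi} as one linear system $\mathbf{A}_1\mathbf{x}_1+\mathbf{A}_2\mathbf{x}_2+\mathbf{A}_3\mathbf{x}_3=\mathbf{b}$ in the three blocks $\mathbf{x}_1=(\mathbf{w},y)$, $\mathbf{x}_2=(\{y_\theta,z_\theta\})$, $\mathbf{x}_3=(t,\{y_\phi,z_\phi\})$, and identifying the per-block objectives: $\mathbf{w}^H\mathbf{R}\mathbf{w}$ together with the indicator of the cone \eqref{eq:ricmv_wnorm} for $\mathbf{x}_1$, the indicator of \eqref{eq:ricmv_tarineq} for $\mathbf{x}_2$, and $\mu t$ together with the indicator of \eqref{eq:ricmv_intfineq} for $\mathbf{x}_3$. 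Since $\mathbf{R}\succeq\mathbf{0}$ and all three feasible regions are second-order cones, each block objective is closed, proper, and convex, so \eqref{eq:admm_picmv} is a convex program with purely linear coupling.

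Next I would record the structural facts the theorem needs. The decisive observation is that $\mathbf{x}_2$ and $\mathbf{x}_3$ never appear in a common constraint: $\mathbf{x}_2$ is coupled to $\mathbf{x}_1$ through \eqref{eq:delta_theta} and $\mathbf{x}_3$ to $\mathbf{x}_1$ through \eqref{eq:delta_phi}, but no constraint involves both. Hence, once $\mathbf{x}_1$ is fixed, the augmented Lagrangian splits additively over $\mathbf{x}_2$ and $\mathbf{x}_3$, and the sequential updates \eqref{eq:admm_theta}--\eqref{eq:admm_phi} coincide with a single joint minimization over $(\mathbf{x}_2,\mathbf{x}_3)$; the iteration \eqref{eq:admm_update} is therefore a genuine two-block ADMM with blocks $\mathbf{x}_1$ and $(\mathbf{x}_2,\mathbf{x}_3)$. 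I would further note that the auxiliary copies $(y_\theta,z_\theta)$ and $(y_\phi,z_\phi)$ enter the coupling equalities with identity coefficients, so the corresponding columns of $\mathbf{A}_2$ and $\mathbf{A}_3$ are of full rank, while $t$ appears only in the objective and the inequality of the $\mathbf{x}_3$-block and is pinned down within that block's update; each subproblem \eqref{eq:admm_w}--\eqref{eq:admm_phi} then minimizes a convex quadratic augmented by $\tfrac{\rho}{2}\|\cdot\|^2$ over a second-order cone and is always solvable.

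With these facts in place I would invoke Proposition \ref{prop:suffi_fea}: its feasibility guarantee makes the constraint set of \eqref{eq:admm_picmv} nonempty and the finite optimum attained, and, since the constructed point can be made strictly feasible (a Slater-type qualification), the Lagrangian admits a saddle point. The theorem of \cite{Hong2017} then applies to this effectively two-block convex scheme and yields convergence of the full sequence of primal and dual iterates to an optimal primal-dual pair of \eqref{eq:admm_picmv}. Finally I would transfer the result: because \eqref{eq:opt_picmv}, \eqref{eq:opt_picmv_t}, and \eqref{eq:admm_picmv} are equivalent and share the same optimal $\mathbf{w}$ (the auxiliaries $z_\theta,z_\phi,y_\theta,y_\phi$ are pinned to $\mathbf{w}$ and $y$ by the equalities, and $\epsilon_k^*$ is recovered from $(\mathbf{w}^*,y^*,t^*)$ as already noted), the sequence $\{\mathbf{w}^r\}$ converges to the optimal $\mathbf{w}^*$ of \eqref{eq:opt_picmv}.

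The step I expect to be the main obstacle is upgrading residual and objective convergence to convergence of the primal iterate $\mathbf{w}^r$ itself. Classical convex ADMM theory guarantees that the coupling residuals vanish and the objective reaches its optimum, but pointwise convergence of $\mathbf{w}^r$ is not automatic when $\mathbf{R}$ is merely positive semidefinite (indeed $\mathbf{R}=\mathbf{0}$ for static pattern synthesis), since the $\mathbf{w}$-objective is then not strongly convex. The careful part is therefore to confirm that the specific conditions of \cite{Hong2017}, available here through the identity-coefficient coupling of the auxiliary variables and the decoupled two-block structure, do deliver iterate convergence, and to argue that every limit point of $\{\mathbf{w}^r\}$ is optimal for \eqref{eq:opt_picmv}, with uniqueness of $\mathbf{w}^*$ requiring in general $\mathbf{R}\succ\mathbf{0}$ or an equivalent identifiability of the $\mathbf{w}$-component.
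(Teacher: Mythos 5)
Your proposal is correct in approach, but you should know what it is being compared against: the paper gives \emph{no proof at all} for this proposition. It is stated as a direct consequence of the ADMM convergence theory in \cite{Hong2017}, with the citation serving as the entire justification, and the text immediately moves on to the subproblem solutions. Your write-up therefore supplies precisely the verification the paper omits, and it contains the one observation that actually makes the citation safe: since $\mathbf{x}_2$ and $\mathbf{x}_3$ share no constraint (the $\theta$-equalities couple $\mathbf{x}_2$ only to $\mathbf{x}_1$, the $\phi$-equalities couple $\mathbf{x}_3$ only to $\mathbf{x}_1$, and $t$ appears in no equality), the sequential updates \eqref{eq:admm_theta}--\eqref{eq:admm_phi} coincide with a joint minimization over $(\mathbf{x}_2,\mathbf{x}_3)$, so \eqref{eq:admm_update} is a genuine two-block ADMM. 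This is not cosmetic: three-block ADMM with the standard dual step size can diverge even on convex problems, so some structural reduction of this kind is needed before an off-the-shelf convergence theorem can be invoked with $\rho$ unrestricted; the paper never states it. Your transfer of optimality back to \eqref{eq:opt_picmv} also matches the paper's own remark on recovering $\epsilon_k^*$ from $(\mathbf{w}^*,y^*,t^*)$.

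The obstacle you flag in your last paragraph is genuine, and it is unresolved in the paper as well. With $\mathbf{R}$ merely positive semidefinite (the paper explicitly allows $\mathbf{R}=\mathbf{0}$ for pattern synthesis), the $\mathbf{w}$-block objective is not strongly convex; the coupling map acting on $(\mathbf{w},y)$ has full column rank only if $\{\mathbf{a}_\theta\}\cup\{\mathbf{a}_\phi\}$ spans $\mathbb{C}^M$; the column of $t$ in the coupling is zero; and ``the optimal $\mathbf{w}^*$'' need not even be unique in general. Classical two-block theory by itself then yields convergence of the duals and of the images $\mathbf{w}^H\mathbf{a}_\theta$, $\mathbf{w}^H\mathbf{a}_\phi$, not automatically of $\mathbf{w}^r$. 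The paper papers over this by tacitly asserting $\mathbf{A}\succ 0$ for the matrix in \eqref{eq:cal_A}, which is exactly the identifiability condition ($\mathbf{R}$ plus the span of the constraint steering vectors covering $\mathbb{C}^M$) under which the $\mathbf{w}$-subproblem is strongly convex and iterate convergence follows. Since you located the difficulty exactly where it lies and stated the needed extra hypothesis, your proposal matches, and in rigor exceeds, the paper's treatment; the residual caveat is inherited from the paper, not introduced by you.
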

	
	Next, we derive closed-form solutions for subproblems \eqref{eq:admm_w}, \eqref{eq:admm_theta} and \eqref{eq:admm_phi} at each iteration $r$. We drop the iteration index $r$ for {\black notational} simplicity. For convenience of presenting the solutions for subproblems \eqref{eq:admm_theta} and \eqref{eq:admm_phi}, we first give Lemma \ref{lem:simp_QCQP_y}, which provides a closed-form solution for a special type of complex SOCP problem.
	\begin{lem}\label{lem:simp_QCQP_y}
		Consider a complex SOCP given as 
		\begin{equation}\label{eq:simp_QCQP_y}
		\min _{x,y} \ a|x|^2+{\rm{Re}}\{ b^Hx  \}+\alpha y^2 + \beta y\quad {\rm{s.t.}}\ |x-d|+\delta y\leq c,
		\end{equation}
		where $a,\alpha\in\mathbb{R}>0$, $c,\delta\in\mathbb{R}\geq0$, $\beta\in\mathbb{R}$, and $b,d\in\mathbb{C}$. Define $\psi=\angle (2ad+b)$ and $r=|\frac{2ad+b}{2a}|$, then the optimal solution $(x^*,y^*)$ for {\black problem \eqref{eq:simp_QCQP_y}} is
		\begin{equation}
		\begin{aligned}
		y^*&=\min\{ -\frac{\beta}{2\alpha},\frac{2a\delta(c-r)-\beta}{2a\delta^2+2\alpha},c/\delta  \},\\
		x^*&=d-ce^{j\psi}+e^{j\psi}\max\{  c-r, \delta y^* \}.
		\end{aligned}
		\end{equation}
	\end{lem}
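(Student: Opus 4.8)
The plan is to exploit convexity and the nested structure of \eqref{eq:simp_QCQP_y}: first minimize over the complex variable $x$ with $y$ held fixed, then minimize the resulting value function over the real variable $y$. Since $\mathrm{Re}\{b^H x\}=\mathrm{Re}\{\bar b x\}$, I would begin by completing the square,
\[
a|x|^2+\mathrm{Re}\{b^H x\}=a\Bigl|x+\tfrac{b}{2a}\Bigr|^2-\tfrac{|b|^2}{4a},
\]
so that the $x$-objective is, up to the additive constant $-|b|^2/(4a)$, the scaled squared distance from $x$ to the unconstrained minimizer $-b/(2a)$. For fixed $y$ the constraint $|x-d|+\delta y\le c$ confines $x$ to the disk of radius $R\triangleq c-\delta y$ centered at $d$ (nonempty iff $R\ge 0$, i.e. $y\le c/\delta$), and minimizing over $x$ is exactly the Euclidean projection of $-b/(2a)$ onto this disk.

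Next I would make the geometry explicit. Writing $-\tfrac{b}{2a}=d-\tfrac{2ad+b}{2a}=d-re^{j\psi}$ (recall $r=|\tfrac{2ad+b}{2a}|$ and $\psi=\angle(2ad+b)$) shows the unconstrained center lies at distance $r$ from $d$ along direction $-e^{j\psi}$. The projection is therefore $x^*(y)=-b/(2a)$ in the interior case $r\le R$, and $x^*(y)=d-Re^{j\psi}$ in the boundary case $r>R$; both are unified as $x^*=d-ce^{j\psi}+e^{j\psi}\max\{c-r,\delta y\}$, which is the claimed formula once the optimal $y$ is substituted. Feeding $x^*(y)$ back, and using $|x^*(y)+\tfrac{b}{2a}|=r-R$ in the boundary case (a one-line computation), I obtain the reduced value function $g(y)$ on $(-\infty,c/\delta]$: in the interior regime $y\le y_b\triangleq(c-r)/\delta$ it equals $\alpha y^2+\beta y+\text{const}$, and in the boundary regime $y_b<y\le c/\delta$ it equals $a(\delta y+r-c)^2+\alpha y^2+\beta y+\text{const}$. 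The stationary points of these two convex quadratics are precisely $m_1=-\beta/(2\alpha)$ and $m_2=\frac{2a\delta(c-r)-\beta}{2a\delta^2+2\alpha}$, the first two arguments of the asserted minimum, and using $\delta y_b=c-r$ one verifies the two pieces share the same first derivative at $y_b$, so $g$ is $C^1$ and convex.

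The remaining and most delicate step is to argue that the constrained minimizer of the convex function $g$ is exactly $\min\{m_1,m_2,c/\delta\}$, rather than requiring a full case split on which regime is active. I would pivot on the identity $\alpha(y_b-m_1)=(a\delta^2+\alpha)(y_b-m_2)$, which follows from matching $g'$ at $y_b$; since $\alpha/(a\delta^2+\alpha)<1$, it forces $m_1$ and $m_2$ onto the \emph{same} side of $y_b$, with $m_1\le m_2$ when both lie left of $y_b$ and $m_2\le m_1$ when both lie right of it. A sign analysis of $g'(y_b)$ then closes the argument: if $g'(y_b)>0$ the minimizer is $m_1$, which is then the smallest of the three since $m_1<y_b\le c/\delta$; if $g'(y_b)<0$ the minimizer is $m_2$ when $m_2\le c/\delta$ and the feasibility boundary $c/\delta$ otherwise, and in both sub-cases the relation $m_2<m_1$ makes $\min\{m_1,m_2,c/\delta\}$ pick the correct value; if $g'(y_b)=0$ all three coincide at $y_b$. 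I expect the bookkeeping in this ordering argument to be the main obstacle, together with the degenerate case $\delta=0$, where the constraint decouples $y$ from $x$ and the formula is recovered under the convention $c/\delta=+\infty$. Assembling the optimal $y^*$ with the projection formula for $x^*(y^*)$ then completes the proof.
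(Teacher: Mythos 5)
Your proposal is correct and follows essentially the same route as the paper: fix $y$, solve the inner problem over $x$ in closed form (you via projection onto the disk $|x-d|\leq c-\delta y$ after completing the square, the paper via KKT conditions of the same strongly convex subproblem, yielding the identical formula $x^*=d-ce^{j\psi}+e^{j\psi}\max\{c-r,\delta y\}$), then minimize the resulting piecewise-quadratic convex function of $y$ on $(-\infty,c/\delta]$ and argue that the minimizer is exactly $\min\{-\frac{\beta}{2\alpha},\frac{2a\delta(c-r)-\beta}{2a\delta^2+2\alpha},c/\delta\}$. The only differences are cosmetic: your ordering identity $\alpha(y_b-m_1)=(a\delta^2+\alpha)(y_b-m_2)$ at the breakpoint organizes the same case analysis that the paper carries out through sign conditions on $\lambda_1=2\alpha(c-r)+\delta\beta$ and $\lambda_2=2\alpha c+2ar+\delta\beta$ followed by the observation that the active case always coincides with the minimum of the three candidates.
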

	\begin{proof}
		See Appendix A in supplementary materials. 
	\end{proof}
	\subsection{Solution for Subproblem \eqref{eq:admm_w}}
	Subproblem \eqref{eq:admm_w} with respect to $\mathbf{w}$ and $y$ is a convex SOCP, given as
	\begin{equation}\label{eq:opt_sub_wy}
	\begin{aligned}
	\min_{\mathbf{w}}\quad&\mathbf{w}^H\mathbf{A}\mathbf{w}+\textrm{Re}\{ \mathbf{b}^H\mathbf{w}\} + \alpha y^2 + \beta y\\
	\textrm{s.t.} \quad& \| \mathbf{w} \| \leq y,
	\end{aligned}
	\end{equation}
	where $\mathbf{A}\succ 0$, $\mathbf{b}$, $\alpha$, and $\beta$ are 
	\begin{equation}
	\begin{aligned}
	&\mathbf{A} = \mathbf{R}+\frac{\rho}{2}(\sum_{\theta\in\Theta}{\mathbf{a}}_{\theta}{\mathbf{a}}_{\theta}^H+\sum_k\sum_{\phi\in\Phi_k}{\mathbf{a}}_{\phi}{\mathbf{a}}_{\phi}^H),\label{eq:cal_A}
	\end{aligned}
	\end{equation}
	\begin{equation}
	\begin{aligned}
	&\mathbf{b}=-\frac{1}{2}[\sum_{\theta\in\Theta}(\lambda_\theta^H{\mathbf{a}}_{\theta}-\rho z_\theta^H{\mathbf{a}}_{\theta})+\sum_k\sum_{\phi\in\Phi_k}(\lambda_\phi^H{\mathbf{a}}_{\phi}-\rho z_\phi^H{\mathbf{a}}_{\phi})],\notag
	\end{aligned}
	\end{equation}
	\begin{equation}
	\begin{aligned}
	&\alpha = (|\Theta| + \sum_k|\Phi_k|)\frac{\rho}{2},\beta = \sum_{\theta\in\Theta}\eta_\theta-\rho y_\theta+\sum_k\sum_{\phi\in\Phi_k}\eta_\phi-\rho y_\phi.\notag
	\end{aligned}
	\end{equation}
	Problem \eqref{eq:opt_sub_wy} has a specific structure, whose optimal solution can be obtained in a closed-form based on bisection search. Details are provided in Lemma \ref{lem:bisec_wy}. 
	\begin{lem}\label{lem:bisec_wy}
		Let eigenvalue decomposition of $\mathbf{A}$ be $\mathbf{U}\bm{\Lambda}\mathbf{U}^H$, then the optimal $(\mathbf{w}^*,y^*)$ for problem \eqref{eq:opt_sub_wy} is 
		\begin{align*}
		\left\lbrace
		\begin{aligned}
		&y^*=0,\mathbf{w}^*=\mathbf{0},\  \textrm{if }\beta\geq \| \mathbf{b}\|,&\\
		&y^*=-\frac{\beta}{2\alpha},\mathbf{w}^*=\mathbf{w}(y^*),\  \textrm{if }\beta< \|\mathbf{b}\|,\| \mathbf{A}^{-1} \mathbf{b}\|\leq -\frac{\beta}{\alpha},&\\
		&f(y^*)=1, \mathbf{w}^*=\mathbf{w}(y^*),\ \textrm{otherwise},
		\end{aligned}
		\right.
		\end{align*}
		where $\mathbf{w}(y)=-\mathbf{U}^H\left[  2\bm{\Lambda}+(2\alpha  + \frac{\beta}{y})\mathbf{I} \right]^{-1}\mathbf{U}\mathbf{b},$
		and $f(y)$ is a strictly monotonic decreasing function given in (45) (in supplementary materials). Further, the unique solution for $f(y)=1$ can be obtained by bisection search for $y\in[\max\{0,-\frac{\beta}{2\alpha}\},\| \mathbf{A}^{-1}\mathbf{{b}}\|/2]$.
	\end{lem}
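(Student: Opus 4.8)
The plan is to exploit that problem \eqref{eq:opt_sub_wy} is jointly and strictly convex in $(\mathbf{w},y)$: the objective is strictly convex since $\mathbf{A}\succ\mathbf{0}$ and $\alpha>0$, and the second-order cone $\|\mathbf{w}\|\le y$ is convex. Hence the KKT conditions are necessary and sufficient, the minimizer is unique, and it suffices to exhibit in each regime a feasible point satisfying them. Writing a single multiplier $\nu\ge0$ for $\|\mathbf{w}\|\le y$, the stationarity conditions (computed via Wirtinger calculus) read $2\mathbf{A}\mathbf{w}+\mathbf{b}+\nu\,\mathbf{w}/\|\mathbf{w}\|=\mathbf{0}$ and $2\alpha y+\beta-\nu=0$, together with complementary slackness $\nu(\|\mathbf{w}\|-y)=0$. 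I would then organize the argument according to whether the cone constraint is active at the optimum.

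\emph{Degenerate and inactive branches.} If $\beta\ge\|\mathbf{b}\|$, then for any feasible $(\mathbf{w},y)$ the Cauchy--Schwarz inequality together with $\|\mathbf{w}\|\le y$ gives $\mathbf{w}^H\mathbf{A}\mathbf{w}+\textrm{Re}\{\mathbf{b}^H\mathbf{w}\}+\alpha y^2+\beta y\ge -\|\mathbf{b}\|\,y+\beta y=(\beta-\|\mathbf{b}\|)y\ge0$, while the origin attains the value $0$; this yields the first branch. Otherwise the constraint may be slack, in which case the unconstrained joint minimizer is $\mathbf{w}_{\mathrm{unc}}=-\tfrac12\mathbf{A}^{-1}\mathbf{b}$ with $y=-\beta/2\alpha$, and it is feasible precisely when $\tfrac12\|\mathbf{A}^{-1}\mathbf{b}\|\le -\beta/2\alpha$, i.e. $\|\mathbf{A}^{-1}\mathbf{b}\|\le-\beta/\alpha$. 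A direct check shows that at $y=-\beta/2\alpha$ the coefficient $2\alpha+\beta/y$ vanishes, so $\mathbf{w}(y)=\mathbf{w}_{\mathrm{unc}}$, matching the second branch; feasibility makes this the global optimum.

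\emph{Active branch.} In the remaining regime ($\beta<\|\mathbf{b}\|$ and $\|\mathbf{A}^{-1}\mathbf{b}\|>-\beta/\alpha$) the constraint must be active, so $\|\mathbf{w}^*\|=y^*>0$ and $\nu>0$. Substituting $\nu=2\alpha y+\beta$ and $\|\mathbf{w}\|=y$ into the $\mathbf{w}$-stationarity condition eliminates $\nu$ and gives $\big(2\mathbf{A}+(2\alpha+\beta/y)\mathbf{I}\big)\mathbf{w}=-\mathbf{b}$, which is exactly $\mathbf{w}=\mathbf{w}(y)$. It then remains to fix $y^*$ from the activity equation $\|\mathbf{w}(y)\|=y$. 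Writing $c_i$ for the components of $\mathbf{U}^H\mathbf{b}$ and $\lambda_i$ for the diagonal entries of $\bm{\Lambda}$, this becomes $f(y)=1$ with $f(y)=\|\mathbf{w}(y)\|/y=\big(\sum_i |c_i|^2/((2\lambda_i+2\alpha)y+\beta)^2\big)^{1/2}$; since each base $(2\lambda_i+2\alpha)y+\beta=y(2\lambda_i+\nu/y)>0$ increases in $y$, the function $f$ is strictly decreasing, so any root is unique.

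The main work---and the step I expect to be the obstacle---is verifying that this root lies in the stated bracket $[\max\{0,-\beta/2\alpha\},\,\|\mathbf{A}^{-1}\mathbf{b}\|/2]$, which requires evaluating $f$ at both endpoints under the case hypotheses. At the lower end I would show $f>1$: when $\beta<0$, letting $y\to-\beta/2\alpha$ gives $\mathbf{w}(y)\to\mathbf{w}_{\mathrm{unc}}$ and $f\to \alpha\|\mathbf{A}^{-1}\mathbf{b}\|/(-\beta)>1$ by the hypothesis $\|\mathbf{A}^{-1}\mathbf{b}\|>-\beta/\alpha$; when $\beta\ge0$, letting $y\to0^+$ gives $f\to\|\mathbf{b}\|/\beta>1$ by $\beta<\|\mathbf{b}\|$. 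At the upper end $y=\tfrac12\|\mathbf{A}^{-1}\mathbf{b}\|$ the case hypothesis guarantees $2\alpha+\beta/y>0$, so comparing denominators termwise yields $\|\mathbf{w}(y)\|^2\le\sum_i|c_i|^2/(2\lambda_i)^2=\tfrac14\|\mathbf{A}^{-1}\mathbf{b}\|^2=y^2$, hence $f\le1$. Strict monotonicity then locates the unique root inside the bracket and justifies the bisection search, completing the proof.
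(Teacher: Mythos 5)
Your proof is correct and follows essentially the same route as the paper's: a KKT analysis of the cone-constrained problem, a case split on whether the multiplier vanishes, reduction of the active case to the scalar equation $f(y)=1$, and the same monotonicity and endpoint-evaluation argument that brackets the unique root in $[\max\{0,-\tfrac{\beta}{2\alpha}\},\ \|\mathbf{A}^{-1}\mathbf{b}\|/2]$ for bisection. The only real difference is in the degenerate branch: where the paper proves a separate proposition (via a perturbation argument) that the origin is optimal if and only if $\beta\geq\|\mathbf{b}\|$, you dispose of that case with a direct Cauchy--Schwarz lower bound and rely on KKT \emph{sufficiency} plus strict convexity so that the ``only if'' direction (and the non-differentiability of the constraint at $\mathbf{w}=\mathbf{0}$) never needs to be confronted --- a slightly cleaner packaging of the same analysis.
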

	\begin{proof}
		See Appendix B in supplementary materials. 
	\end{proof}
	The computational cost for solving problem \eqref{eq:opt_sub_wy} per iteration includes: computing $\mathbf{b}$ with a complexity $\mathcal{O}(M(|\Theta|+\sum_k|\Phi_k|))$, computing $\mathbf{w}^*$ and $y^*$ with a complexity of $\mathcal{O}(M^2)$, and bisection search for $y$ with a complexity of $\mathcal{O}(\log_2\frac{1}{\Delta})$, where $\Delta$ is the numerical precision. In addition, \textcolor{black}{a one-time computational cost for finding the eigenvalues and eigenvectors of $\mathbf{A}$ with a complexity of $\mathcal{O}(M^{2.376})$} is required, since $\mathbf{A}$ does not change through ADMM iterations. In total, the computation complexity per iteration is $\mathcal{O}(M^2+M(|\Theta|+\sum_k|\Phi_k|)+\log_2\frac{1}{\Delta})$.
	\subsection{Solution for Subproblem \eqref{eq:admm_theta}}
	Subproblem \eqref{eq:admm_theta} is separable over each $(y_\theta,z_\theta),\theta\in\Theta$. Thus each optimal $(y_\theta^*,z_\theta^*),\theta\in\Theta$ can be obtained by individually solving the following problem,
	\begin{equation}\label{eq:sub_delta_theta}
	\begin{aligned}
	\min_{y_\theta,z_\theta}\ & a_\theta |z_\theta|^2 + \textrm{Re}\{ b_\theta^H z_\theta\} + \alpha_\theta y_\theta^2 + \beta_\theta y_\theta\\
	\textrm{s.t.}\ &|z_\theta-1| + \delta y_\theta \leq c_\theta,
	\end{aligned}
	\end{equation}
	where $a_\theta$, $b_\theta$, $\alpha_\theta$, and $\beta_\theta$ are 
	\begin{align*}
	&a_\theta = \alpha_\theta=\frac{\rho}{2},\ b_\theta = -\lambda_\theta-\rho\mathbf{w}^H\mathbf{a}_\theta,\ \beta_\theta = -\eta_\theta-\rho y.
	\end{align*}
	Problem \eqref{eq:sub_delta_theta} is a complex SOCP with respect to $(y_\theta,z_\theta)$. It has the same mathematical form as the SOCP studied in Lemma \ref{lem:simp_QCQP_y}. Specify $a=a_\theta$, $b=b_\theta$, $c=c_\theta$, $\alpha=\alpha_\theta$, $\beta=\beta_\theta$, $\psi_\theta=\angle (2a_\theta + b_\theta)$, and $d=1$, we can obtain the closed-form solution for $(y_\theta^*,z_\theta^*)$, given as 
	\begin{equation}\label{eq:yz_theta}
	\begin{aligned}
	y_\theta^* &= \min\{ -\frac{\beta_\theta}{2\alpha_\theta},\frac{\delta(2a_\theta c_\theta- |2a_\theta+ b_\theta|)-\beta_\theta}{2a_\theta\delta^2+2\alpha_\theta},c_\theta/\delta  \},\\
	z_\theta^* &= 1-e^{j \psi_\theta}\min\{  \frac{|2a_\theta+b_\theta|}{2a_\theta}, c_\theta-\delta y_\theta^*  \}
	\end{aligned}
	\end{equation}
	The effort to solve for $(y_\theta,z_\theta)$ involves calculating the inner product $\mathbf{w}^H\mathbf{a}_{\theta}$ which has a complexity of  $\mathcal{O}(M)$ and updating $(y_\theta,z_\theta)$ which has a complexity of $\mathcal{O}(1)$. The overall complexity, for all $\theta\in\Theta$, is $\mathcal{O}((M+1)|\Theta|)$ per iteration.
	
	\subsection{Solution for Subproblem \eqref{eq:admm_phi}} 
	Subproblem \eqref{eq:admm_phi} with respect to $\{y_\phi, z_\phi \}$ and $t$ is equivalent to
	\begin{equation}\label{eq:icmv_sub_epsilon_reformula}
	\begin{aligned}
	\min_{t,\{ y_\phi,z_\phi \}}\ &\mu t
	+\sum_k\sum_{\phi\in\Phi_k}a_\phi |z_\phi|^2 + \textrm{Re}\{ b_\phi^H z_\phi\} + \alpha_\phi y_\phi^2 + \beta_\phi y_\phi\\
	\textrm{s.t.}\quad &|z_\phi|+\delta y_\phi \leq t c_{\phi}/\gamma_k,\  \forall\phi\in\Phi_k,\ k=1,\ldots,K,
	\end{aligned}
	\end{equation}
	where $a_\phi$, $b_\phi$, $\alpha_\phi$, and $\beta_\phi$ are 
	\begin{align*}
	&a_\phi = \alpha_\phi=\frac{\rho}{2},\ b_\phi = -\lambda_\phi-\rho\mathbf{w}^H\mathbf{a}_\phi,\ \beta_\phi = -\eta_\phi-\rho y.
	\end{align*}
	Problem \eqref{eq:icmv_sub_epsilon_reformula} is a convex SOCP with respect to $t$ and all $(y_\phi,z_\phi  )$. By exploiting its special problem structure, we can also solve it in closed-form. In what follows, we will present the way to exploit problem structure to obtain the closed-form solution. For any fixed $\bar{t}\in \mathbb{R}$, problem \eqref{eq:icmv_sub_epsilon_reformula} is separable over each $(y_\phi,z_\phi),\forall \phi \in \Phi_k,\forall k$, given as 
	\begin{equation}\label{eq:yz_phi}
	\begin{aligned}
	\min_{ y_\phi,z_\phi }\ &a_\phi |z_\phi|^2 + \textrm{Re}\{ b_\phi^H z_\phi\} + \alpha_\phi y_\phi^2 + \beta_\phi y_\phi\\
	\textrm{s.t.}\quad &|z_\phi|+\delta y_\phi \leq \bar{t} c_{\phi}/\gamma_k.
	\end{aligned}
	\end{equation}
	According Lemma \ref{lem:simp_QCQP_y}, the closed-form for optimal $(y^*_\phi,z_\phi^*)$ can be obtained by specifying $b=b_\phi$, $c=\bar{t}c_\phi/\gamma_k$, $\alpha=\alpha_\phi$, $\beta=\beta_\phi$, $\psi_\phi=\angle b_\phi$, and $d=0$, $\forall \phi\in\Phi_k,\forall k$, given as 
	\begin{equation}\label{eq:yz_phi_solu}
	\begin{aligned}
	y_\phi^* &= \min\{ -\frac{\beta_\phi}{2\alpha_\phi},\frac{\delta(2a_\phi \bar{t}c_\phi/\gamma_k- | b_\phi|)-\beta_\phi}{2a_\phi\delta^2+2\alpha_\phi},\frac{\bar{t}c_\phi}{\gamma_k\delta}\},\\
	z_\phi^* &= -e^{j \psi_\phi}\min\{  \frac{|b_\phi|}{2a_\phi}, \bar{t}c_\phi/\gamma_k-\delta y_\phi^*  \}.
	\end{aligned}
	\end{equation}
	Notice $(y_\phi^*,z_\phi^*)$ is a function of $\bar{t}$, which implies problem \eqref{eq:yz_phi} can be reduced to an optimization problem with respect to a single variable $t$. Hence, {\black the key becomes} how to find optimal $t^*$ for the reduced problem. In Lemma \ref{lem:bisec_SOCP_t}, we first give the equivalent optimization problem with respect to $t$, and then in Proposition \ref{prop:solu_t}, \textcolor{black}{we show that the optimal $t^*$ can be obtained by sorting a set of real numbers.}
	\begin{lem}\label{lem:bisec_SOCP_t}
		The optimal $t^*$ for problem \eqref{eq:icmv_sub_epsilon_reformula} can be obtained by solving the following unconstrained convex problem
		\begin{equation}\label{eq:opt_single_t}
		\min_t\ \mu t + f(t),
		\end{equation}
		where $f(t)\triangleq \sum_k\sum_{\phi\in\Phi_k} f_\phi(t) $  and $f_\phi(t)$ is a convex function defined as 
		\begin{equation}
		f_\phi(t)=\left\lbrace
		\begin{aligned}
		&a_{\phi,1} t^2 + b_{\phi,1}t,\  t\leq\bar{t}_{\phi,1},&\\
		&a_{\phi,2} t^2 + b_{\phi,2}t + c_{\phi,2},\  \bar{t}_{\phi,1}<t\leq\bar{t}_{\phi,2},&\\
		&c_{\phi,3},\ \bar{t}_{\phi,2}<t,&
		\end{aligned}
		\right.
		\end{equation}
		and  
		\begin{equation}\label{eq:opt_t_coeff}
		\begin{aligned}
		&a_{\phi,1}=\frac{\alpha_\phi c_\phi^2}{\gamma_k^2\delta^2},\ b_{\phi,1}=\frac{\beta_\phi c_\phi}{\gamma_k\delta},\ c_{\phi,3}=-\frac{b_\phi^2}{4a_\phi}-\frac{\beta_\phi^2}{4\alpha_\phi},\\
		&a_{\phi,2}=\frac{\alpha_\phi a_\phi c_\phi^2}{\gamma_k^2(\alpha_\phi+a_\phi \delta^2)},\ b_{\phi,2} = \frac{\delta a_\phi c_\phi \beta_\phi - \alpha_\phi c_\phi |b_\phi| }{\gamma_k(\alpha_\phi+a_\phi \delta^2)},\\
		&c_{\phi,2}=-\frac{(\delta|b_\phi| + \beta_\phi)^2}{4(\alpha_\phi + a_\phi \delta^2)},\ \bar{t}_{\phi,1} = -\frac{\gamma_k(|b_\phi|\delta^2 + \beta_\phi\delta )}{2 \alpha_\phi c_\phi},\\
		&\bar{t}_{\phi,2} = -\frac{\gamma_k(\beta_\phi\delta - |b_\phi|\alpha_\phi/a_\phi  )}{2 \alpha_\phi c_\phi}.
		\end{aligned}
		\end{equation}
		Further, $f(t)$ is a smooth convex function.
	\end{lem}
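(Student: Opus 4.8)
The plan is to eliminate the inner variables $\{y_\phi,z_\phi\}$ from problem \eqref{eq:icmv_sub_epsilon_reformula} by partial minimization, thereby reducing it to the single-variable problem \eqref{eq:opt_single_t}. First I would fix $t=\bar t$ and use the fact, already noted in the text, that \eqref{eq:icmv_sub_epsilon_reformula} then decouples across the pairs $(y_\phi,z_\phi)$ into the independent subproblems \eqref{eq:yz_phi}. Each of these is an instance of the SOCP treated in Lemma \ref{lem:simp_QCQP_y} with $d=0$, whose unique minimizer is given in closed form by \eqref{eq:yz_phi_solu}. Defining $f_\phi(t)$ to be the optimal value of the $\phi$-subproblem as a function of $t$ (through $c=\bar t c_\phi/\gamma_k$), I would substitute \eqref{eq:yz_phi_solu} back into the objective of \eqref{eq:yz_phi}; summing over all $\phi\in\Phi_k$ and all $k$ gives $f(t)=\sum_k\sum_{\phi\in\Phi_k}f_\phi(t)$, so the outer minimization is exactly $\min_t\ \mu t+f(t)$.

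The core computation is to show that $f_\phi$ has the stated piecewise form with the coefficients in \eqref{eq:opt_t_coeff}. The key observation is that the two nested $\min$ operators in \eqref{eq:yz_phi_solu} partition the range of $t$ into three regimes, each selecting a different active term. For small $t$ the constraint drives $y_\phi^*=\bar t c_\phi/(\gamma_k\delta)$ and $z_\phi^*=0$; for intermediate $t$ the constraint is active with the interior minimizer $y_\phi^*=(\delta(2a_\phi\bar t c_\phi/\gamma_k-|b_\phi|)-\beta_\phi)/(2a_\phi\delta^2+2\alpha_\phi)$ and $|z_\phi^*|=\bar t c_\phi/\gamma_k-\delta y_\phi^*$; for large $t$ the constraint is inactive and $(y_\phi^*,z_\phi^*)$ is the unconstrained optimum $(-\beta_\phi/(2\alpha_\phi),\,-b_\phi/(2a_\phi))$. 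A short calculation (using the phase-alignment identity $\textrm{Re}\{b_\phi^H z_\phi\}=-|b_\phi|\,|z_\phi|$ on the active boundary) shows that substituting these three cases yields, respectively, $a_{\phi,1}t^2+b_{\phi,1}t$, then $a_{\phi,2}t^2+b_{\phi,2}t+c_{\phi,2}$, then the constant $c_{\phi,3}=-|b_\phi|^2/(4a_\phi)-\beta_\phi^2/(4\alpha_\phi)$. The breakpoints are recovered by equating the competing terms of the $\min$: $\bar t_{\phi,1}$ comes from setting $\bar t c_\phi/(\gamma_k\delta)$ equal to the interior minimizer, while $\bar t_{\phi,2}$ comes from the condition that the unconstrained optimum just meets the constraint, i.e. $|b_\phi|/(2a_\phi)-\delta\beta_\phi/(2\alpha_\phi)=\bar t c_\phi/\gamma_k$; matching these to \eqref{eq:opt_t_coeff} is routine algebra.

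Finally I would establish that $f$ is convex and smooth. Convexity follows structurally: problem \eqref{eq:icmv_sub_epsilon_reformula} is jointly convex in $(t,\{y_\phi,z_\phi\})$ because each constraint $|z_\phi|+\delta y_\phi\le t c_\phi/\gamma_k$ is an affine second-order cone constraint, and partial minimization of a jointly convex function over $\{y_\phi,z_\phi\}$ leaves a convex function of $t$; hence each $f_\phi$ and their sum $f$ are convex. For smoothness, since the objective of \eqref{eq:yz_phi} is strictly convex in $(y_\phi,z_\phi)$ (because $a_\phi,\alpha_\phi>0$), the minimizer is unique and varies continuously with $t$, so the value function $f_\phi$ is continuously differentiable with derivative equal to $-c_\phi/\gamma_k$ times the optimal multiplier; equivalently one checks directly that adjacent pieces agree in value and first derivative at $\bar t_{\phi,1}$ and $\bar t_{\phi,2}$.

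The hard part will be the bookkeeping in this last step: confirming that the three regimes occur in the order small-$t\to$ intermediate-$t\to$ large-$t$ as $t$ increases, i.e. that $\bar t_{\phi,1}\le\bar t_{\phi,2}$ (which amounts to comparing the slopes $1/\delta$ and $a_\phi\delta/(a_\phi\delta^2+\alpha_\phi)$ of the competing $\min$ terms), and that the pieces stitch together in a $C^1$ manner with nondecreasing slopes so that $f_\phi$ is genuinely convex. The coefficient matching and the sign handling in the phase-alignment reduction are elementary but must be carried out without error.
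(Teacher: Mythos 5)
Your proposal is correct and follows essentially the same route as the paper: the paper's entire proof is the one-line instruction to substitute the closed-form solution \eqref{eq:yz_phi_solu} back into \eqref{eq:yz_phi}, which is precisely the partial-minimization-and-coefficient-matching computation you spell out (including the three regimes of the nested $\min$, the breakpoint identification, and the $C^1$ stitching that the paper leaves implicit). Your additional structural argument for convexity of $f$ via partial minimization of a jointly convex function is a clean way to justify the claims the paper asserts without comment.
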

	\begin{proof}
		Substituting \eqref{eq:yz_phi_solu} into \eqref{eq:yz_phi} completes the proof.
	\end{proof}
	
	\begin{prop}\label{prop:solu_t}
		Let $\{ \tilde{t}_\ell\}$  be an increasing sequence consisted by all $\{ \bar{t}_{\phi,1} \}$ and $\{ \bar{t}_{\phi,2} \}$, denote $\partial f({t})$ as the derivative of $f(t)$ in Lemma \ref{lem:bisec_SOCP_t}. If $\partial f(\tilde{t}_1) > -\mu$, then the optimal $t^*$ for problem \eqref{eq:opt_single_t} is 
		$$t^*=-\frac{\sum_k\sum_{\phi\in\Phi_k} b_{\phi,1}+\mu}{2\sum_k\sum_{\phi\in\Phi_k} a_{\phi,1}}.$$ 
		Otherwise there must exist $\tilde{t}_\ell$ satisfying $\partial f(\tilde{t}_\ell) \leq -\mu$ and $\partial f(\tilde{t}_{\ell+1}) \geq -\mu$, and the optimal $t^*$ for problem \eqref{eq:opt_single_t} is 
		\begin{equation}
		t^*=-\frac{\sum_{\phi\in\Omega_1} b_{\phi,1}+\sum_{\phi\in\Omega_2}b_{\phi,2}+\mu}{2(\sum_{\phi\in\Omega_1} a_{\phi,1}+\sum_{\phi\in\Omega_2}a_{\phi,2})},
		\end{equation}
		where sets $\Omega_1$ and $\Omega_2$ are defined as 
		$$
		\Omega_1=\{  \phi | \bar{t}_{\phi,1}\geq \tilde{t}_{\ell+1} \},\ \Omega_2 = \{   \phi |  \bar{t}_{\phi,1}\leq \tilde{t}_{\ell}, \bar{t}_{\phi,2}\geq \tilde{t}_{\ell+1}  \}.
		$$
	\end{prop}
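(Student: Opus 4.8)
The plan is to exploit the fact that, by Lemma~\ref{lem:bisec_SOCP_t}, the objective $g(t)\triangleq \mu t + f(t)$ of problem~\eqref{eq:opt_single_t} is a smooth convex function of the single real variable $t$. Hence its minimizer $t^*$ is characterized by the stationarity condition $g'(t^*)=0$, equivalently $\partial f(t^*)=-\mu$. First I would record that, because $f$ is convex, $\partial f$ is a continuous nondecreasing function, and because each $f_\phi$ is quadratic with a strictly positive leading coefficient on $(-\infty,\bar t_{\phi,1}]$ and constant on $(\bar t_{\phi,2},\infty)$, the objective is coercive, so the stationary point exists and is unique. The whole argument then reduces to locating the unique $t^*$ at which the piecewise-linear curve $\partial f$ crosses the level $-\mu$.

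Next I would differentiate $f_\phi$ using the coefficients in \eqref{eq:opt_t_coeff}: on $(-\infty,\bar t_{\phi,1}]$ one has $f_\phi'(t)=2a_{\phi,1}t+b_{\phi,1}$, on $(\bar t_{\phi,1},\bar t_{\phi,2}]$ one has $f_\phi'(t)=2a_{\phi,2}t+b_{\phi,2}$, and on $(\bar t_{\phi,2},\infty)$ one has $f_\phi'(t)=0$, where smoothness of $f$ guarantees that these three affine pieces agree at the two breakpoints. Merging all breakpoints $\{\bar t_{\phi,1}\}\cup\{\bar t_{\phi,2}\}$ into the sorted sequence $\{\tilde t_\ell\}$ ensures that no breakpoint lies in the interior of any cell $[\tilde t_\ell,\tilde t_{\ell+1}]$; consequently every $f_\phi$ stays inside a single one of its three pieces on that cell, so $\partial f$ is \emph{affine} on each cell.

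I would then split into the two cases of the statement. If $\partial f(\tilde t_1)>-\mu$, monotonicity forces $\partial f(t)>-\mu$ for all $t\ge \tilde t_1$, so the crossing must occur strictly to the left of the smallest breakpoint, where every $\phi$ is in its first piece and $\partial f(t)=\sum_{k}\sum_{\phi\in\Phi_k}(2a_{\phi,1}t+b_{\phi,1})$; solving $\partial f(t^*)=-\mu$ yields the first formula. Otherwise $\partial f(\tilde t_1)\le-\mu$, and since $\partial f$ eventually equals $0\ge-\mu$, continuity and monotonicity furnish an index $\ell$ with $\partial f(\tilde t_\ell)\le-\mu\le\partial f(\tilde t_{\ell+1})$, whence $t^*\in[\tilde t_\ell,\tilde t_{\ell+1}]$. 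On this cell I would classify each $\phi$: those with $\bar t_{\phi,1}\ge\tilde t_{\ell+1}$ (set $\Omega_1$) remain in their first quadratic piece, those with $\bar t_{\phi,1}\le\tilde t_\ell$ and $\bar t_{\phi,2}\ge\tilde t_{\ell+1}$ (set $\Omega_2$) are in their second piece, and the remaining $\phi$ (with $\bar t_{\phi,2}\le\tilde t_\ell$) contribute zero. Summing gives $\partial f(t)=\sum_{\phi\in\Omega_1}(2a_{\phi,1}t+b_{\phi,1})+\sum_{\phi\in\Omega_2}(2a_{\phi,2}t+b_{\phi,2})$ on the cell, and setting this equal to $-\mu$ and solving for $t$ reproduces the second formula.

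The main obstacle I anticipate is the bookkeeping that justifies the membership rules defining $\Omega_1$ and $\Omega_2$: I must verify that the inequalities $\bar t_{\phi,1}\ge\tilde t_{\ell+1}$ and $\{\bar t_{\phi,1}\le\tilde t_\ell,\ \bar t_{\phi,2}\ge\tilde t_{\ell+1}\}$ exactly characterize which of the three affine pieces of $f_\phi'$ is active throughout the cell $[\tilde t_\ell,\tilde t_{\ell+1}]$, and that the omitted indices genuinely contribute nothing because they sit in the flat piece. This rests on the fact that every $\bar t_{\phi,1}$ and $\bar t_{\phi,2}$ already appears among the $\tilde t_\ell$, so no piece boundary can fall strictly inside a cell. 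A secondary point to check is well-definedness in degenerate situations (for instance $\mu=0$, ties among the breakpoints, or an empty $\Omega_1\cup\Omega_2$), where I would fall back on the strict positivity of the leading coefficients $a_{\phi,1},a_{\phi,2}$ to keep the denominators positive and the minimizer well defined.
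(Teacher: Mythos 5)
Your proposal is correct and follows essentially the same route as the paper's proof: both reduce the problem to the first-order condition $\partial f(t^*)=-\mu$, exploit the piecewise-linear, nondecreasing structure of $\partial f$ over the sorted breakpoints $\{\tilde t_\ell\}$, identify which affine piece of each $f_\phi'$ is active on the relevant cell (exactly the sets $\Omega_1,\Omega_2$), and solve the resulting affine equation. The only cosmetic difference is that the paper packages $\partial f(t)$ as $\sum_k\sum_{\phi\in\Phi_k}\min\{2a_{\phi,1}t+b_{\phi,1},\,2a_{\phi,2}t+b_{\phi,2},\,0\}$ while you track the three pieces of each $f_\phi'$ directly; the substance is identical.
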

	\begin{proof}
		See Appendix C in supplementary materials. 
	\end{proof}

	By \eqref{eq:yz_phi_solu} and Proposition \ref{prop:solu_t}, the computational {\black cost} for solving problem~\eqref{eq:icmv_sub_epsilon_reformula} includes: computing the inner products $\{\mathbf{w}^H\mathbf{a}_{\phi}\}$ with a complexity of $\mathcal{O}(M\sum_k|\Phi_k|)$, computing the coefficients in \eqref{eq:opt_t_coeff} with a complexity of $\mathcal{O}(\sum_k|\Phi_k|)$, sorting $2\sum_k|\Phi_k|$ points for finding $t^*$ with a complexity of $\mathcal{O}((2\sum_k|\Phi_k|)\log_2(2\sum_k|\Phi_k|))$, and extracting $\{ y_\phi^*,z_\phi^* \}$ by~\eqref{eq:yz_phi_solu} with a complexity of $\mathcal{O}(\sum_k|\Phi_k|)$. Totally, the complexity is $\mathcal{O}(M\sum_k|\Phi_k|+(2\sum_k|\Phi_k|)\log_2(2\sum_k|\Phi_k|)+\sum_k|\Phi_k|)$.
	\subsection{Update $\{\lambda_\theta\}$ and $\{\lambda_\phi\}$} 
	See \eqref{eq:admm_lamtheta}-\eqref{eq:admm_etaphi}. Since the inner products $\mathbf{w}^H\mathbf{a}_\theta$ and $\mathbf{w}^H\mathbf{a}_\phi$ are calculated when solving subproblems \eqref{eq:admm_theta} and \eqref{eq:admm_phi}. The complexity for updating $\{\eta_{\phi}, \lambda_\theta\}$ and $\{\eta_{\phi},\lambda_\phi\}$ is only $\mathcal{O}(|\Theta|+\sum_k|\Phi_k|)$.
	
	\subsection{Proposed ADMM Algorithm}\label{subsec:admm}
	We summarize the proposed ADMM algorithm for the reformulated P-ICMV formulation \eqref{eq:admm_picmv} in Algorithm \ref{Alg}. The total computation complexity of the proposed ADMM algorithm per iteration is $\mathcal{O}(M^2+(M+1)(C_1+C_2)+\log_2\frac{1}{\Delta}+C_2(\log_2(C_2)+1))$ where $C_1=|\Theta|$ and $C_2=\sum_k|\Phi_k|$, and the complexity for {\black each subproblem} is summarized in Table~\ref{tab:complexity}. It should be noticed that subproblems \eqref{eq:admm_theta} and \eqref{eq:admm_phi} (with given $t$) are separable across $\theta\in\Theta$ and $\phi\in\Phi_k,\forall k$ respectively, parallel implementation for solving these subproblems can further improve the implementation efficiency.
	\begin{table}
	\centering
	\caption{Computation complexity of each subproblem.}
	\label{tab:complexity}
    \begin{tabular}{c c}
    \toprule
        & Complexity \\\midrule
       Subproblem~\eqref{eq:admm_w} & $\mathcal{O}(M^2+M(C_1+C_2)+\log_2\frac{1}{\Delta})$ \\
       Subproblem~\eqref{eq:admm_phi} & $\mathcal{O}((M+1)|\Theta|)$\\
       Subproblem~\eqref{eq:admm_lamtheta} & $\mathcal{O}(MC_2+2C_2\log_2(2C_2)+C_2)$\\
       Subproblem~\eqref{eq:admm_lamtheta}-\eqref{eq:admm_etaphi} &  $\mathcal{O}(C_1+C_2)$\\ \bottomrule
    \end{tabular}
    \end{table}
	
	\renewcommand{\algorithmicrequire}{\textbf{Input:}} 
	\renewcommand{\algorithmicensure}{\textbf{Output:}} 
	\begin{algorithm}[ht]
		\caption{Proposed ADMM Algorithm for Problem \eqref{eq:admm_picmv}}
		\label{Alg}
		\begin{algorithmic}[1]
			\Require
			$\mathbf{R}$, $\{ \mathbf{a}_\theta  \},\{ \mathbf{a}_\phi \}$, $\{  c_\theta\},\{  c_\phi \}$, $\{ \gamma_k \}$, $\rho$, $\mu,\delta$
			\State Compute $\mathbf{A}$ by \eqref{eq:cal_A};
			\For{$r=0,1,\ldots,$ until meet some convergence criteria}
			\State Update $\mathbf{w}^{r+1}$ and $y^{r+1}$by Lemma \ref{lem:bisec_wy};
			\State Compute $\{ \mathbf{a}_\theta^H\mathbf{w}^{r+1}  \},\{ \mathbf{a}_\phi^H\mathbf{w}^{r+1} \}$
			\State Update $\{ z_\theta^{r+1}\}$ and $\{ y_\theta^{r+1} \} $by \eqref{eq:yz_theta};
			\State Compute $t^{r+1}$ by Proposition \ref{prop:solu_t};
			\State Update $\{ z_\phi^{r+1} \}$ and $\{  y_\phi^{r+1} \}$ by \eqref{eq:yz_phi_solu};
			\State Update $\{ \eta_{\theta}^{r+1},\lambda_\theta^{r+1} \}$, $\{  \eta_{\phi}^{r+1},\lambda_{\phi}^{r+1} \}$ by \eqref{eq:admm_lamtheta}-\eqref{eq:admm_etaphi};
			\EndFor
			\Ensure
			The P-ICMV beamformer $\mathbf{w}^*$.
		\end{algorithmic}
	\end{algorithm}
	
	\begin{rmk}[ADMM for Beamforming:]
	ADMM is a primal-dual algorithm which has recently been used in many array beamforming applications, e.g.,~\cite{fan2019robust,liang2018sparse,cheng2017constant,yu2020quadratic,gemechu2019beampattern,feng2020phased}, just to mention a few. The flexibility of variable splitting technique makes it adaptable to different beamforming problem structure which results specific primal subproblems and subsequently different solution techniques are developed to different primal subproblems. The proposed ADMM algorithm involves solving special SOCPs (see~\eqref{eq:opt_sub_wy}, \eqref{eq:sub_delta_theta}, and \eqref{eq:icmv_sub_epsilon_reformula}) in the primal steps and the developed technique is different from previous techniques which are used to solve other special subproblems, e.g., quadratically constrained quadratic programs~\cite{fan2019robust,liang2018sparse,cheng2017constant,yu2020quadratic,gemechu2019beampattern}, complex $\ell_p$ norm problem~\cite{fan2019robust,liang2018sparse}, projection problem over discrete set~\cite{yu2020quadratic,feng2020phased}.
	\end{rmk}

	\section{Numerical Simulation}\label{sec:simul}
	{\black Three different beamforming applications are considered in this section to demonstrate the effectiveness of the proposed P-ICMV beamformer.} The first is robust adaptive beamforming with antenna array, which is widely used in radar and wireless communication systems. The major challenge for this application is how to robustly handle various errors, including DoA error, SV mismatch error, and covariance estimation error. The robustness of the P-ICMV beamformer against these errors is demonstrated. The second application is speech enhancement with microphone array in hearing aids, where the major difficulty is handling multiple interfering speakers with limited number of microphones. The efficiency of suppressing multiple interferences with few number of microphones is verified. The third application is beam pattern synthesis for a large size array. A specified beam pattern for a $30\times 30$ antenna array is synthesized, {\black where tens of thousands of inequality constraints are enforced to achieve a desired beam pattern. The computational efficiency of the proposed ADMM algorithm is demonstrated.} For all simulations, the ADMM algorithm stops when both the primal-dual feasibility gap and residual~\cite{boyd2011distributed} less than $10^{-5}$ or the number of iterations exceed $10^3$.
	
	\subsection{Robust Adaptive Beamforming with Antenna Array}\label{subsec:rbst}
	We consider a uniform linear antenna array of $M=20$ elements with {\black half-wave} length spacing. The desired signal is at $\theta_0=-5^\circ$, and three interfering signals are at $\theta_1=-60^\circ$, $\theta_2=-20^\circ$ and $\theta_3=45^\circ$ respectively. For each interfering signal, the interference-to-noise ratio (INR) is fixed at $30$ dB. {\black The covariance matrix $\mathbf{R}$ is estimated by averaging a finite number of training snapshots and $10^{-6}\times\lambda_{\rm max}(\mathbf{R})$ is added to its diagonal to stablize the numerical computation. In all simulations, the desired signal is always present in the snapshots and both the DoA error and the array calibration error are considered.} {\black Specifically, the estimated DoA $ \hat{\theta}_k$ for each source $k$ is uniformly drawn from $\mathcal{U}({\theta}_k-2^\circ,{\theta}_k+2^\circ)$ and the largest DoA estimation error is presumed to be $\Delta=4^\circ$. This prior konwledge will be used to specify different beamformer's parameter.} {\black The presumed SVs with $1^\circ$ separation over region $[-90^\circ,90^\circ]$ are assumed to b e available.} For each antenna element, its gain and phase errors are randomly generated from $\mathcal{N}(1, 0.02^2)$ and $\mathcal{N}(0, (0.01\pi)^2)$ respectively. The beamforming output SINR is used as performance metric, and output SINRs under different input SNR and number of snapshots conditions are studied. For each simulation condition, the output SINR is averaged by $100$ independent Monte Carlo runs. 
	
	Five representative robust beamformers in the literature are selected {\black in the} comparison. {\black The first} beamformer is the loading sample matrix inversion (LSMI) beamformer~\cite{Elnashar2006Diagonal} {\black and the loading factor set to be $10\lambda_{\textrm{min}}({\mathbf{R}})$.} The second beamformer is the worst-case beamformer~\cite{Vorobyov2003worst} and the robust parameter $\epsilon$ is set to be $3$. The third beamformer is the eigenspace beamformer \cite{Feldman1996projection} where the number of interferences is assumed to be exactly known. The fourth one is the reconstruction beamfomer~\cite{Gu2012Reconstruction} and the last is the SV estimation beamformer \cite{Khabbazibasmenj2012Estimation}. {\black The desired signal region for these two beamformers is specified as $\hat{\theta}_0+[-\Delta,\Delta]$. For the P-ICMV beamformer, the discrete angle set $\Theta$ is optimistically specified over region $\hat{\theta}_0+[-\Delta/2,\Delta/2]$, i.e., $\Theta=\hat{\theta}_0+\{ -2^\circ,-1^\circ,0^\circ,1^\circ,2^\circ\}$, and the corresponding parameters $c_\theta,\forall \theta\in\Theta,$ are set to be proportional to angle errors, i.e., $c_\Theta=\{6,4,2,4,6  \}\times 10^{-1}$.} In such a case, the sufficient condition in Proposition \ref{prop:suffi_fea} for $\delta$ is about $\delta\leq 0.24$. {\black The angle set $\Phi_k$ for each interfering signal $k$ is specified  over region $\hat{\theta}_k+[-\Delta,\Delta]$. To further capture the environment information for adaptive interference suppression, we use ${\mathbf{R}}$ and the presumed SVs $\{ \mathbf{a}_\phi\}$ to specify parameters $\{ c_\phi \}$ and $\{  \gamma_k \}$. Precisely, for each $\Phi_k$, $c_\phi$ is set as $c_\phi=\hat{\sigma}_\phi^{-1}/\max_{\phi\in\Phi_k}\{ \hat{\sigma}_\phi^{-1}   \}$, where $\hat{\sigma}_\phi^2=1/\mathbf{a}_\phi^H{\mathbf{R}}^{-1}\mathbf{a}_\phi$ is the estimation of the so-called Capon spectrum~\cite{capon1969high}. For $\{\gamma_k \}$, we specify them as $\gamma_k=\beta_k/\max_{k^\prime}\{  \beta_{k^\prime}  \}$, where $\beta_k=\sum_{\phi\in\Phi_k}\hat{\sigma}_\phi^2$.}
	
	In Figs.~\ref{fig:simul_ant_snr} and~\ref{fig:simul_ant_snap}, the output SINRs of the P-ICMV beamformer are compared with other five beamformers. Parameter $\mu$ and $\delta$ of the P-ICMV beamformer are set as $\mu=10\lambda_{\textrm{max}}({\mathbf{R}})$ and  $\delta=10^{-2}$ respectively, and the parameter $\rho$ in ADMM algorithm is $\rho=10\mu$.  Fig.~\ref{fig:simul_ant_snr} plots the output SINRs under different SNR condition with $40$ snapshots. It can be observed that the P-ICMV beamformer achieves a stable output SINR and there is about $3$dB performance degradation from the optimal SINR among all SNR conditions. The performance of the reconstruction beamformer is similar (less than 1 dB difference) to the P-ICMV beamformer since {\black both of them} exploit the presumed SVs of interferences in the beamformer design. {\black The other four beamformers do not utilize the presumed SVs of interferences, they suffer more performance degradation as SNR increases.} In Fig.~\ref{fig:simul_ant_snap}, the output SINRs under different number of snapshots are compared while the input SNR is fixed at $15$dB. The reconstructioin and P-ICMV beamformers do not show obvious performance degradation as the number of snapshots decreases while the other beamformers do. {\black Since P-ICMV beamformer enforces constraints to suppress interferences, it performs better than the reconstructioin beamformer in the case with only 2 snapshots.} 
	\begin{figure}[h]
		\centering
		\includegraphics [width=0.6\linewidth]{ 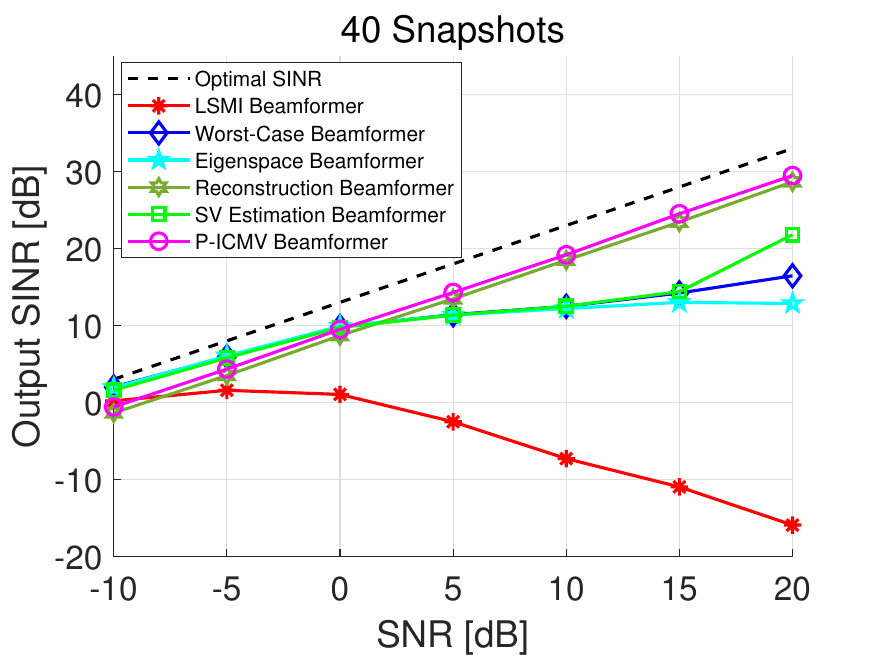}
		\caption{Output SINRs with different SNRs.}
		\label{fig:simul_ant_snr}
	\end{figure}
	\begin{figure}[h]
		\centering
		\includegraphics [width=0.6\linewidth]{ 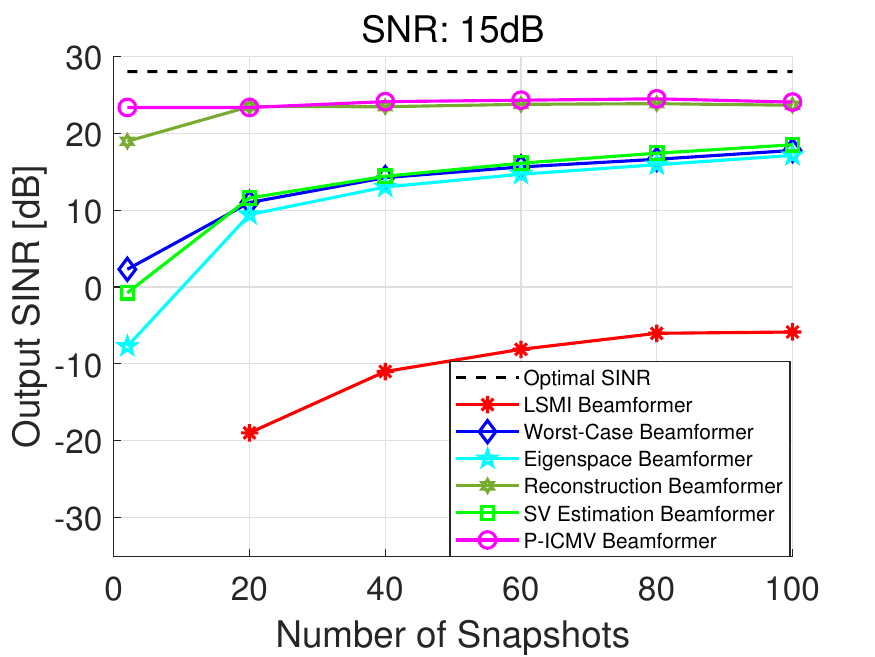}
		\caption{Output SINRs with different numbers of snapshots.}
		\label{fig:simul_ant_snap}
	\end{figure}
	
	The impact of the two trade-off parameters $\mu$ and $\delta$ is studied in Fig.~\ref{fig:simul_ant_nu}. As $\delta$ increases, the output SINR slightly increases due to the true spatial response for the target signal direction is bounded with more allowable SV mismatch. However, as $\delta$ continues to increase, more DoFs are utilized to guarantee the bounded true spatial responses, and hence the output SINRs deteriorate. 
	\vspace{-0.5em}
	\begin{figure}[h]
		\centering
		\includegraphics [width=0.6\linewidth]{ 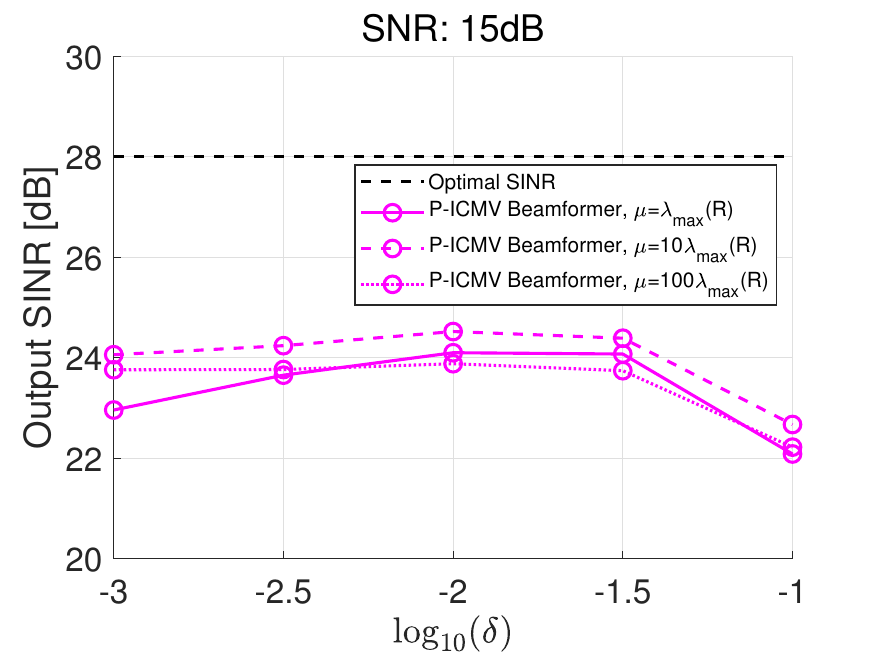}
		\caption{Output SINRs with different values of $\mu$ and $\delta$.}
		\label{fig:simul_ant_nu}
	\end{figure}
	
	\subsection{Speech Enhancement with Microphone Array}\label{subsec:mic}
	In this subsection, we consider a hearing aids application for speech enhancement with microphone array in a babble noise environment. Similar to the evaluation in \cite{Pu2017penalized,Xiao2018Evaluation}, two minimum variance-based beamformers: the LCMV and ICMV beamformers are selected {\black in the comparison.} A rectangular room of size $12.7$m$\times10$m with height $3.6$m is used for simulating the acoustic environment and the reverberation time is set to be $0.6$ second. The room impulse responses (RIRs) is generated by the so-called image method~\cite{allen1979image}.
	We specify the hearing aids wearer located at the center of the room, each hearing aid has 2 microphones with $7.5$mm spacing. The front microphone of the left is set as the reference microphone. The head shadow effect of the listener is also taken into account through using the measurement of the head-related relative transfer functions of the hearings aids on a mannequin. The simulated acoustic environment is illustrated in Fig.~\ref{fig:simul_sc}.
	\vspace{-0.5em}
	\begin{figure}[H]
		\centering
		\includegraphics [width=0.6\linewidth]{ 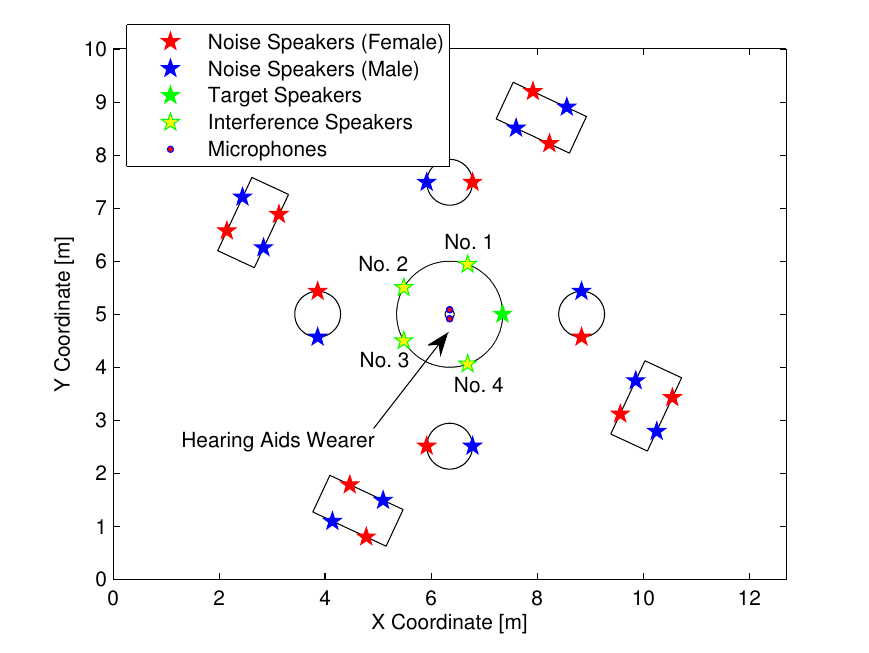}
		\caption{Simulated acoustic environment.}
		\label{fig:simul_sc}
	\end{figure}
	
	In the room, there are one target and four interference sources. The target and interference sources are represented by speakers $1$m away from the listener. The target is
	at $0^{\circ}$ and interferences are at $\pm 70^{\circ}$ and $\pm 150^{\circ}$ (corresponds to No.~$1$ to $4$ in Fig.~\ref{fig:simul_sc}). The background babble noise is simulated using $24$ speakers at different locations. All speakers and hearings aids microphones are in the same horizontal plane at a height of $1.2$m. All speakers' speech signals are taken from the TIMIT database~\cite{garofolo1993darpa}. For each speaker, there is $0.5$ second silence between each sentence and the speech lasts 25 seconds. In the beginning 3 seconds, only those babble speakers are active and such time segment is utilized for estimating the noise correlation matrix $\mathbf{R}$ by sample averaging. The input signal-to-noise (SNR) at the reference microphone is set at $10$dB and signal-to-interference ratio (SIR) for each interference is set at $-3$dB. The audio signals are sampled at $16$kHz and a $1024$-point FFT with $50\%$ overlap is used to transform the signals into the time-frequency domain. In the simulation, anechoic ATFs and DoAs of all sources are known. 
	
	Since there are in total 5 sources but only $4$ microphones, the LCMV and ICMV beamformers can only have constraints on 3 interferences besides the target. We use ``Setup $i$, $i=1,2,3,4$'' to denote the setting in which interferer $i$ is ignored and other remaining interferences are suppressed by the corresponding constraints. Specifically, define index set $T_i=\{1,2,3,4\}/i$, the LCMV beamformer formulation with setup $i$ is 
	\begin{align*}
	\min_{\mathbf{w}}\ \mathbf{w}^H\mathbf{R}\mathbf{w}\quad\textrm{s.t.}\ &\mathbf{w}^H\mathbf{a}_{\theta_0}=1,\\
	&\mathbf{w}^H\mathbf{a}_{\theta_k}=0,\ \forall k\in T_i,
	\end{align*}
	where $\theta_0$ is the DoA of the target source and $\theta_k$ is the DoA of interferer $k$. For the ICMV beamformer with setup $i$, the corresponding formulation is 
	\begin{align*}
	\min_{\mathbf{w}}\ \mathbf{w}^H\mathbf{R}\mathbf{w}\quad\textrm{s.t.}\ &|\mathbf{w}^H\mathbf{a}_{\theta_0}-1|^2\leq c_{\theta_0}^2,\\
	&|\mathbf{w}^H\mathbf{a}_{\theta_k}|^2\leq c_{\theta_k}^2,\ \forall k\in T_i,
	\end{align*}
	where parameter $\{c_{\theta_k}\}$ are all set to be $0.1$.
	As for the P-ICMV beamformer, the corresponding formulation becomes
	\begin{align*}
	\min_{\mathbf{w}}\ &\mathbf{w}^H\mathbf{R}\mathbf{w}+\mu\max_{k}\{ \gamma_k\epsilon_k \}\\
	\quad\textrm{s.t.}\ &|\mathbf{w}^H\mathbf{a}_{\theta_0}-1|+\delta\|\mathbf{w}\| \leq c_{\theta_0},\\
	&|\mathbf{w}^H\mathbf{a}_{\theta_k}|+\delta\| \mathbf{w} \|\leq \epsilon_kc_{\theta_k},\ \forall k.
	\end{align*}
	{\black To make a fair comparison with ICMV, parameters $\{c_{\theta_k}\}$ are set the same as those in ICMV. Other parameters are set as $\gamma_k=1,\forall k$, and $\mu=\lambda_{\textrm{max}}(\mathbf{R})$.} The sufficient condition for $\delta$ by Proposition~\ref{prop:suffi_fea} is $\delta\leq 0.63$ and we choose $\delta=0.01$. The impact of different $\mu$ and $\delta$ are studied in Figs.~\ref{fig:diff_mu} and~\ref{fig:diff_delta}. \textcolor{black}{Penalty parameter $\rho$ in ADMM algorithm is set to be $10\mu$ for all simulations.}
	\vspace{-0.5em}
	\begin{table}[h]
		\centering
		\caption{IW-SINRI and IW-SD[dB]}
		\begin{tabular}{@{ }l@{\quad\quad $\ $} c@{$\ $ }  c@{$\ $ } c@{$\ $ } c @{\quad\quad\quad}  c@{ $\ $} c@{$\ $ } c@{ $\ $} c@{ }}
			\toprule
			&\multicolumn{4}{c}{\textbf{IW-SINRI}}&\multicolumn{4}{c}{\textbf{IW-SD}}\\
			\textbf{Setup}& 1 &2 &  3&  4&1&2 &  3& 4\\
			\midrule
			\textbf{LCMV}&7.22  &-4.19 &  -0.11&  8.37   &0.83&2.11 & 2.02& \textbf{0.77}\\
			\textbf{ICMV}&8.64  &-0.88 & 2.82&  8.86   &1.18 &1.97 &  1.92& 1.12\\
			\textbf{P-ICMV}&\multicolumn{4}{c}{\textbf{9.35$\quad\quad\quad\quad$}} &\multicolumn{4}{c}{1.15} \\
			\bottomrule
		\end{tabular}
		\label{tab:limit_4mic}
	\end{table}
	
	The intelligibility-weighted SINR improvement (IW-SINRI) and intelligibility-weighted spectral distortion (IW-SD) are used as performance metrics~\cite{Spriet2005Robustness} and are compared in Table \ref{tab:limit_4mic}. 
	In all 4 setups, the P-ICMV beamformer achieves more interference and noise suppression than the LCMV and ICMV beamformers in terms of the IW-SINRI metric {\black and} all three beamformers have similar speech distortion in terms of IW-SD scores. 
	It can be observed that for the LCMV and ICMV beamformers in setups $1$ and $4$ when one front interference is ignored, the beamformer achieves reasonable interference suppression. However, in setups $2$ and $3$ when one interference in the rear one is ignored, the beamformer has a poor IW-SINRI.$\ $This can be explained from the individual interference suppression levels (ISL) and the corresponding snapshots of the beam patterns.
	Fig.~\ref{fig:simul_rmsr} plots the ISL for the 4 setups. The ISL is defined as $\textrm{ISL}\triangleq20\log_{10}\frac{r_{\textrm{in}}}{r_{\textrm{out}}}$, where $r_{\textrm{in}}$ is the root mean square (RMS) of signal at reference microphone and $r_{\textrm{out}}$ is its RMS at beamformer's output. We can see that the P-ICMV beamformers achieves around 10dB ISL for all interferences, while for the LCMV and ICMV beamformers,
	only the interferences with constraints are well suppressed. The ignored one is either slightly suppressed or even enhanced depending on the setups.
	\begin{figure}[h]
		\centering
		\includegraphics [width=0.7\linewidth]{ 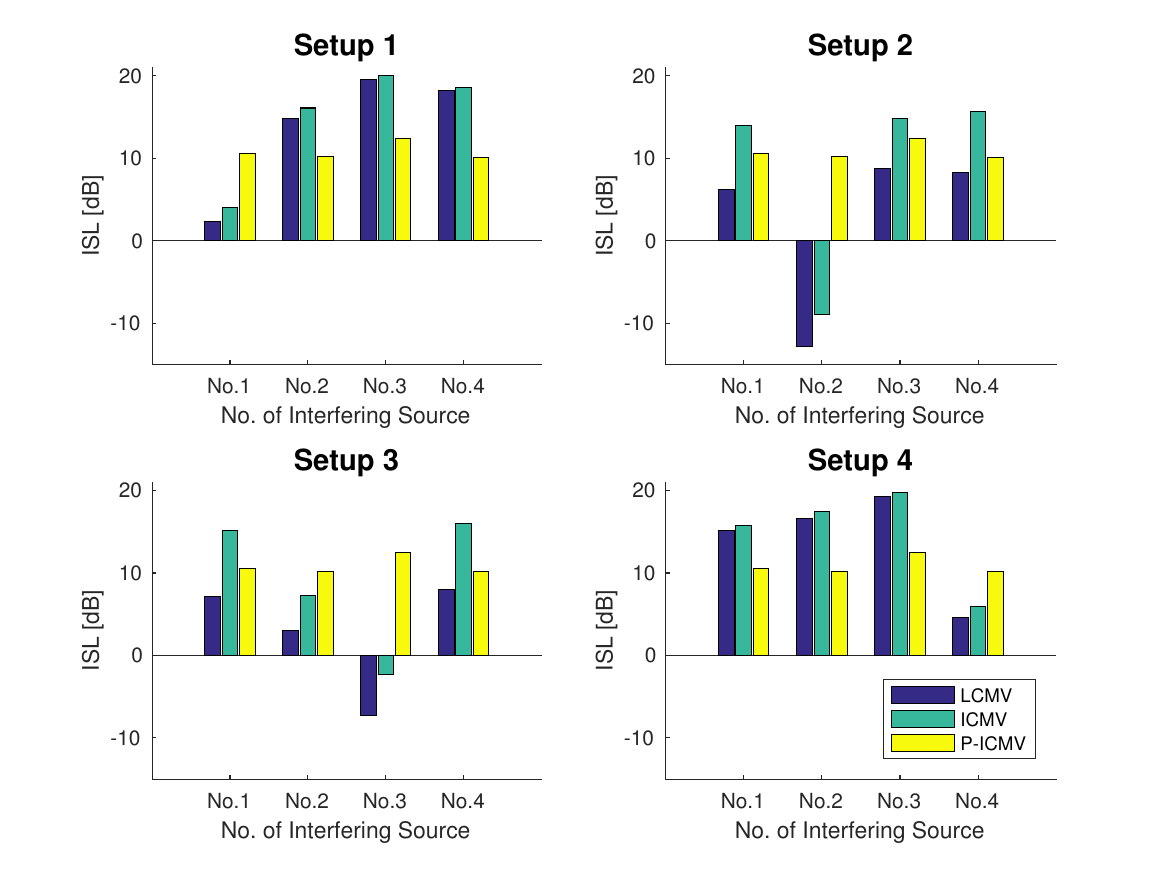}
		\caption{Individual interference suppression level.}
		\label{fig:simul_rmsr}
	\end{figure}

	\begin{figure}[h]
		\centering
		\includegraphics [width=0.4\linewidth]{ 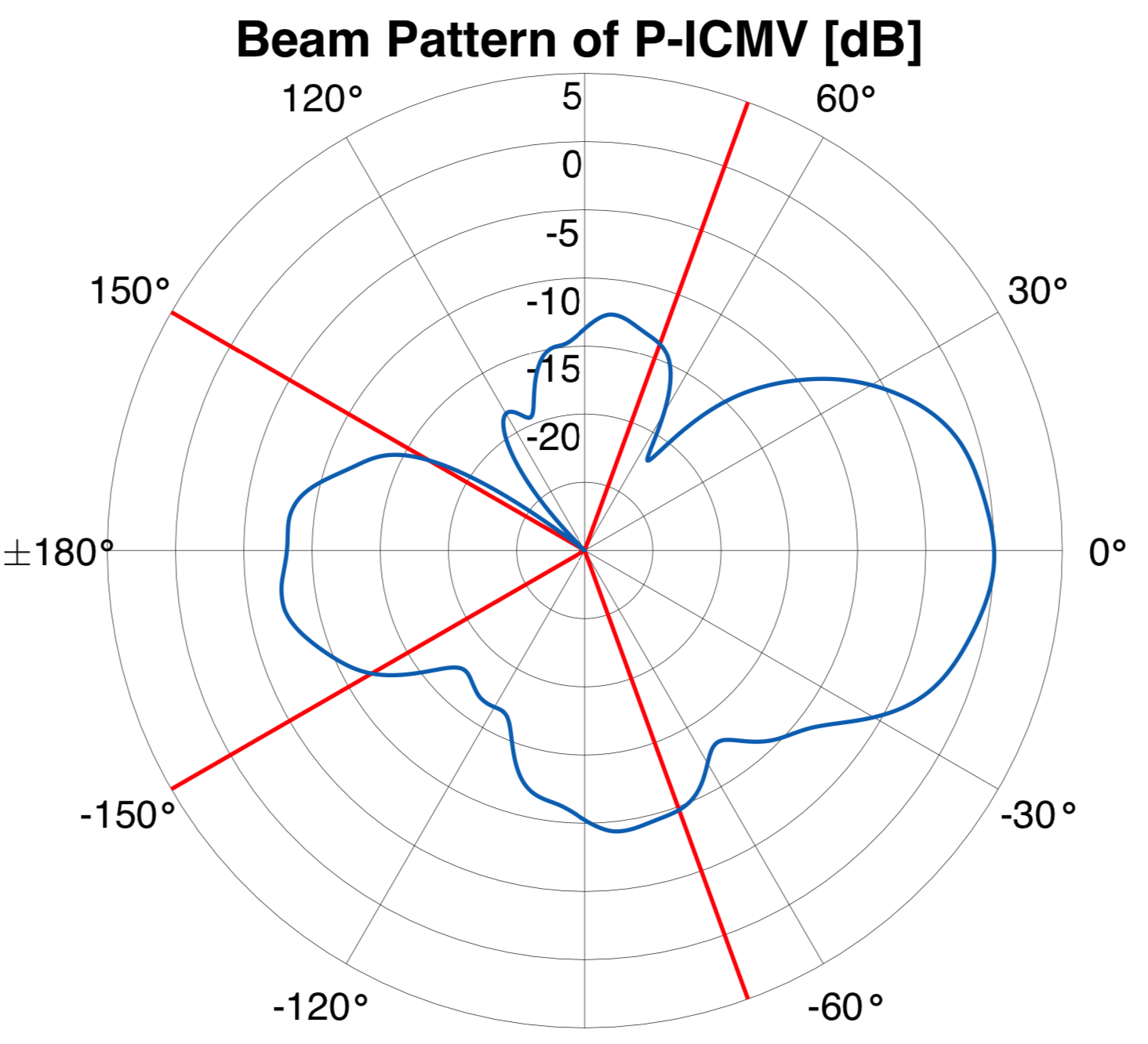}
		\caption{Beam patterns of P-ICMV at 1000 Hz.}
		\label{fig:simul_picmv}
	\end{figure}

	\begin{figure}[h]
		\subfigure[Setup 1]{\includegraphics [width=0.23\linewidth]{ 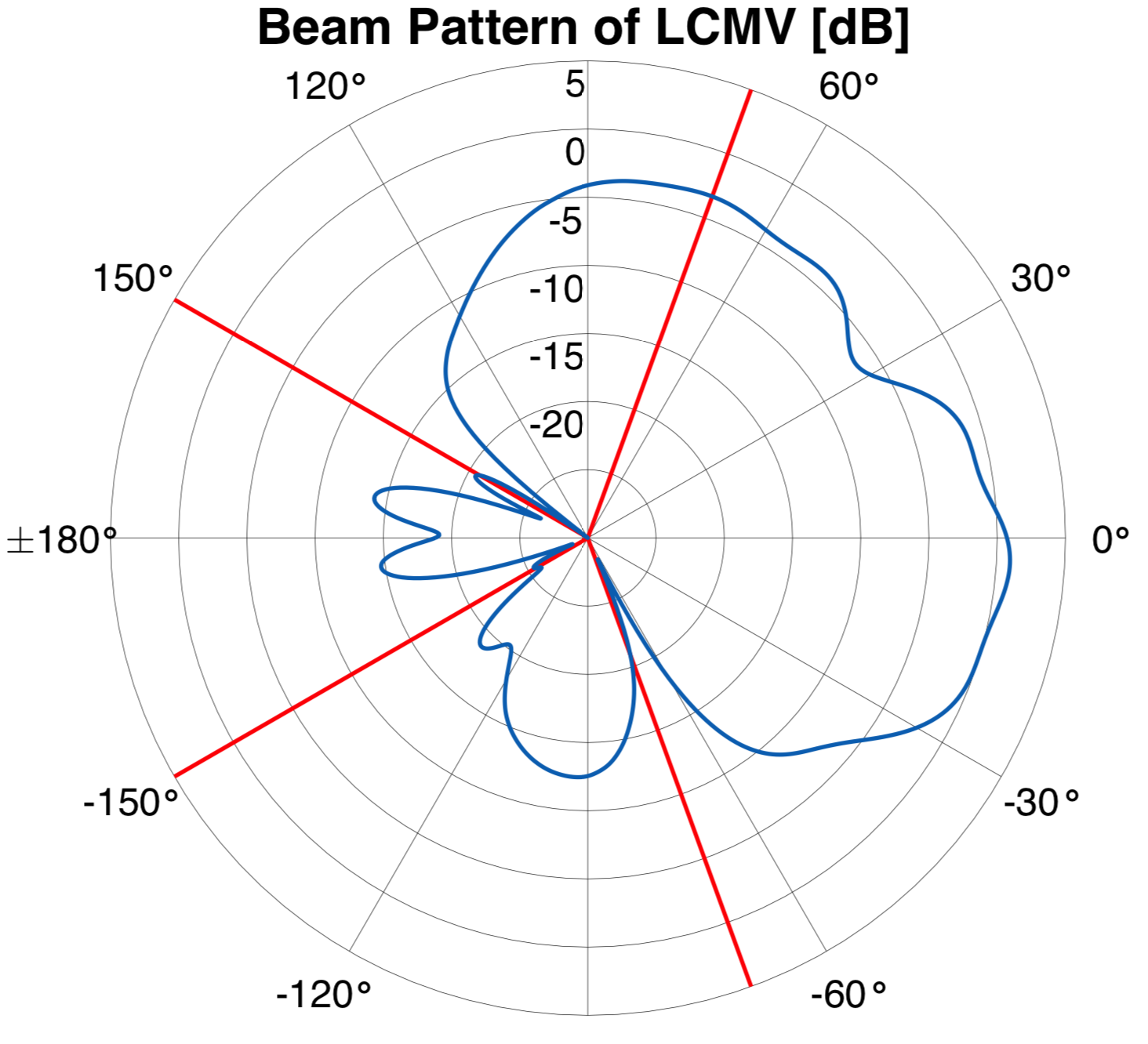}}
		\subfigure[Setup 2]{\includegraphics [width=0.23\linewidth]{ 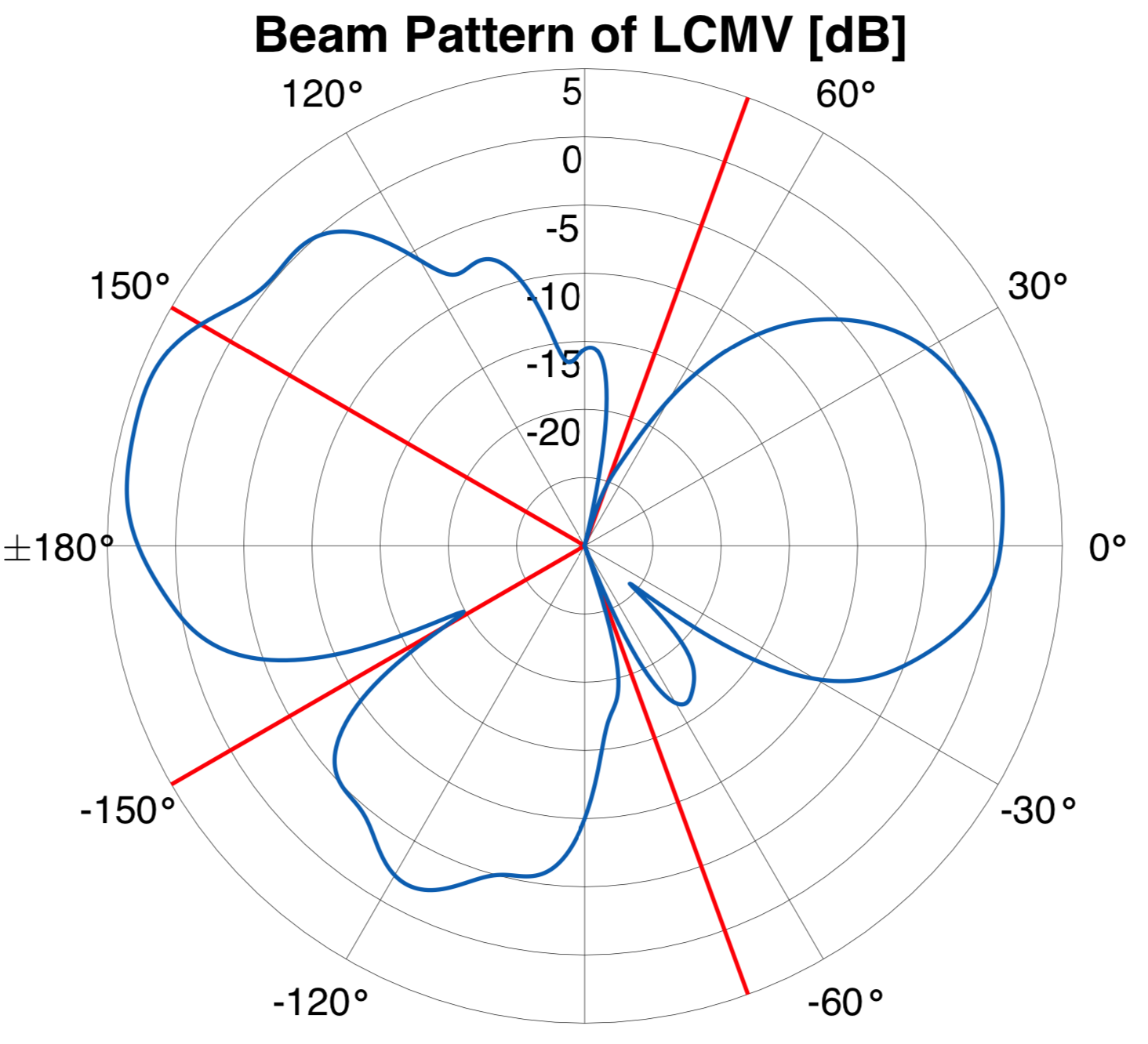}}
		\subfigure[Setup 3]{\includegraphics [width=0.23\linewidth]{ 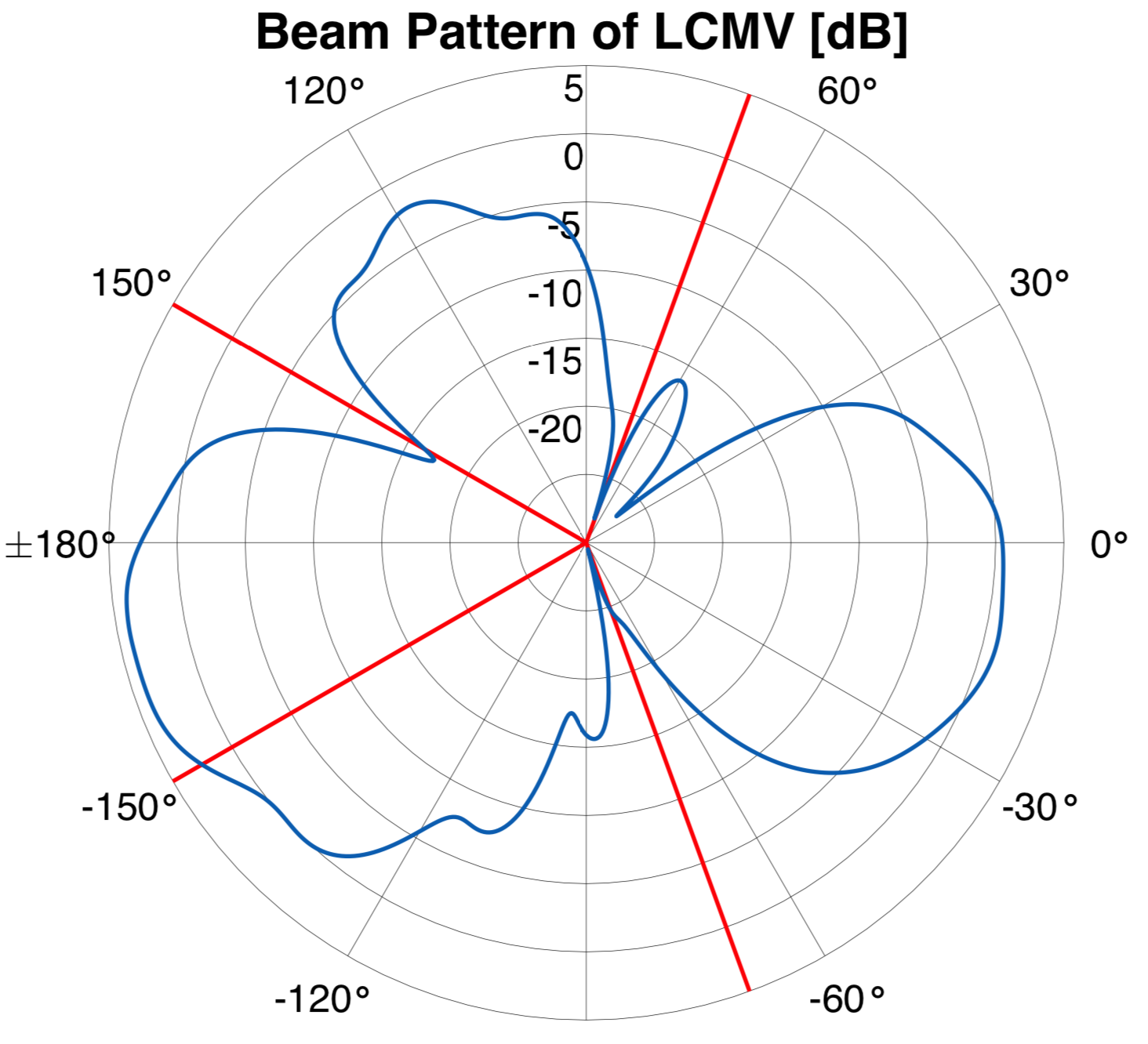}}
		\subfigure[Setup 4]{\includegraphics [width=0.23\linewidth]{ 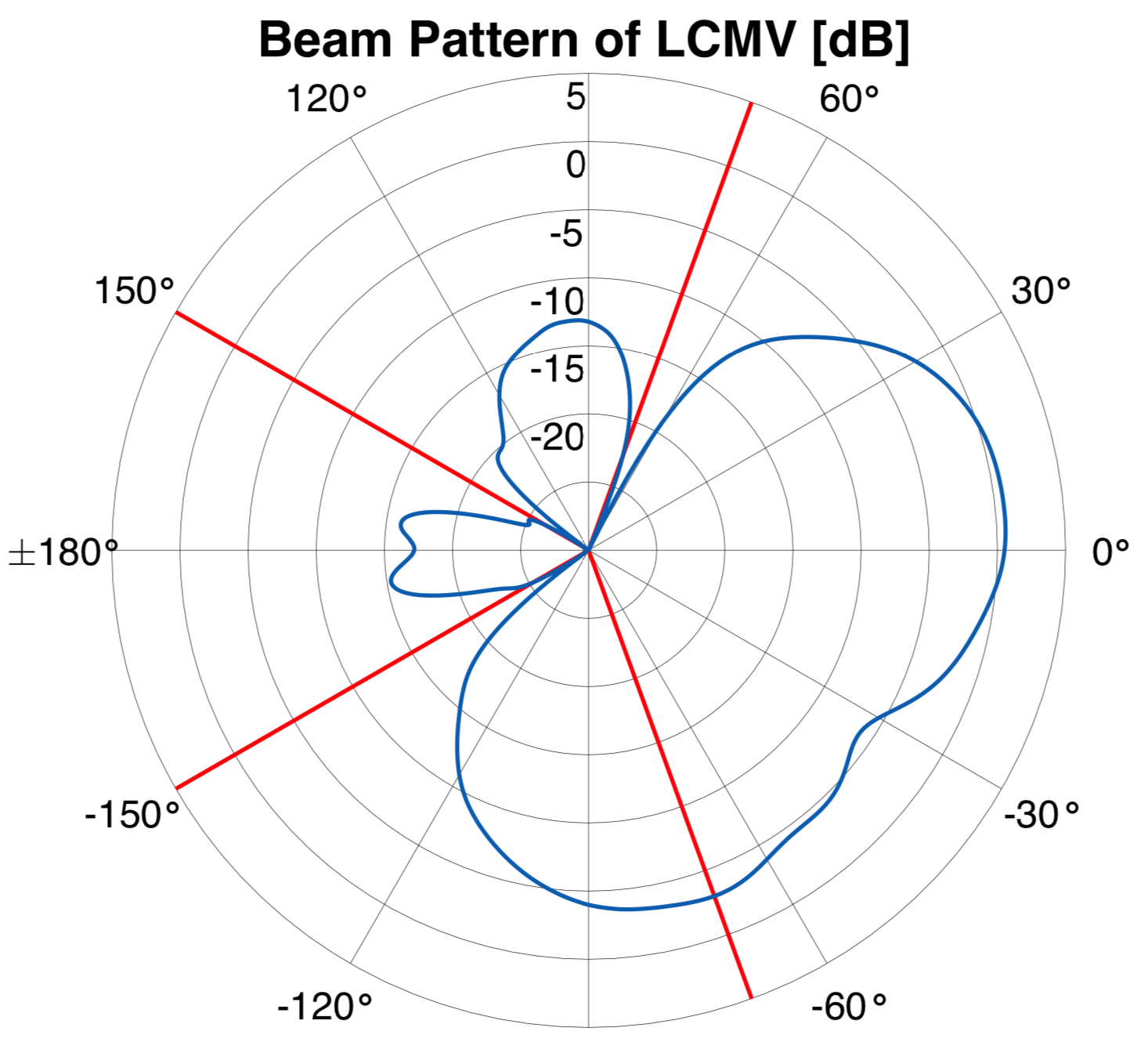}}
		\caption{Beam patterns of LCMV at 1000 Hz.}
		\label{fig:simul_lcmv}
	\end{figure}

	\begin{figure}[h]
		\subfigure[Setup 1]{\includegraphics [width=0.23\linewidth]{ 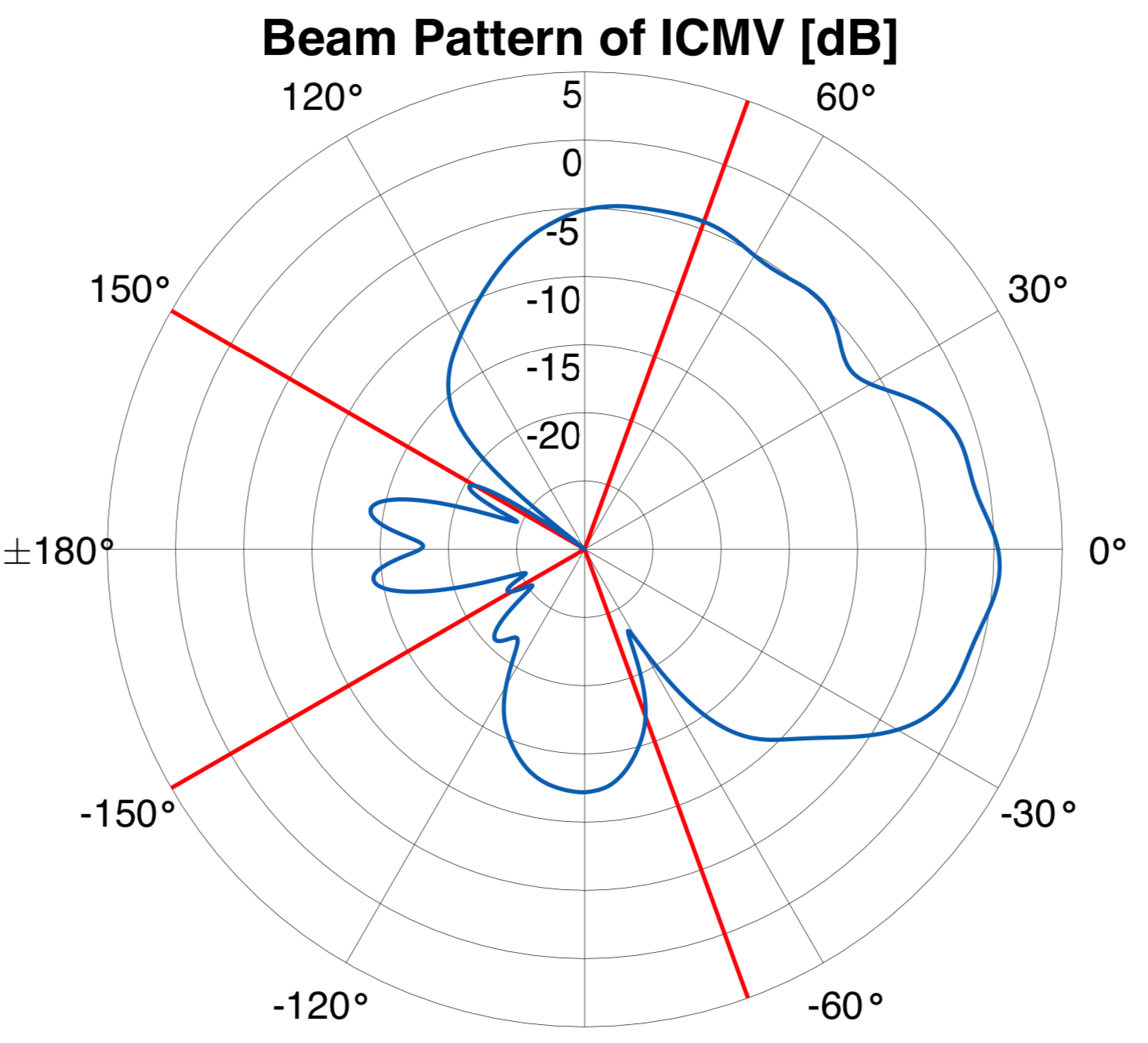}}
		\subfigure[Setup 2]{\includegraphics [width=0.23\linewidth]{ 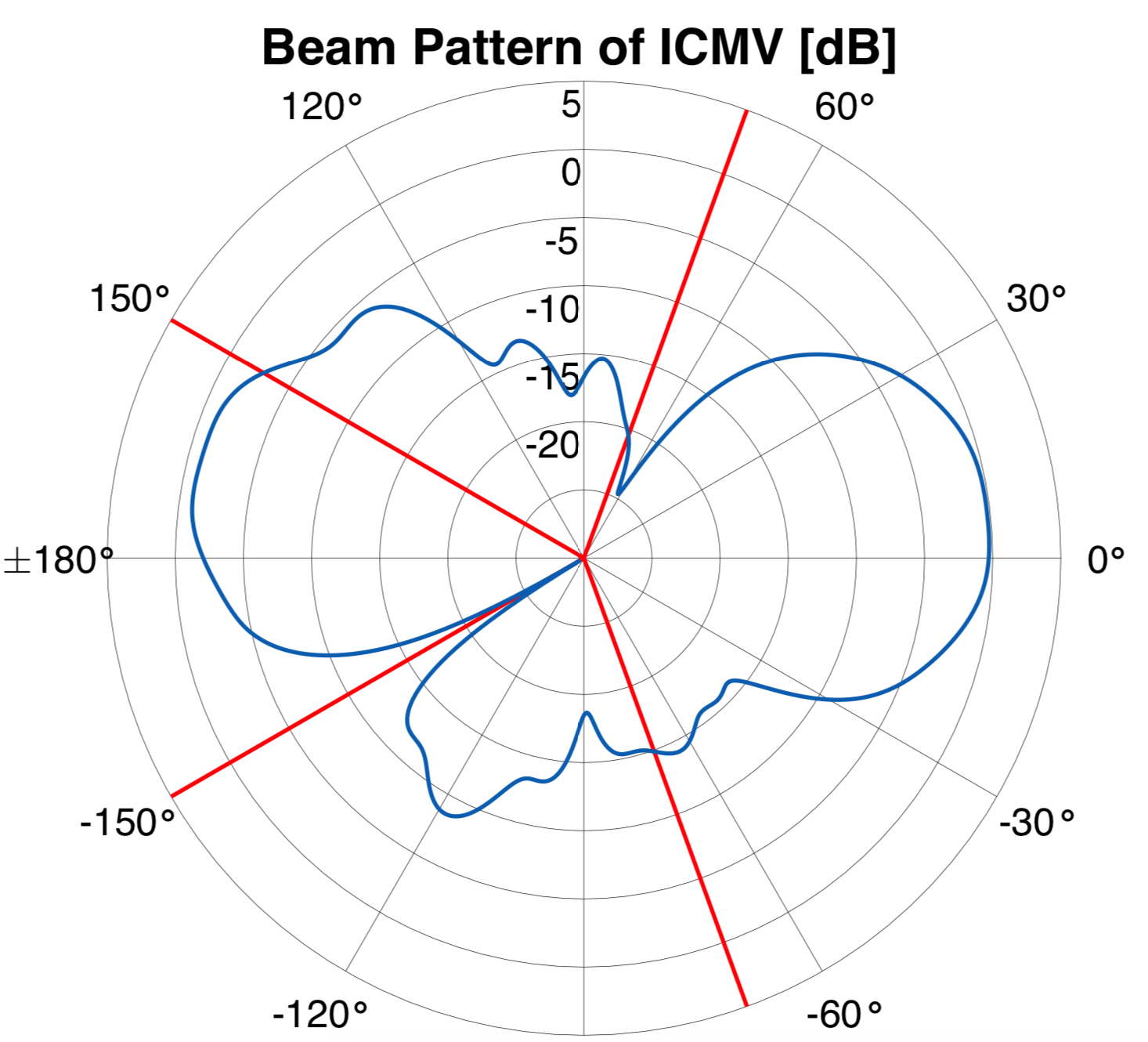}}
		\subfigure[Setup 3]{\includegraphics [width=0.23\linewidth]{ 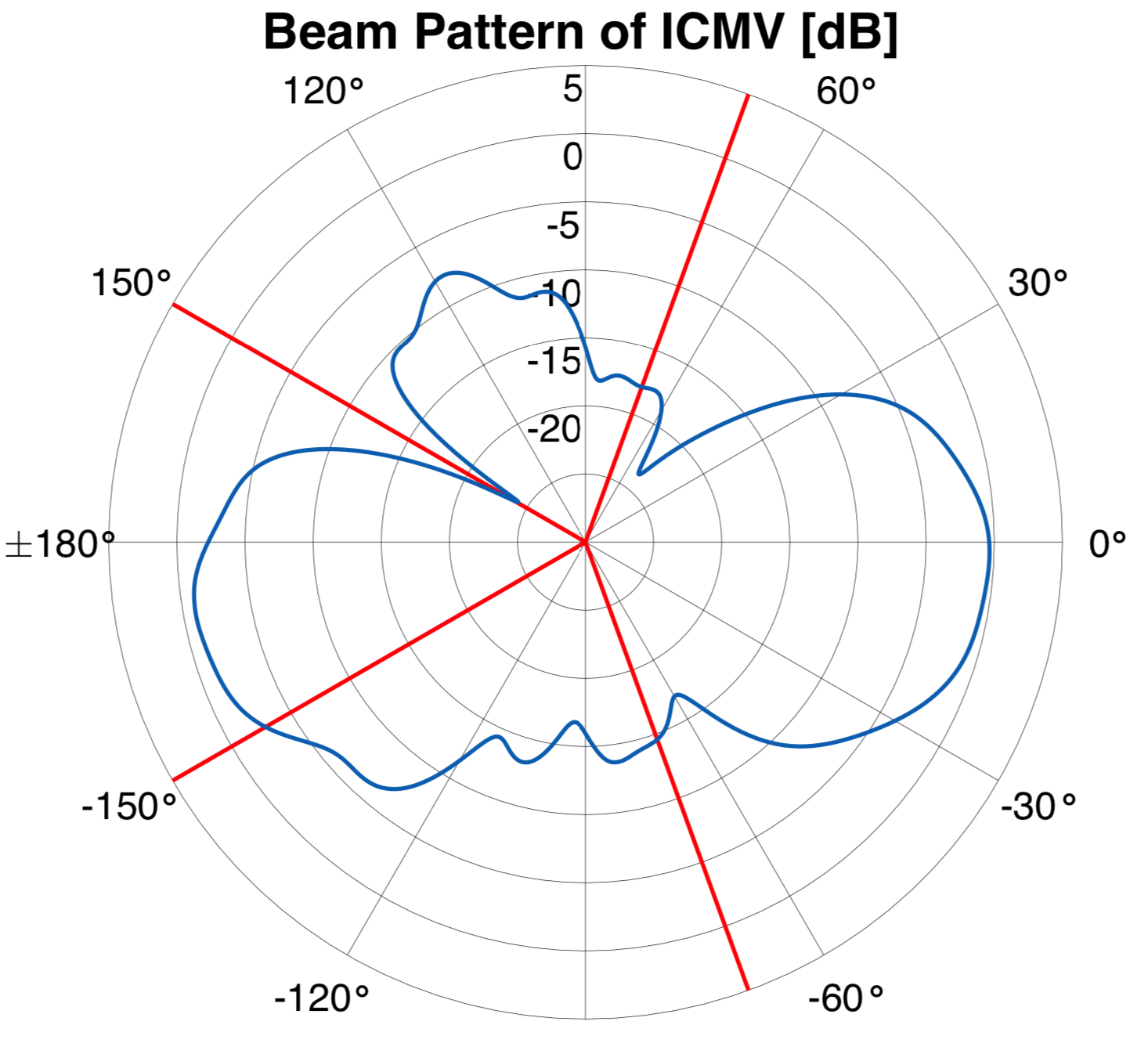}}
		\subfigure[Setup 4]{\includegraphics [width=0.23\linewidth]{ 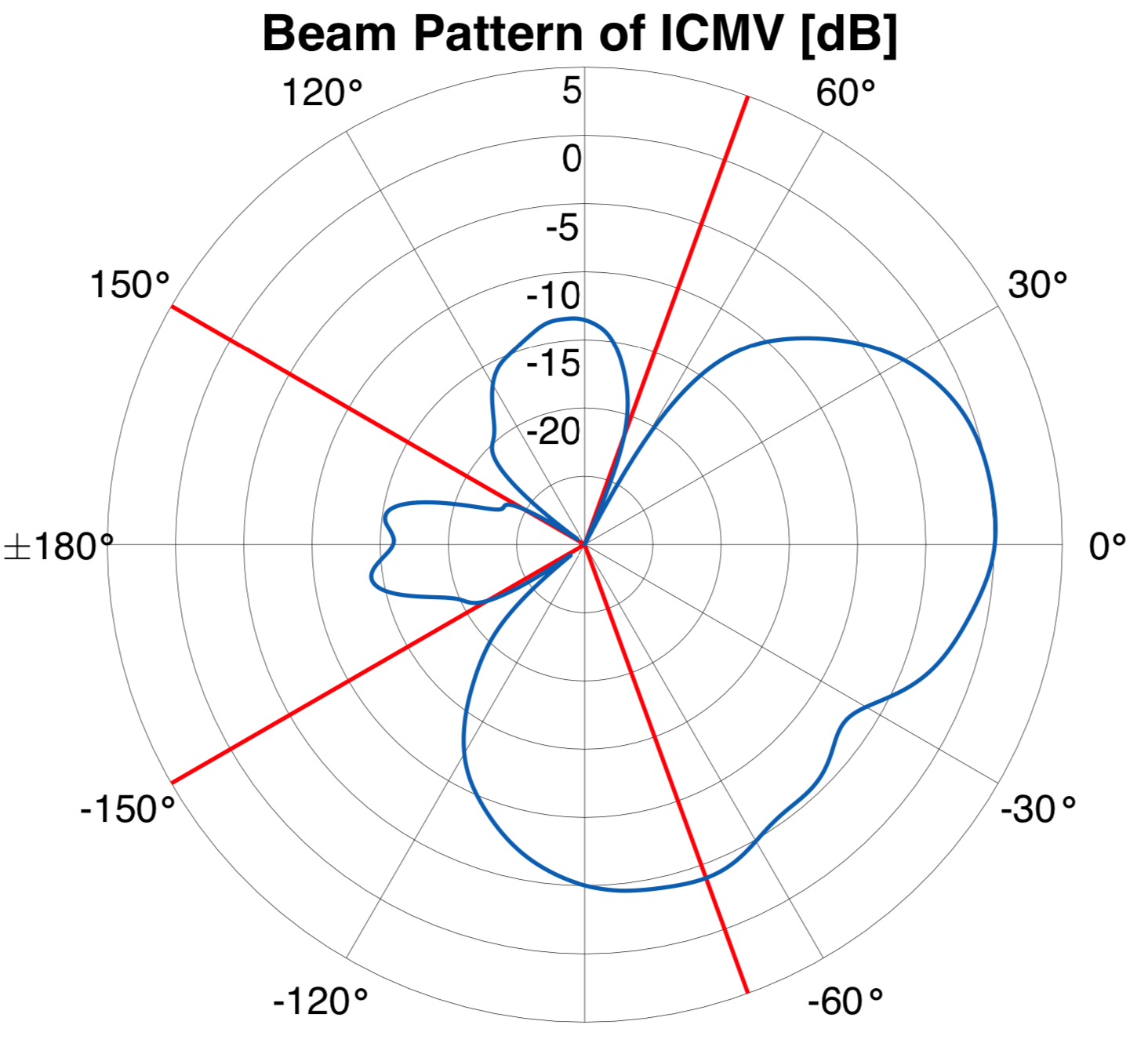}}
		\caption{Beam patterns of ICMV at 1000 Hz.}
		\label{fig:simul_icmv}
	\end{figure}

	Figs.~\ref{fig:simul_picmv},~\ref{fig:simul_lcmv}, and~\ref{fig:simul_icmv} plot the beam patterns of the three beamformers at 1000 Hz, where red lines correspond to the 4 interference directions. It can be observed that the P-ICMV beamformer has low gain in the spatial responses at all 4 interferences' direction. For LCMV and ICMV beamformers, the ignored interference direction ($\pm70^{\circ}$) has reasonable gain control due to the target constraint, but in setups 2 and 3, the ignored interference direction ($\pm150^{\circ}$) remains high spatial responses which are even larger than $0$dB. In short, when the DoF is limited, the P-ICMV beamformer can automatically handle the target source and all 4 interferences by intelligently allocating the DoF, while the LCMV and ICMV beamformers face an interference suppression selection problem. Their performance is uncontrollable and  depends on setups. {\black Furthermore, real listening evaluation on the P-ICMV beamformer is studied in \cite{Xiao2018Evaluation}, where 12 subjects' listening evaluation demonstrated that the P-ICMV beamformer can significantly improve speech intelligibility in the DoF limited situation.}
	\begin{figure}[h]
	    \centering
		\includegraphics [width=0.6\linewidth]{ 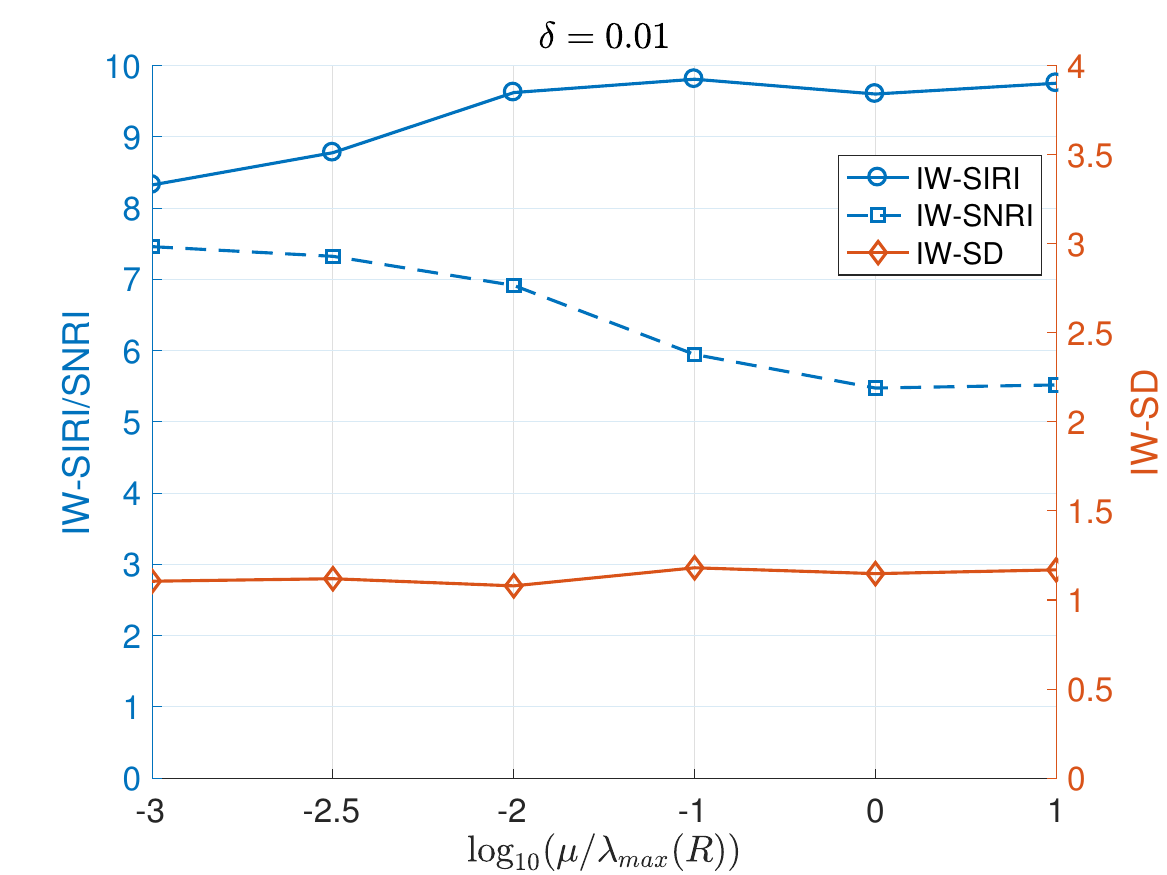}
		\caption{IW-SIRIs/SNRIs and IW-SDs with diff. values of $\mu$.}
		\label{fig:diff_mu}
	\end{figure}
	\begin{figure}[h]
	    \centering
		\includegraphics [width=0.6\linewidth]{ 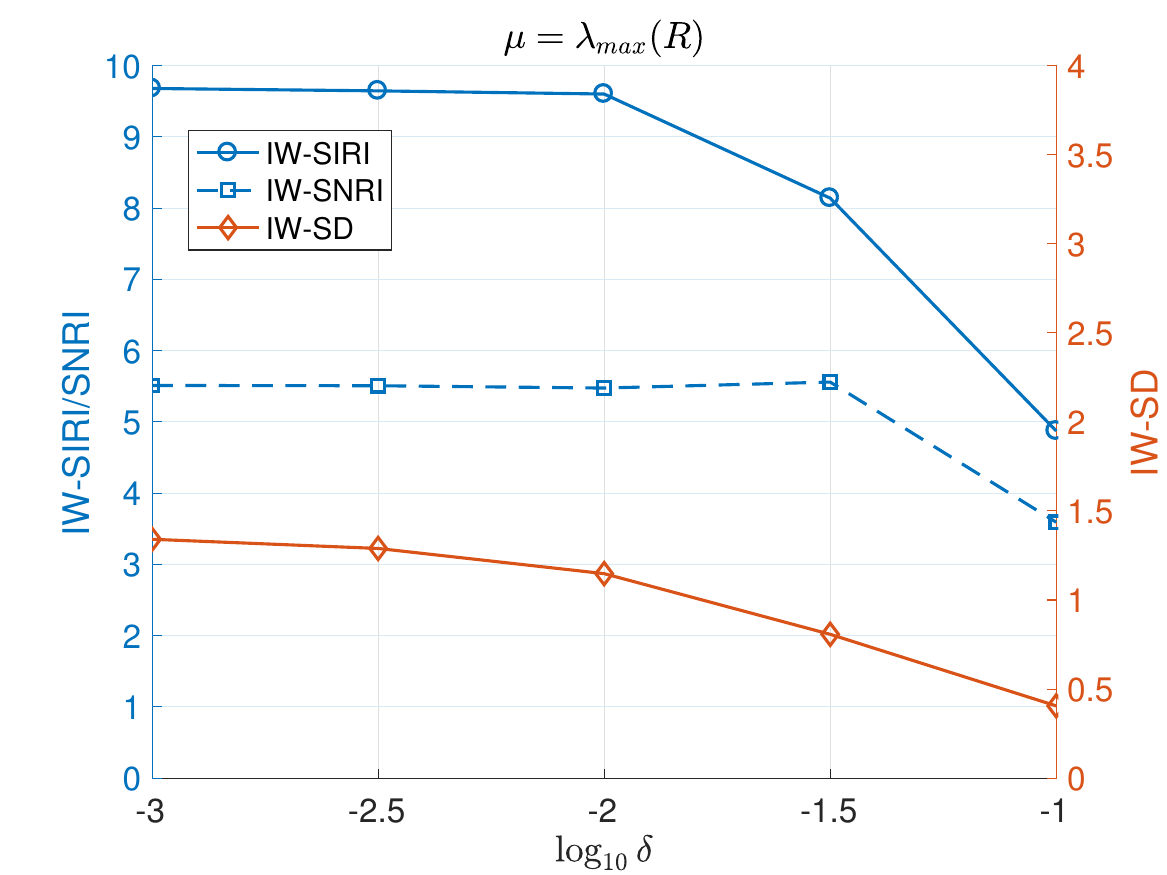}
		\caption{IW-SIRIs/SNRIs and IW-SDs with diff. values of $\delta$.}
		\label{fig:diff_delta}
	\end{figure}
	
	At last, the performance of the P-ICMV beamformer with different values of $\mu$ and $\delta$ (with the other one been fixed) are studied in Figs.~\ref{fig:diff_mu} and~\ref{fig:diff_delta}. To clearly understand the impact of these two parameters, we separately plot IW-SIRI (signal to interference ratio improvement), IW-SNRI (signal-to-noise ratio improvement), and IW-SD. In Fig.~\ref{fig:diff_mu} ($\delta=0.01$), increasing $\mu$ gives more priority to interferences suppression. For a fixed $\mu=\lambda_{\textrm{max}}(\mathbf{R})$ with different $\delta$ (Fig.~\ref{fig:diff_delta}), we can observe that increasing $\delta$ can reduce IW-SD. On the other hand, allocating more DoF to handle SV mismatch would sacrifice the interference and noise suppression performance and hence we observe IW-SIRI and IW-SNRI decreases as $\delta$ increases.

	\subsection{Beam Pattern Synthesis}\label{subsec:synthesis}
	
	We consider a $30\times30$ planar antenna array with {\black half-wave} length spacing. {\black A beam pattern with low side lobes over a large region is considered to be synthesized.} This scenario is motivated from industry applications, where non-stationary clutters/interferences comes from a low-altitude region. The beam pattern is desired to have low sidelobes within such a region. Specifically, we use $\vartheta\in[0^\circ,180^\circ]$ and $\psi\in[-90^\circ,90^\circ]$ to denote the azimuth and elevation angles. The main lobe of the specified beam pattern is pointed at $\theta_0=(\vartheta_0,\psi_0)=(90^\circ,15^\circ)$, and the side lobes (SL) within region $\Phi=\{  (\vartheta,\psi)| 0^\circ\leq\vartheta\leq180^\circ,-90^\circ\leq\psi \leq -10^\circ \}$ are expected to be as small as possible. To achieve such design requirement, we specify the P-ICMV formulation as follows, 
	\begin{subequations}\label{eq:simul_bp_formul}
		\begin{align}
		\min_{\mathbf{w},\{\epsilon_\phi\}}\ & \|  \mathbf{w}\|^2 + \mu \max_{\phi\in\Phi}\{ \epsilon_\phi \}\notag\\
		\textrm{s.t.}\ &|\mathbf{w}^H\mathbf{a}_\theta-1| + \delta \| \mathbf{w} \| \leq c_\theta,\ \forall \theta\in\Theta,\\
		& |\mathbf{w}^H\mathbf{a}_\phi| + \delta \| \mathbf{w} \| \leq \epsilon_\phi c_\phi,\ \forall \phi\in\Phi, \label{eq:simul_bp_formul2}
		\end{align}
	\end{subequations}
	where $\Theta=\{  (\vartheta,\psi)| \vartheta_0-1^\circ\leq\vartheta\leq\vartheta_0+1^\circ,\psi_0-1^\circ\leq\psi\leq\psi_0+1^\circ\}$ and angle sets $\Theta$ and $\Phi$ are sampled every $1^\circ$ for both azimuth and elevation angles. 
	
	In the simulation, {\black parameter $c_\theta$ is specified to be proportional to angle error as $c_\theta=0.3\times(\lceil\vartheta-\vartheta_0\rceil+\lceil\psi-\psi_0\rceil+1), \forall \theta\in\Theta$, where $\lceil \cdot\rceil$ denotes the operation for counting angles difference by unit degree, i.e., $\lceil \pm 1^\circ\rceil=1$.} The sufficient condition in Proposition \ref{prop:suffi_fea} gives the bound $\delta\leq 1.15$. {\black Since all sidelobes within $\Phi$ are considered to be suppressed, $c_\phi$ is fixed to be $0.1$ for all $\phi\in\Phi$.} To study the impact of parameters $\mu$ and $\delta$, we fix $\mu=10$ and evaluate the synthesized beam patterns with different choice of $\delta$ ($\delta=10^{-3},10^{-2},10^{-1},10^{-0.5}$). Notice that there are totally $14670$ constraints for the specified P-ICMV formulation~\eqref{eq:simul_bp_formul}, i.e., $3\times3=9$ constraints for $\Theta$ and $181\times81=14661$ constraints for $\Phi$. Parameter $\rho$ in ADMM algorithm is set to be $10^2$, and it only takes few tens of seconds for ADMM algorithm to numerically converge.
	\vspace{-0.5em}
	\begin{figure}[H]
		\includegraphics [width=0.23\linewidth]{ 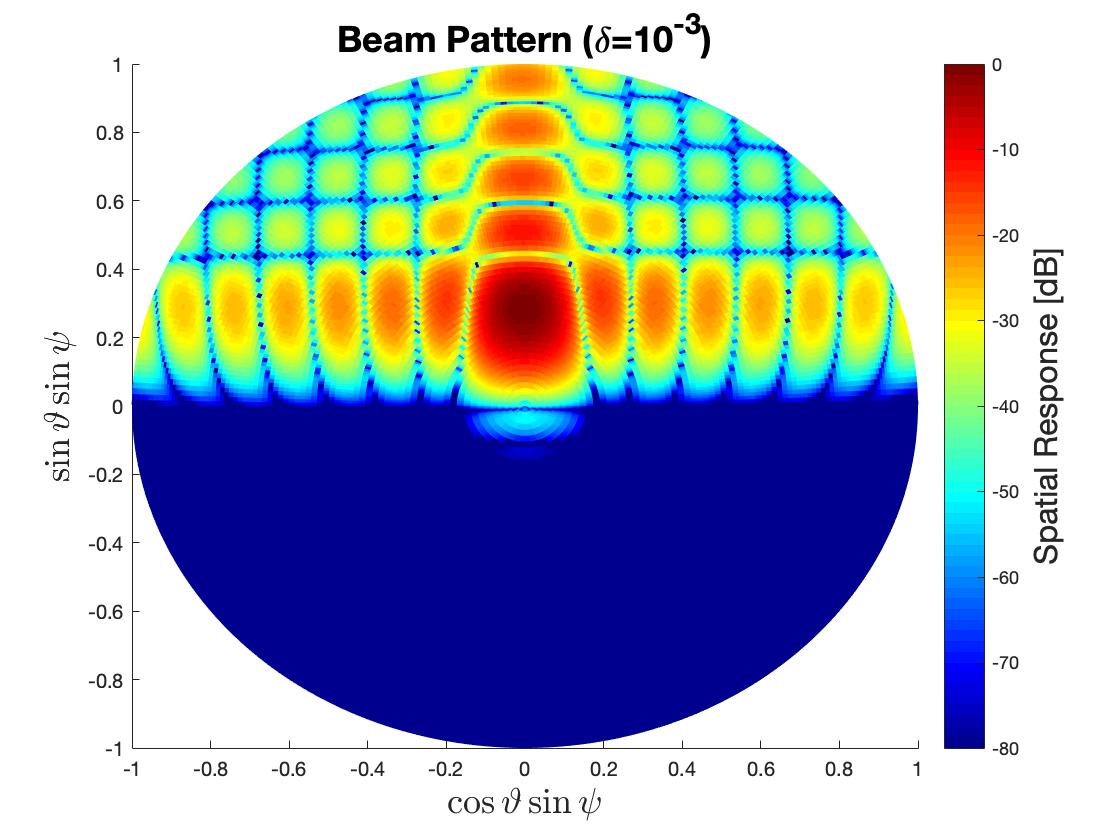}
		\includegraphics [width=0.23\linewidth]{ 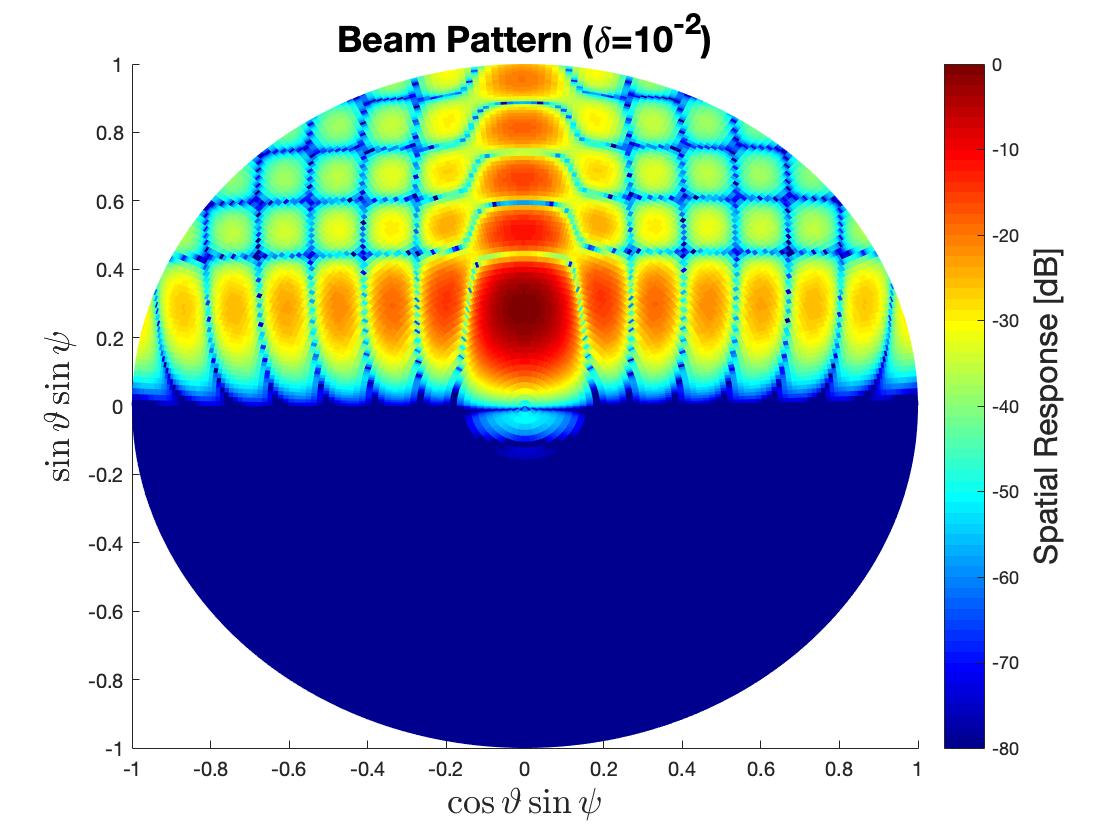}
		\includegraphics [width=0.23\linewidth]{ 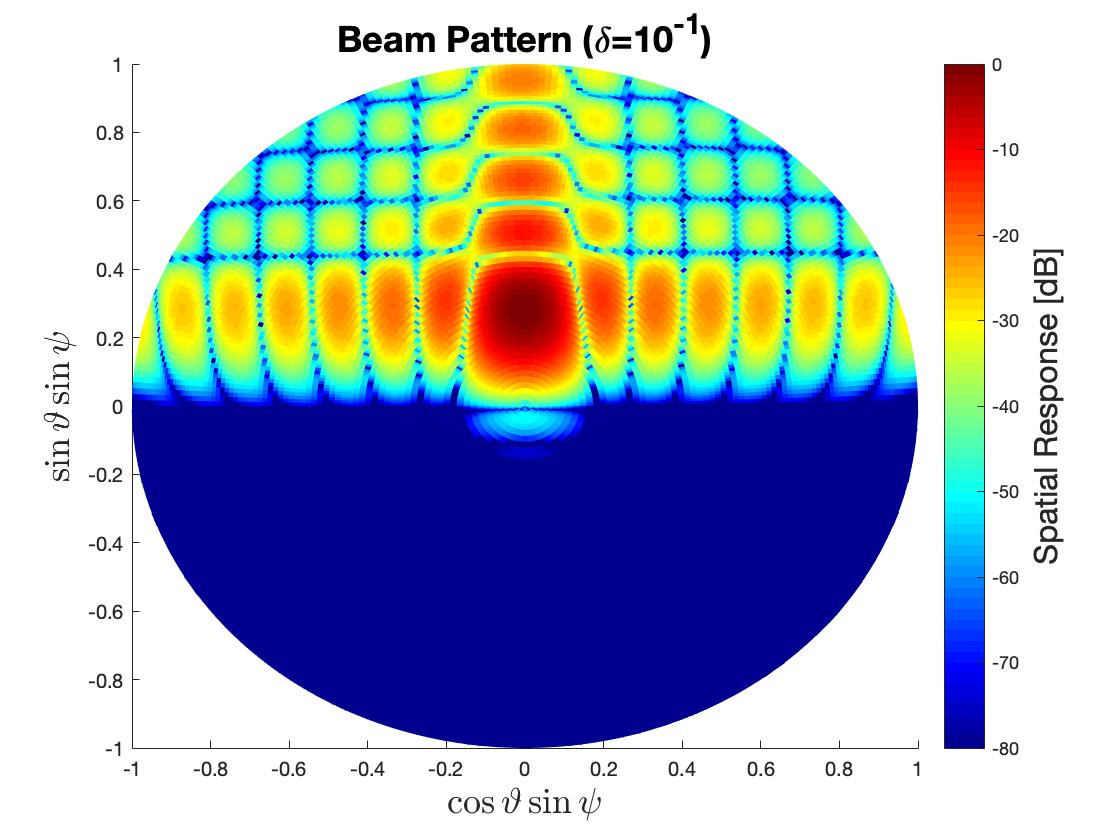}
		\includegraphics [width=0.23\linewidth]{ 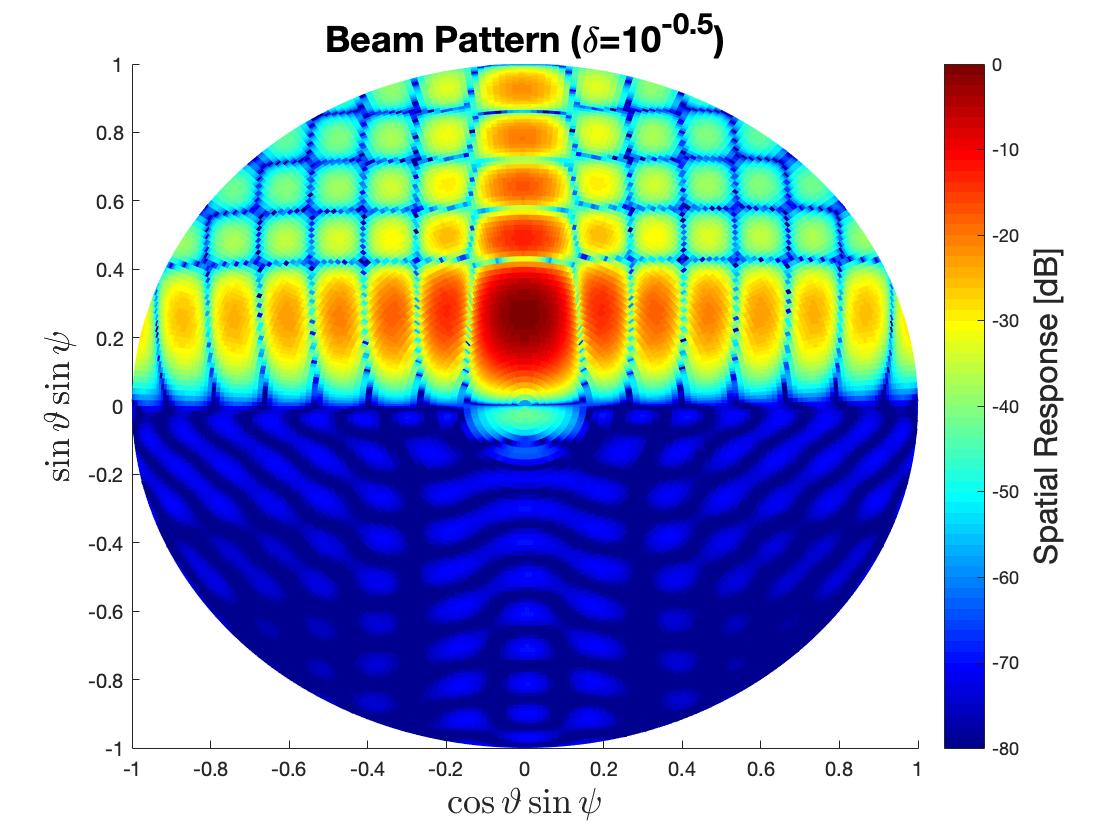}
		\caption{Synthesized beam patterns with different $\delta$.}
		\label{fig:simul_synsbp}
	\end{figure}
	
	The synthesized beam patterns with presumed SVs are plotted in Fig.~\ref{fig:simul_synsbp}. It can be observed that SL within region $\Phi$ have been well suppressed for all choice of $\delta$. However, as compared in Table~\ref{tab:ssl}, the maximum SL (MSL) and average SL (ASL) within region $\Phi$ with different $\delta$ are quite different. Small $\delta$ allows more DoF for achieving smaller side lobes but at the cost of increasing $\| \mathbf{w}\|$. For large value of $\delta$, more DoF is allocated to counter the SV mismatch,  and hence large  MSL and ASL are observed. Furthermore, in the presence of SV mismatch, the levels of true SL are affected by SV mismatch level. To evaluate the impact of $\delta$ on the levels of true SL, we assume the SV perturbation consists of {\black gain and phase errors,} which are randomly generated from $\mathcal{N}(1, \kappa^2)$ and $\mathcal{N}(0, (\kappa\pi/2)^2)$ respectively, and $\kappa>0$ controls the level of SV perturbation. The MSL and ASL with respect to $\delta=10^{-3},10^{-0.5}$ and different $\kappa$ are compared in Fig.~\ref{fig:simul_synslobe}, where the MSL and ASL are calculated by averaging $100$ independent simulation runs by randomly generating the mismatched SVs. It can be observed that small $\delta$ gives lower MSL and ASL when $\kappa$ is small ($\kappa=10^{-4}$). As $\kappa$ increases, large $\delta$ achieves lower MSL and ASL. By Proposition~\ref{prop:intf_bound}, we know constraints \eqref{eq:simul_bp_formul2} imply that the true spatial response $|\mathbf{w}^H\bar{\mathbf{a}}_\phi|\leq \epsilon_\phi c_\phi ,\forall \phi\in\Phi$ for all perturbation satisfy $\|\Delta\bm{a}_\phi\|\leq \delta$. In the case of $\| \Delta\bm{a}_\phi\|=\delta^\prime>\delta$, the true spatial response is actually bounded by $\epsilon_\phi c_\phi$ and $\| \mathbf{w}\|$, i.e., $|\mathbf{w}^H\bar{\mathbf{a}}_\phi|\leq |\mathbf{w}^H{\mathbf{a}}_\phi|+\delta^\prime \| \mathbf{w} \| \leq \epsilon_\phi c_\phi + (\delta^\prime-\delta)\| \mathbf{w} \|$.
	Hence, we observe that a large $\delta$ (which has a smaller $\|\mathbf{w}\|$) achieves a smaller MSL/ASL when $\kappa$ is large.
	\vspace{-0.5em}
	\begin{table}[H]
		\centering
		\caption{MSL and ASL [dB]}
		\begin{tabular}{c c c c c}
			\toprule
			& $\delta=10^{-3}$& $\delta=10^{-2}$& $\delta=10^{-1}$& $\delta=10^{-0.5}$\\
			\midrule
			$\|\mathbf{w}\|$ & 0.0298 & 0.0297 & 0.0290 & 0.0264\\
			\textbf{MSL}& -81.7& -81.6& -80.9&-70.1\\
			\textbf{ASL}& -89.4& -89.4& -88.4&-82.8 \\
			\bottomrule
		\end{tabular}
		\label{tab:ssl}
	\end{table}
	\vspace{-0.5em}
	\begin{figure}[h]
	    \centering
		\includegraphics [width=0.6\linewidth]{ 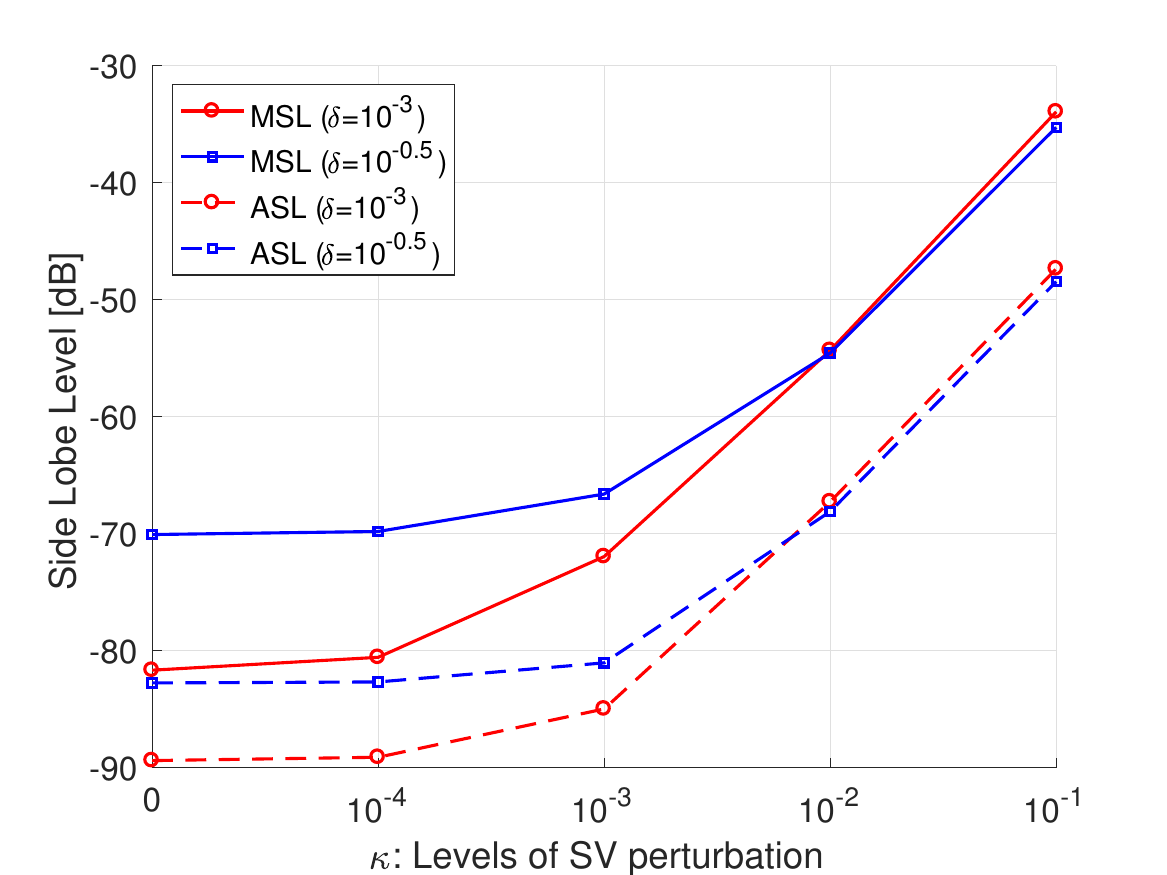}
		\caption{Side lobe levels with different $\kappa$.}
		\label{fig:simul_synslobe}
	\end{figure}
	
	\section{Conclusion}
	
	Robust beamformer design always tries to seek a balance among robustness and beamforming performance, where the central issue {\black is} how to strike an appropriate balance automatically with limited array DoF. In this paper, we propose a min-max penalization criterion for intelligently allocating the limited DoF. The proposed P-ICMV formulation makes use of two types of inequality constraints to introduce robustness against various uncertainties and a min-max penalization criterion for handling DoF limitation. Several user-specified parameters are also used in the formulation, which provide a flexible mechanism to achieve different levels of robustness. In addition, a low-complexity iterative algorithm is designed, which can compute the P-ICMV beamformer efficiently even for a large-size array. The P-ICMV beamformer can provide an effective robust solution for challenging applications where DoF is limited and model parameters are inaccurate. The ability to achieve different robustness levels is demonstrated in the simulations.

	\ifCLASSOPTIONcaptionsoff
	\newpage
	\fi

	
	\bibliographystyle{IEEEtran}
	\bibliography{refs}

\vfill\pagebreak
\newpage
\clearpage
\begin{center} 
{\large\textbf{Supplementary Material}}
\end{center}
\normalsize

\appendices
\section{Proof of Lemma 1}\label{app:QCQP_y}
The proof of Lemma 1 contains two parts. We first prove that, for fixed $y$ with $c-\delta y\geq 0$, problem (17) has closed-form solution given as $x^*=d-ce^{j\psi}+e^{j\psi}\max\{  c-r, \delta y \}$, where $\psi=\angle (2ad+b)$ and $r=|\frac{2ad+b}{2a}|$. Then we give the closed-form solution for the optimal $y^*$. 

For any fixed $y\leq c/\delta$, problem (17) becomes 
\begin{equation}\label{eq:simp_QCQP}
\min _{x\in\mathbb{C}} \ a|x|^2+{\rm{Re}}\{ b^Hx  \}\quad {\rm{s.t.}}\ |x-d|^2\leq \bar{c}^2,
\end{equation}
where $\bar{c}=c-\delta y\geq 0.$ Since problem \eqref{eq:simp_QCQP} is a strongly convex problem ($a>0$), its unique optimal solution is its KKT point. The KKT conditions of problem \eqref{eq:simp_QCQP} are 
	\begin{subequations}\label{eq:simp_QCQP_kkt}
		\begin{align}
		2ax+b+2\lambda(x-d)&=0,\label{kktQCQP:first}\\
		\lambda(|x-d|^2-\bar{c}^2)&=0,\label{kktQCQP:comp}\\
		|x-d|^2-\bar{c}^2&\leq 0,\label{kktQCQP:primal}\\
		\lambda&\geq 0\label{kktQCQP:dual},
		\end{align}
	\end{subequations}
where $\lambda$ is the Lagrangian multiplier associating with constraint $|x-d|^2\leq \bar{c}^2$. If $2ad=-b$, then $x=d$ and $\lambda=0$ satisfy all conditions in \eqref{eq:simp_QCQP_kkt} and hence is the optimal solution. Otherwise, condition \eqref{kktQCQP:first} (with $x\neq d$) is equivalent to
\begin{equation}\label{eq:lem1_kkt1}
\lambda=-\frac{2ax+b}{2(x-d)}
\end{equation}
In the next, we replace conditions \eqref{kktQCQP:first} by \eqref{eq:lem1_kkt1} and study conditions \eqref{kktQCQP:comp} and \eqref{kktQCQP:primal} with  $\lambda>0$ and $\lambda=0$ (condition \eqref{kktQCQP:dual}). If $\lambda>0$, then conditions \eqref{kktQCQP:comp} and \eqref{kktQCQP:primal} hold only when $|x-d|^2=\bar{c}^2$. This implies $x=d+\bar{c}e^{j\phi}$, where $\phi\in[-\pi,\pi]$ is a rotation angle. Substitute $x$ into \eqref{eq:lem1_kkt1}, we have 
$$
\lambda=-\frac{2a(d+\bar{c}e^{j\phi})+b}{2\bar{c}e^{j\phi}}=-\frac{2ade^{-j\phi}+2a\bar{c}+be^{-j\phi}}{2\bar{c}}.
$$
Notice $\lambda$ is a positive real number, which implies $-(2ad+b)e^{-j\phi}$ is also a positive real number. Based on the fact that rotating any complex number to be a positive real number only holds when the rotation angle takes its negative phase. Hence we have the only choice for $e^{j\phi}=-\frac{2ad+b}{|2ad+b|}$. If $\lambda=0$, then $x=-\frac{b}{2a}$ and condition \eqref{kktQCQP:primal} implies $|2ad+b|\leq 2a\bar{c}$ must hold. Combine the cases for $\lambda>0$ and $\lambda=0$, conditions \eqref{kktQCQP:first}-\eqref{kktQCQP:dual} are simplified as 
\begin{equation}\label{eq:lem1_x}
	x^*=\left\lbrace
	\begin{aligned}
	&-\frac{b}{2a},\quad \textrm{if }|2ad+b|\leq 2a\bar{c},&\\
	&d-\frac{2ad+b}{|2ad+b|}\bar{c},\quad \textrm{otherwise}.&
	\end{aligned}
	\right.
\end{equation}
Notice the special case that $2ad=-b,x=d$ is also included in \eqref{eq:lem1_x}. Compactly, define $\psi=\angle (2ad+b)$ and $r=|\frac{2ad+b}{2a}|$, we have $-\frac{b}{2a}=d-re^{j\psi}$, and hence \eqref{eq:lem1_x} can be expressed as 
\begin{equation}\label{eq:lem1_x_comp}
\begin{aligned}
x^*&=\left\lbrace
\begin{aligned}
&d-re^{j\psi},\quad \textrm{if }r\leq \bar{c},&\\
&d-\bar{c}e^{j\psi},\quad \textrm{otherwise}.&
\end{aligned}
\right.\\
&=d+e^{j\psi}\max\{  -r, -\bar{c} \}\ (\bar{c}=c-\delta y)\\
&=d-ce^{j\psi}+e^{j\psi}\max\{  c-r, \delta y \}.
\end{aligned}
\end{equation}
This completes the first part of the proof.

Now we prove the optimal $y^*$ for problem (17) also has a closed-form based on \eqref{eq:lem1_x_comp}. Substitute  \eqref{eq:lem1_x_comp} into problem (17) and ignore some constant terms, problem (17) becomes 
\begin{equation}\label{eq:lem1_single_y}
\begin{aligned}
\min_y \  f(y)\quad
\textrm{s.t.}\ y\leq c/\delta,
\end{aligned}
\end{equation}
where 
$$f(y)=a\left(  \max\{  c-r, \delta y  \} \right)^2 + 2a(r-c)\max\{ c-r, \delta y  \} + \alpha y^2 + \beta y.$$
Notice $\max\{  c-r, \delta y  \} $ is a piecewise linear convex function with respect to $y$ and non-decreasing. This implies $f(y)$ is continuous and strongly convex. Then the optimal $y^*$ for \eqref{eq:lem1_single_y} must satisfy $\partial f(y^*)=0$ or $y^*=c/\delta$, where $\partial f(y)$ is the gradient of $f(y)$ given as 
\begin{equation}
\partial f(y)=2a\delta^2\max\{  0, y-\frac{c-r}{\delta}  \}+2\alpha y + \beta .
\end{equation}
If $\partial f(\frac{c}{\delta})=\frac{2ar+2\alpha c+\delta \beta}{\delta}\leq 0$, $f(y)$ is strictly decreasing for $y\leq c/\delta$, hence $y^*=c/\delta$ lies on the boundary. Otherwise, we must have $y^*\leq\delta/c$ such that $\partial f(y^*)=0$. If $2\alpha(c-r)+\beta \delta\geq 0$, we have $\partial f(\frac{c-r}{\delta})\geq 0$, which implies  $f(y^*)=2\alpha y^*+\beta=0$ and hence $y^*=-\frac{\beta}{2\alpha}$; otherwise we have $f(y^*)=2a\delta^2y^*-2a\delta(c-r)+2\alpha y^*+\beta=0$ and $y^*=\frac{2a\delta(c-r)-\beta}{2a\delta^2+2\alpha}$. In short, define $\lambda_1=2\alpha c - 2 \alpha r+\delta\beta<\lambda_2=2\alpha c+2ar+\delta \beta$, we have 
\begin{equation}
y^*=\left\lbrace
\begin{aligned}
	&y_1=-\frac{\beta}{2\alpha},\quad \textrm{if }\lambda_1 \geq 0,&\\
	&y_2=\frac{c}{\delta},\quad \textrm{if } \lambda_2\leq 0,&\\
	&y_3=\frac{2a\delta(c-r)-\beta}{2a\delta^2+2\alpha},\quad \textrm{if }\lambda_1 < 0, \lambda_2>0.&
\end{aligned}
\right.
\end{equation}
Notice that $\lambda_1 \geq 0$ implies $y_1=\min_{i=1,2,3}\{ y_i\}$, $\lambda_2 \leq 0$ implies $y_2=\min_{i=1,2,3}\{ y_i\}$, and $\lambda_1 < 0,\lambda_2>0$ imply $y_3=\min_{i=1,2,3}\{ y_i\}$. Compactly, we have
\begin{equation}
\begin{aligned}
y^*&=\min\{ -\frac{\beta}{2\alpha},\frac{2a\delta(c-r)-\beta}{2a\delta^2+2\alpha},c/\delta  \},\\
\end{aligned}
\end{equation}
This completes the proof.
\section{Proof of Lemma 2}\label{app:2}
We first transform problem (19) into an equivalent form, and then prove Lemma 2 by studying KKT conditions of the transformed problem. Let the eigenvalue decomposition with respect to $\mathbf{A}\succ 0$ as $\mathbf{A}=\mathbf{U}\bm{\Lambda}\mathbf{U}^H$, where $\mathbf{U}$ is a unitary matrix stacked by eigenvectors and $\bm{\Lambda}$ is a diagonal matrix with diagonal elements corresponding to eigenvalues $\lambda_i>0,i=1,2,\ldots,M$. Since for any $\mathbf{w}\in\mathbb{C}^M$, $\| \mathbf{U}^H\mathbf{w}\|=\|\mathbf{w}\|$ always holds, problem (19) can be equivalently transformed as 
\begin{equation}\label{eq:lem_sub_wy}
\begin{aligned}
\min_{\mathbf{\bar{w}}}\quad&\mathbf{\bar{w}}^H\bm{\Lambda}\mathbf{\bar{w}}+\textrm{Re}\{ \mathbf{\bar{b}}^H\mathbf{\bar{w}}\} + \alpha y^2 + \beta y\\
\textrm{s.t.} \quad& \| \mathbf{\bar{w}} \| \leq y,
\end{aligned}
\end{equation}
where $\mathbf{\bar{b}}=\mathbf{U}\mathbf{b}$. If $(\mathbf{\bar{w}}^*,y^*)$ is optimal for problem \eqref{eq:lem_sub_wy}, then $(\mathbf{U}^H\mathbf{\bar{w}}^*,y^*)$ is optimal for problem (19). The Lagrangian function for problem \eqref{eq:lem_sub_wy} is 
\begin{equation*}
L(\mathbf{\bar{w}},y,\lambda)=\mathbf{\bar{w}}^H\bm{\Lambda}\mathbf{\bar{w}}+\textrm{Re}\{ \mathbf{\bar{b}}^H\mathbf{\bar{w}}\} + \alpha y^2 + \beta y+\lambda(\| \mathbf{\bar{w}} \| - y),
\end{equation*}
where $\lambda\geq 0$ is the Lagrangian multiplier associated with the constraint $\| \mathbf{\bar{w}}\| \leq y$. Notice $L(\mathbf{\bar{w}},y,\lambda)$ is not differentiable at $\mathbf{\bar{w}}=\mathbf{0}$. To study the case for $\mathbf{\bar{w}}^*\neq \mathbf{0}$, we consider $\beta<\| \mathbf{\bar{b}}\|$, which is a sufficient and necessary condition for $\mathbf{\bar{w}}^*\neq \mathbf{0}$ (see Proposition \ref{prop:suff_lem}). Then, the KKT conditions of problem \eqref{eq:lem_sub_wy} for $\beta<\| \mathbf{\bar{b}}\|$ are
\begin{subequations}
	\begin{align}
		2\bm{\Lambda}\mathbf{\bar{w}}+\mathbf{\bar{b}}+\lambda\frac{\mathbf{\bar{w}}}{\| \mathbf{\bar{w}}\|}&=\mathbf{0},\label{eq:lem_bisec_w}\\
		2\alpha y + \beta - \lambda &= 0,\label{eq:lem_bisec_y}\\
		\| \mathbf{\bar{w}} \|&\leq y,\label{eq:lem_bisec_primal} \\
		\lambda &\geq 0,\label{eq:lem_bisec_dual}\\
		\lambda(\| \mathbf{\bar{w}} \|- y )&=0,\label{eq:lem_bisec_comp}
	\end{align}
\end{subequations}
If $\lambda=0$, we have $y=-\frac{\beta}{2\alpha}$ (by \eqref{eq:lem_bisec_y}) and $\mathbf{\bar{w}}=-\frac{1}{2}\bm{\Lambda}^{-1}\mathbf{\bar{b}}$ (by \eqref{eq:lem_bisec_w}). In this case, \eqref{eq:lem_bisec_primal} holds if $\|\bm{\Lambda}^{-1}\mathbf{\bar{b}}\| \leq-\frac{\beta}{\alpha}$, otherwise we must  have $\lambda>0$. For $\lambda>0$, we have $\| \mathbf{\bar{w}} \| = y$ (by \eqref{eq:lem_bisec_comp}) and $\lambda = 2\alpha y + \beta$ (by \eqref{eq:lem_bisec_y}). Combining these two equalities with \eqref{eq:lem_bisec_w}, KKT conditions \eqref{eq:lem_bisec_w}-\eqref{eq:lem_bisec_comp} can be reduced as 
\begin{equation}\label{eq:lem2_wi}
\mathbf{\bar{w}}_i=-\frac{\mathbf{\bar{b}}_iy}{2\lambda_iy+2\alpha y + \beta},\forall i, \  -\frac{\beta}{2\alpha} \leq y =\| \mathbf{\bar{w}} \| .
\end{equation}
Combining the two equalities in \eqref{eq:lem2_wi}, we have 
\begin{equation}\label{eq:lem_bisec_rooty}
\begin{aligned}
f(y)&\triangleq\sum_{i=1}^M\frac{|\mathbf{\bar{b}}_i|^2}{(2\lambda_iy+2\alpha y + \beta)^2}=1,\\
y&\geq \max\{0,  -\frac{\beta}{2\alpha}\}=y_\textrm{min}.
\end{aligned}
\end{equation}
Since $\lambda_i>0,\forall i$, and $\lambda=2\alpha y + \beta>0$, we have $2\lambda_iy+2\alpha y + \beta> 0$ and hence $f(y)$ is a monotonically decreasing function. Notice $\lambda>0$ holds only when $\|\bm{\Lambda}^{-1}\mathbf{\bar{b}}\| >-\frac{\beta}{\alpha}$ and $\beta<\| \mathbf{\bar{b}}\|$, \textcolor{black}{which implies $f(0)>1$ if $y_\textrm{min}=0$ and $f(-\frac{\beta}{2\alpha})>1$ if $y_\textrm{min}=-\frac{\beta}{2\alpha}$.} Hence we have $f(y_\textrm{min})>1$, and  there is a unique root $y^*>y_\textrm{min}$ satisfying $f(y^*)=1$. On the other hand, consider $f(y^*)=1$, we have 
\begin{equation}
\begin{aligned}
y^*&=y^*\sqrt{f(y^*)}=\sqrt{\sum_{i=1}^M\frac{|\mathbf{\bar{b}}_i|^2}{(2\lambda_i+2\alpha  + \beta/y^*)^2}}\\
&\leq \sqrt{\sum_{i=1}^M\frac{|\mathbf{\bar{b}}_i|^2}{(2\lambda_i)^2}}=\| \bm{\Lambda}^{-1}\mathbf{\bar{b}}\|/2\triangleq y_{\textrm{max}}.
\end{aligned}
\end{equation}
The unique $y^*$ must lie in $[y_{\textrm{min}},y_{\textrm{max}}]$, and can be obtained by bisection search via \eqref{eq:lem_bisec_rooty}. 

Combining the cases for $\beta\geq \|\mathbf{\bar{b}} \|$ and $\beta< \|\mathbf{\bar{b}} \|$, we obtain the solution for problem (19) as follows,
\begin{align*}
\left\lbrace
\begin{aligned}
&y^*=0,\mathbf{w}^*=\mathbf{0},\  \textrm{if }\beta\geq \| \mathbf{b}\|,&\\
&y^*=-\frac{\beta}{2\alpha},\mathbf{w}^*=\mathbf{w}(y^*),\  \textrm{if }\beta< \|\mathbf{b}\|,\| \mathbf{A}^{-1} \mathbf{b}\|\leq -\frac{\beta}{\alpha},&\\
&f(y^*)=1, \mathbf{w}^*=\mathbf{w}(y^*),\ \textrm{otherwise},
\end{aligned}
\right.
\end{align*}
where $\mathbf{w}(y)=-\mathbf{U}^H\left[  2\bm{\Lambda}+(2\alpha  + \frac{\beta}{y})\mathbf{I} \right]^{-1}\mathbf{U}\mathbf{b}$ (by \eqref{eq:lem2_wi}). This completes the proof.
\begin{prop}\label{prop:suff_lem}
	The optimal $(\mathbf{\bar{w}}^*,y^*)$ for \eqref{eq:lem_sub_wy} is $(\mathbf{0},0)$ if and only if  $\beta\geq \| \mathbf{\bar{b}} \|$.
\end{prop}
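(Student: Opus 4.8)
The plan is to exploit the strict convexity of problem \eqref{eq:lem_sub_wy}: since $\bm{\Lambda}\succ\mathbf{0}$ and $\alpha>0$, the objective is strictly convex in $(\mathbf{\bar{w}},y)$, so the minimizer is unique and ``the optimal'' point is well defined. Because the objective vanishes at $(\mathbf{0},0)$, it suffices to characterize exactly when $(\mathbf{0},0)$ attains the global minimum value $0$. I would deliberately avoid the first-order (KKT) conditions here, because the Lagrangian is nonsmooth at $\mathbf{\bar{w}}=\mathbf{0}$ (as already noted in the proof of Lemma \ref{lem:bisec_wy}); instead I will argue directly, using a global lower bound for sufficiency and an explicit feasible descent direction for necessity.

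For sufficiency, assume $\beta\geq\|\mathbf{\bar{b}}\|$ and take any feasible $(\mathbf{\bar{w}},y)$, so $\|\mathbf{\bar{w}}\|\leq y$ (which forces $y\geq0$). First I would apply the Cauchy$-$Schwarz inequality to the linear term, $\textrm{Re}\{\mathbf{\bar{b}}^H\mathbf{\bar{w}}\}\geq-\|\mathbf{\bar{b}}\|\,\|\mathbf{\bar{w}}\|$. The key chain is then $\beta y\geq\|\mathbf{\bar{b}}\|\,y\geq\|\mathbf{\bar{b}}\|\,\|\mathbf{\bar{w}}\|$, where the first step uses $\beta\geq\|\mathbf{\bar{b}}\|$ with $y\geq0$, and the second uses $y\geq\|\mathbf{\bar{w}}\|$ with $\|\mathbf{\bar{b}}\|\geq0$. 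Combining these, the negative cross term is dominated by $\beta y$, and the objective is bounded below by $\mathbf{\bar{w}}^H\bm{\Lambda}\mathbf{\bar{w}}+\alpha y^2\geq0$, which equals the value at $(\mathbf{0},0)$. Hence $(\mathbf{0},0)$ is optimal, and by uniqueness it is \emph{the} optimum.

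For necessity I would prove the contrapositive: if $\beta<\|\mathbf{\bar{b}}\|$, I exhibit a feasible point with strictly negative objective. The natural choice is the steepest feasible direction out of the origin, $\mathbf{\bar{w}}=-\epsilon\,\mathbf{\bar{b}}/\|\mathbf{\bar{b}}\|$ and $y=\epsilon$ for small $\epsilon>0$, which is feasible since $\|\mathbf{\bar{w}}\|=\epsilon=y$. Substituting gives objective value $\epsilon(\beta-\|\mathbf{\bar{b}}\|)+O(\epsilon^2)$, whose leading term is strictly negative, so a sufficiently small $\epsilon$ makes the whole expression negative and $(\mathbf{0},0)$ cannot be optimal. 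The degenerate case $\mathbf{\bar{b}}=\mathbf{0}$, where the hypothesis reads $\beta<0$, is handled by the same construction with $\mathbf{\bar{w}}=\mathbf{0}$, $y=\epsilon$, giving $\alpha\epsilon^2+\beta\epsilon<0$.

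The argument is essentially routine; the only delicate point is ordering the inequalities correctly in the sufficiency bound, since the cancellation of $-\|\mathbf{\bar{b}}\|\,\|\mathbf{\bar{w}}\|$ against $\beta y$ requires all three facts$-$namely $\beta\geq\|\mathbf{\bar{b}}\|$, $\|\mathbf{\bar{b}}\|\geq0$, and the feasibility $\|\mathbf{\bar{w}}\|\leq y$$-$to be chained in the right sequence. Working with the explicit perturbation rather than subgradient optimality conveniently sidesteps the nonsmoothness of $\|\mathbf{\bar{w}}\|$ at the origin.
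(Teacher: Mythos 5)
Your proof is correct and takes essentially the same route as the paper: a Cauchy--Schwarz lower bound on the cross term for sufficiency, and the same explicit feasible descent point $(-\epsilon\mathbf{\bar{b}}/\|\mathbf{\bar{b}}\|,\epsilon)$ for necessity. If anything, your version is slightly cleaner on details---the direct inequality chain $\beta y\geq\|\mathbf{\bar{b}}\|y\geq\|\mathbf{\bar{b}}\|\|\mathbf{\bar{w}}\|$ avoids the paper's intermediate reduction to the boundary $y=\|\mathbf{\bar{w}}\|$, and you explicitly handle the degenerate case $\mathbf{\bar{b}}=\mathbf{0}$, which the paper's construction (dividing by $\|\mathbf{\bar{b}}\|$) silently skips.
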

\begin{proof}
	\textcolor{black}{
	Let the objective function of problem \eqref{eq:lem_sub_wy} be $f(\mathbf{\bar{w}},y)$. We first prove `if' case by showing $f(\mathbf{\bar{w}},y)\geq f(\mathbf{0},0)=0$ for any feasible $(\mathbf{\bar{w}},y)$ if $\beta\geq \| \mathbf{\bar{b}} \|$. Since $\alpha>0$ and $\beta\geq0$, we have  $y^*_{\mathbf{\bar{w}}}=\arg\min_{\| \mathbf{\bar{w}} \| \leq y}f(\mathbf{\bar{w}},y)=\|\mathbf{\bar{w}}\|$. Define $\mathbf{\bar{w}}^\prime=\frac{\mathbf{\bar{w}}}{\| \mathbf{\bar{w}}\|}$ and $\epsilon=\|\mathbf{\bar{w}}\|\geq0$, we have 
	\begin{align*}
	f(\mathbf{\bar{w}},y)&\geq f(\epsilon\mathbf{\bar{w}}^\prime,\epsilon)=(\mathbf{\bar{b}}^H\mathbf{\bar{w}}^\prime/\| \mathbf{\bar{b}}\|+\beta)\epsilon+ |o(\epsilon^2)|\\
	&\geq(-\| \mathbf{\bar{b}}\|+\beta)\epsilon+ |o(\epsilon^2)|\\
	&\geq 0 = f(\mathbf{0},0),
	\end{align*}
	where the second inequality is due to Cauchy-Schwartz inequality $\mathbf{\bar{b}}^H\mathbf{\bar{w}}^\prime\geq-\| \mathbf{\bar{b}}\|^2$, and the third inequality is due to $\beta\geq\| \mathbf{\bar{b}} \|$. 
	Next, we prove `only if' case by contradiction. Suppose $(\mathbf{\bar{w}}^*,y^*)=(\mathbf{0},0)$ is optimal for problem \eqref{eq:lem_sub_wy} and $\beta<\| \mathbf{\bar{b}}\|$. Then, define a feasible solution $(-\epsilon\mathbf{\bar{b}}/\| \mathbf{\bar{b}}\|,\epsilon)$ with $\epsilon>0$, we have $f(-\epsilon\mathbf{\bar{b}}/\| \mathbf{\bar{b}}\|,\epsilon)=(-\| \mathbf{\bar{b}}\|+\beta)\epsilon+ |o(\epsilon^2)|$. Since $\beta<\| \mathbf{\bar{b}}\|$ implies  $(-\| \mathbf{\bar{b}}\|+\beta)\epsilon<0$. For sufficiently small $\epsilon>0$, we have  $f(-\epsilon\mathbf{\bar{b}}/\| \mathbf{\bar{b}}\|,\epsilon)<0$ which contradicts the optimality of $\mathbf{\bar{w}}^*$. This completes the proof.}

\end{proof}

\section{Proof of Proposition 5}\label{app:t}
\begin{proof}
	By Lemma 3, $\partial f(t)$ can be rewritten as 
	\begin{equation}\label{eq:partial_ft}
	\partial f(t)= \sum_k\sum_{\phi\in\Phi_k}\min\{ 2a_{\phi,1}t+b_{\phi,1},2a_{\phi,2}t+b_{\phi,2},0  \}.
	\end{equation}
	By (27) and \eqref{eq:partial_ft}, we know $\partial f(t)$ is a strictly increasing and continuous function for $t\leq \max_\phi\{ \bar{t}_{\phi,2}   \}\triangleq t_{\textrm{max}}$ and $\partial f(t)=0$ for $t\geq t_{\textrm{max}}$. By the first order optimality condition (gradient at optimal $t^*$ is $-\mu$), problem (26) must have a unique optimal solution $t^*\leq t_{\textrm{max}}$ such that $\partial f(t^*)=-\mu$. Notice $\partial f(t)$ is a piecewise linear function, the optimal $t^*$ lies on some region with $\partial f(t)=2at + b$ and $2at^*+b=-\mu$, where $a>0, b\in\mathbb{R}$ are coefficients to be determined. By \eqref{eq:partial_ft}, if $\partial f(t_{\textrm{min}})>-\mu,t_{\textrm{min}}\triangleq\min_\phi\{  \bar{t}_{\phi,1}\}$, then the region must be $t\leq t_{\textrm{min}}$ and hence $a=\sum_k\sum_{\phi\in\Phi_k}a_{\phi,1}$ and $b=\sum_k\sum_{\phi\in\Phi_k}b_{\phi,1}$. Otherwise, there must exist a region within $t_{\textrm{min}}\leq t\leq t_{\textrm{max}}$ such that $\partial f(t)=2at + b$ and $2at^*+b=-\mu$. Sort all $\{ \bar{t}_{\phi,1}\}$ and $\{ \bar{t}_{\phi,2}\}$ as an increasing sequence $\{ \tilde{t}_\ell\}$ and by the strictly increasing property of $\partial f(t)$, the region for $\partial f(t^*)=-\mu$ must be $\tilde{t}_\ell\leq t^* \leq \tilde{t}_{\ell+1}$, where $\partial f(\tilde{t}_\ell) \leq -\mu$ and $\partial f(\tilde{t}_{\ell+1}) \geq -\mu$. In such a case, we have 
	$$a=\sum_{\phi\in\Omega_1} a_{\phi,1}+\sum_{\phi\in\Omega_2}a_{\phi,2},\ b=\sum_{\phi\in\Omega_1} b_{\phi,1}+\sum_{\phi\in\Omega_2}b_{\phi,2},$$
	and 
	$\Omega_1=\{  \phi | \bar{t}_{\phi,1}\geq \tilde{t}_{\ell+1} \},\ \Omega_2 = \{   \phi |  \bar{t}_{\phi,1}\leq \tilde{t}_{\ell}, \bar{t}_{\phi,2}\geq \tilde{t}_{\ell+1} \}.
	$
	Combining the two cases and setting $t^*=-\frac{b+\mu}{2a}$ can complete the proof.
\end{proof}
\end{document}